\documentclass[lettersize,onecolumn]{IEEEtran}

\usepackage{amsmath,amssymb,amsfonts}
\usepackage{amsthm}
\usepackage{bm}
\usepackage{booktabs}

\newtheorem{definition}{Definition}
\newtheorem{theorem}{Theorem}
\newtheorem{proposition}{Proposition}
\newtheorem{lemma}{Lemma}
\newtheorem{corollary}{Corollary}
\newtheorem{remark}{Remark}

\newcommand{\E}{\mathbb{E}}

\newcommand{\C}{\mathbb{C}}
\newcommand{\snr}{\rho}
\newcommand{\setX}{\mathcal{X}}

\newcommand{\setP}{\mathcal{P}}

\newcommand{\setC}{\mathcal{C}}

\DeclareMathOperator{\tr}{tr}

\title{Diversity vs.\ Degrees of Freedom\\
for Gaussian Fading Channels}

\author{Mahesh Godavarti\\Independent Researcher}

\begin{document}
\maketitle

\begin{abstract}
The classical definitions extract degrees of freedom (DOF) via $C(\snr)/\log\snr$ and diversity (DIV) via $-\log P_e(\snr)/\log\snr$, using $\log\snr$ as the common gauge for both. These ratios hide a two-step process: first, identify the gauge on which capacity or reliability actually grows; second, normalize the coefficient on that gauge by the appropriate atom. For coherent multiple-input multiple-output (MIMO) both gauges happen to be $\log\snr$ and both atom coefficients happen to be one. This paper shows that the two-step process is necessary outside this calibration case and makes it explicit using a Bhattacharyya-frontier construction. A capacity--packing sandwich theorem shows that fixed-resolution output-law packing and covering recover the capacity gauge, and a binary-endpoint theorem shows that the two-message Bhattacharyya frontier identifies the zero-rate diversity gauge. Endpoint DOF and endpoint DIV are obtained by dividing the raw coefficient on the identified gauge by the corresponding atom coefficient. For fixed deterministic channel matrix~$H$, the capacity gauge is $\log\snr$ with endpoint DOF $T\,\mathrm{rank}(H)$, while the zero-rate diversity gauge is $\snr$ with endpoint DIV $T\sigma_1^2(H)$, making fixed-$H$ a cross-gauge channel. For noncoherent scalar fast fading with $N$~receive antennas, the capacity gauge is $\log\log\snr$ with DOF~$1$, while the zero-rate diversity gauge is $\log\snr$ with endpoint DIV~$N$; the exact load-$r$ frontier gauge is $(\log\snr)^{1-r}$. The framework recovers the same-gauge cases, coherent Rayleigh MIMO and noncoherent block fading, with zero-rate DIV~$MN$. Audit tables separate exact results from lower bounds and open problems.
\end{abstract}

\begin{IEEEkeywords}
Degrees of freedom, diversity, Gaussian fading channels, atomic
normalization, Bhattacharyya frontier, capacity gauge,
diversity gauge, high-SNR asymptotics, MIMO, noncoherent fading.
\end{IEEEkeywords}

\section{Introduction}\label{sec:intro}

\textbf{Capacity prelog and diversity slope: intuition vs.\ measurement.}
The capacity prelog is often interpreted as the number of independent
signal dimensions visible at the output, whereas diversity is often
interpreted as the number of independent output reliability looks
available at the receiver. The classical
high-SNR definitions measure both with the fixed gauge $\log\snr$:
capacity through $C(\snr)/\log\snr$, and diversity through
$-\log P_e(\snr)/\log\snr$%
~\cite{ZhengTse2003,TarokhSeshadriCalderbank1998}.
When this gauge matches the channel, as in coherent MIMO, these
summaries are informative. When the channel lives on a different
scale, the summaries become uninformative: the intuition has not
failed, the gauge has. But even when the gauge is correct, the raw
coefficient on that gauge is not automatically a DOF or DIV count;
it becomes one only after division by the coefficient of a single
atom. The classical coherent complex-MIMO convention hides
both issues: $\log\snr$ happens to be the correct gauge, and the
relevant atom coefficients happen to be one.

\textbf{DOF and DIV as atomic counts.}
This paper uses the Godavarti--Hero intuition that DOF counts
independent channels available for communication, while DIV counts
independent reliability looks available at the
output~\cite{GodavartiHero2002}. The present formulation makes this
intrinsic by using atomic normalization. A \emph{communication atom}
is one independently controllable input degree of freedom or channel
parameter whose changes induce distinguishable output laws and can
carry distinct information. A \emph{reliability atom} is one
independent unit-normalized output reliability look. In fixed
equal-covariance AWGN, this means one unit-gain noisy output
observation carrying one unit of received pairwise signal energy.
More generally, it means one independent output-law factor
contributing the reference one-look Bhattacharyya coefficient on the
channel's zero-rate diversity gauge. A raw coefficient on the correct
gauge becomes a DOF or DIV by dividing by the corresponding atom
coefficient. In the coherent scalar and coherent MIMO calibration
cases, these atoms coincide with the usual single-input single-output
(SISO) reference atoms used in~\cite{GodavartiHero2002}. In
noncoherent block fading, the communication atom is Grassmannian
rather than a one-symbol SISO input.

\textbf{Gauge failures.}
The prelog $C(\snr)/\log\snr$ presupposes that $\log\snr$ is the
correct denominator.
Lapidoth and Moser~\cite{LapidothMoser2003} showed that noncoherent
fast fading has $C(\snr) \sim \log\log\snr$, so the classical prelog
is zero.
Lapidoth~\cite{Lapidoth2005} showed that stationary Gaussian fading
with certain non-regular spectra (e.g., the cusp family considered in
this paper) has capacity growing as $(\log\snr)^\beta$,
$0 < \beta < 1$, so the classical prelog is again zero. These
channels do not lack useful signal structure; they have zero
classical prelog because $\log\snr$ is the wrong gauge.
The same mismatch appears on the diversity side. For fixed-$H$, the
classical diversity slope is infinite because the error probability
decays exponentially in~$\snr$. The issue is again the gauge, not the
channel.

\textbf{Coefficient-normalization failures.}
Even when the gauge is correct, the raw coefficient may not equal the
DOF or DIV count. For real one-symbol AWGN, the capacity gauge is
$\log\snr$ but the raw prelog is $1/2$; the channel has one
independent communication direction, so DOF~$= (1/2)/(1/2) = 1$
after dividing by the real AWGN communication-atom coefficient $1/2$.
For fixed-$H$ AWGN, the raw diversity
coefficient is $T\sigma_1^2(H)/\ln 2$; dividing by the fixed-AWGN
reliability-atom coefficient $1/\ln 2$ gives endpoint DIV
$T\sigma_1^2(H)$. For fixed~$H$, this is an optimized output-look
count, not an unweighted count of nonzero singular directions. For
unnormalized~$H$, this endpoint DIV is gain-dependent.
In coherent complex MIMO, these normalization constants are~$1$, so
the raw coefficient and the DOF or DIV agree numerically and the
normalization step is invisible.

\textbf{This paper's approach.}
The purpose of this paper is to clarify the definitions of DOF and
DIV by separating two operations that are often merged in the
classical coherent-MIMO convention: gauge selection and atomic
normalization. Rather than assume the $\log\snr$ gauge and read its
coefficient, the framework performs two-sided gauge selection:
fixed-separation output-law packing identifies the capacity gauge on
the rate side, while the binary Bhattacharyya endpoint identifies the
zero-rate diversity gauge on the reliability side. The resulting raw
coefficients are then converted into endpoint DOF and endpoint DIV by
atomic normalization. The common tool is the Bhattacharyya frontier, defined formally in
Section~\ref{sec:tool}. For each SNR~$\snr$ and codebook
size~$K$, the frontier $\Delta_B^*(K;\snr)$ is the maximum, over all
admissible $K$-point one-block codebooks, of the minimum pairwise
Bhattacharyya distance between the induced output laws.
Fixed-separation packing gives the rate-side output-law packing
scale, which Theorem~\ref{lem:sandwich} converts to the capacity gauge
when covering and overlap conditions are met. The $K=2$ binary
endpoint gives the zero-rate diversity endpoint, and load-$r$
frontiers describe positive-load frontier behavior. The formal vocabulary in Section~\ref{subsec:vocab} defines
``diversity gauge,'' ``zero-rate diversity gauge,''
and ``endpoint DIV.''

\section{Vocabulary and Notation}\label{sec:vocab}

\subsection{Basic Notation}

All logarithms are base~2 unless otherwise noted; $\ln$
denotes the natural logarithm. We write
$\mathcal{CN}(\mu,\Sigma)$ for the circularly symmetric complex
Gaussian distribution with mean~$\mu$ and covariance~$\Sigma$.
For Hermitian matrices, $A \succeq 0$ means that $A$ is positive
semidefinite. We write $f \sim g$ for $f/g \to 1$,
$f \asymp g$ for $0 < \liminf f/g \le \limsup f/g < \infty$, and
$f = o(g)$ for $f/g \to 0$. All asymptotics are as
$\snr\to\infty$.

\subsection{Channel and Code Conventions}\label{sec:channel_model}

The examples below are Gaussian fading models built from the common
principle ``signal multiplied by a fading operator plus Gaussian
noise.'' The precise orientation of the fading operator, the block
dimension, and the meaning of one model block vary by model and are
specified in the corresponding subsection. A representative coherent
MIMO form is
\begin{equation}\label{eq:general_channel}
Y = \sqrt{\snr}\,H\,X + Z, \qquad Z \sim \mathcal{CN}(0,I),
\end{equation}
where $\snr > 0$ is the signal-to-noise ratio,
$H \in \C^{N \times M}$ is the channel matrix ($M$~transmit
antennas, $N$~receive antennas), $X \in \C^{M \times T}$ is the
transmitted signal over one $T$-slot block, and
$Z \in \C^{N \times T}$ has i.i.d.\
$\mathcal{CN}(0,1)$ entries. For noncoherent block fading we use the
conventional orientation $Y=\sqrt{\snr}\,XH+Z$, with
$X\in\C^{T\times M}$ and $H\in\C^{M\times N}$.

\textbf{Encoder/decoder.}
An $(n, K, \snr)$ code consists of a codebook
$\setC = \{x_1,\dots,x_K\}$ of $K$~codewords, each subject to the
input constraint, and a decoder mapping $Y^n$ to an estimate
$\hat{W} \in \{1,\dots,K\}$. The average error probability is
$P_e = (1/K)\sum_{i=1}^K \Pr[\hat{W} \ne i \mid W = i]$. The
rate is $R_{\mathrm{block}} = (\log K)/n$ bits per model block.

\textbf{Coding length and block normalization.}
The coding length~$n$ is the number of independent model blocks used
by the code. The Bhattacharyya frontier
(Definition~\ref{def:div_frontier}) is defined for one model block,
i.e.\ for $n = 1$. For a repetition code that transmits the same one-block codeword
across all $n$~blocks, multiplicativity (Lemma~\ref{lem:mult}) gives
pairwise distances equal to $n\,d_B^{(1)}$, so the gauge class
in~$\snr$ is unchanged. The fully optimized $n$-block frontier is a
separate object and is not used in this paper unless explicitly
defined.

For scalar fast-fading models, one model block is one physical
symbol. For fixed-$H$, coherent MIMO, and noncoherent block-fading
models, one model block is one $T$-slot input matrix. Thus $T$ is
the number of physical symbol slots inside one model block, while
$n$ is the coding length measured in independent model blocks.

The block rate is
\[
  R_{\mathrm{block}} = \frac{\log K}{n}
\]
bits per model block. For a $T$-slot model, the corresponding
per-physical-symbol rate is
\[
  R_{\mathrm{sym}} = \frac{1}{T}\,R_{\mathrm{block}}
  = \frac{\log K}{nT}.
\]
All frontier formulas are block-level unless explicitly divided
by~$T$. Whenever a conventional benchmark is quoted per physical
symbol, the text states this explicitly.

\textbf{Capacity per model block.}
For a memoryless sequence of model blocks, the one-model-block
Shannon capacity is
\[
C_{\mathrm{block}}(\snr) = \sup_{P_X} I(X;Y),
\]
where $X$ and $Y$ denote one model-block input and output. This is
ordinary Shannon capacity measured per model block, not a
finite-blocklength one-use reliability statement. Coding over~$n$
independent model blocks gives rates measured in bits per model block;
for a $T$-slot model, the corresponding per-physical-symbol rate is
obtained by dividing by~$T$.

\textbf{Atom and block conventions.}
DOF and DIV are endpoint atomic counts. A $T$-slot model block
contains $T$~physical symbol slots, so block-level DOF/DIV retain
the appropriate $T$~factor. Per-physical-symbol reporting is obtained
by dividing the block-level quantity by~$T$. The communication and reliability atoms are specified model by model. Coherent AWGN/Rayleigh
examples use the corresponding SISO reference atoms, while
noncoherent block fading uses Grassmannian communication atoms.

\begin{center}
\small
\begin{tabular}{@{}ll@{}}
\toprule
Quantity & Meaning \\
\midrule
$T$ & physical symbol slots inside one model block \\
$n$ & number of independent model blocks in the code \\
$K$ & number of messages \\
$R_{\mathrm{block}}$ & $(\log K)/n$, bits per model block \\
$R_{\mathrm{sym}}$ & $(\log K)/(nT)$, bits per physical symbol \\
$\Delta_B^*(K;\snr)$ & single-block frontier \\
$n\,\Delta_B^*(K;\snr)$ & exponent scale for $n$ independent blocks \\
\bottomrule
\end{tabular}
\end{center}

\textbf{Input constraints.}
We consider peak power ($\|X\|^2 \le P$ or $|X_t| \le \sqrt{P}$
per symbol) and average power ($\E[\|X\|^2] \le nP$). Unless stated
otherwise, the frontier calculations use the stated peak-power
constraint with $P = 1$. Peak-versus-average gauge equivalence is
proved explicitly for the scalar fast-fading packing complexity in
Lemma~\ref{lem:constraint} and is immediate up to constants for the
equal-covariance fixed-$H$ AWGN calculation. For the remaining
models, we use the constraint stated in the relevant subsection; no
general peak-versus-average invariance theorem is claimed.

For fixed-$H$ AWGN, the constraint $\|X\|_F^2\le T$ is a total block
peak-power constraint. It is not a separate unit-energy budget for
every transmit coordinate, antenna, or singular mode. Therefore
transmit directions share energy, whereas repeated receive
observations of the same transmitted signal can create additional
unit-energy output reliability looks.

\subsection{Gauge and Atom Vocabulary}\label{subsec:vocab}

We use the following terms throughout.

\textbf{Standing convention.} The default diversity-side object in
this paper is the binary endpoint of the Bhattacharyya frontier.
Therefore, unless another diversity-side benchmark is explicitly
named, \emph{raw diversity coefficient} means the coefficient of
$\Delta_B^*(2;\snr)$ on the zero-rate diversity gauge, before
reliability-atom normalization. Coefficient names do not repeat
``Bhattacharyya frontier'' unless the sentence is specifically about
the frontier as an object or a proof status.

\begin{itemize}
\item \textbf{Gauge}: a gauge is the asymptotic scale on which a
quantity grows as $\snr\to\infty$. Examples in this paper include
$\log\snr$, $\log\log\snr$, $(\log\snr)^\beta$, and~$\snr$.
A gauge is unique up to asymptotic equivalence within Hardy's
logarithmico-exponential (LE) class
(Appendix~\ref{app:gauge_unique}).

\item \textbf{Capacity gauge} $g(\snr)$: a scale on which capacity
grows, meaning $C(\snr) \asymp g(\snr)$; when a sharper coefficient
is known we write $C(\snr) \sim c\,g(\snr)$.

\item \textbf{Rate load} $r$: once the capacity gauge is $g(\snr)$,
a rate $R = r\,g(\snr)$ has coefficient~$r$ on that gauge. If
$C(\snr) \sim c\,g(\snr)$, then the corresponding fraction of
capacity is $r/c$. Thus $r$ is a capacity fraction only in
normalizations where $c = 1$. The notation matches the usual
multiplexing-gain convention.

\item \textbf{Reliability gauge}: a reliability gauge is the
high-SNR scale on which a specified error-exponent quantity grows.
The quantity must be clear from context. Examples in this paper
include the binary endpoint $\Delta_B^*(2;\snr)$, the load-$r$
frontier $\Delta_B^*(K_r(\snr);\snr)$, and operational benchmarks
such as the DMT exponent when they are explicitly named.

\item \textbf{Diversity gauge}: a diversity gauge is a reliability
gauge used for a diversity-side error exponent. The phrase is not
free-standing; it refers to the gauge of a specified diversity-side
quantity. In this paper, unless an external benchmark such as DMT is
explicitly named, ``diversity gauge'' means the zero-rate diversity
gauge defined next.

\item \textbf{Zero-rate diversity gauge}: the zero-rate diversity
gauge is the gauge $\psi_0(\snr)$ of the binary endpoint:
\[
  \Delta_B^*(2;\snr) \asymp \psi_0(\snr).
\]
If the sharper asymptotic
$\Delta_B^*(2;\snr) \sim c_{\mathrm{div}}\,\psi_0(\snr)$ holds,
then $c_{\mathrm{div}}$ is the raw diversity coefficient.
Unless an external benchmark such as DMT is explicitly named,
``zero-rate diversity gauge'' means this binary-endpoint gauge.

\item \textbf{Same-gauge / cross-gauge}: a channel is
\emph{same-gauge} if its capacity gauge and zero-rate diversity
gauge are asymptotically equivalent. It is \emph{cross-gauge} if
they are not.

\item \textbf{Raw capacity coefficient}: the coefficient of
$C(\snr)$ on the capacity gauge before atomic normalization.

\item \textbf{Raw diversity coefficient}: the coefficient of the
binary endpoint $\Delta_B^*(2;\snr)$ on the zero-rate diversity gauge
before reliability-atom normalization. If another diversity-side
benchmark is explicitly named, then ``raw diversity coefficient''
refers locally to the coefficient of that named benchmark on its
stated diversity gauge.

\item \textbf{Communication atom}: one independently controllable
input degree of freedom or channel parameter whose changes induce
distinguishable output laws and can carry distinct information. In
coherent linear channels this is an input direction. In fast fading
it is an energy or scale-family parameter. In noncoherent block
fading it is Grassmannian. The raw coefficient contributed by one
such atom on the relevant capacity gauge is the communication-atom
coefficient, denoted $\alpha_{\mathrm{comm}}$.

\item \textbf{Reliability atom}: one independent unit-normalized
output reliability look. Formally, a reliability atom is an
independent output-law factor whose pairwise Bhattacharyya distance
contributes the reference one-look coefficient on the relevant
zero-rate diversity gauge. In fixed equal-covariance AWGN, this means
one unit-gain noisy output observation carrying one unit of received
pairwise signal energy. In covariance or scale-family fading models,
the same concept is implemented by the corresponding
one-output-factor Bhattacharyya coefficient on the appropriate
diversity gauge. The raw coefficient contributed by one such atom is
the reliability-atom coefficient, denoted
$\alpha_{\mathrm{rel}}$.

\item \textbf{Atomic normalization}: the conversion from a raw
coefficient to a DOF or DIV by dividing by the corresponding atom
coefficient:
\[
\mathrm{DOF}_B = \frac{\text{raw rate-side coefficient}}
  {\alpha_{\mathrm{comm}}},
\qquad
\mathrm{DIV}_{B,0} = \frac{\text{raw diversity coefficient}}
  {\alpha_{\mathrm{rel}}}.
\]
When the rate-side coefficient is obtained directly from $C(\snr)$,
it is the raw capacity coefficient. When it is obtained from
fixed-separation packing, it is the raw rate-side endpoint
coefficient.
In coherent AWGN/Rayleigh examples, the corresponding atom is the
usual SISO reference atom. In noncoherent block fading, the
rate-side atom is a Grassmannian communication atom.

\item \textbf{Independent input dimension}: the linear-channel
realization of a communication atom. It is an
independently controllable input direction whose changes induce
distinguishable output laws and can carry distinct information. In a
fixed deterministic MIMO channel
$Y = \sqrt{\snr}\,HX + Z$, this number is
$r_H = \mathrm{rank}(H)$ per physical symbol and $Tr_H$ over one
$T$-slot block. Thus $N$~independently driven parallel input
coordinates give $N$~independent input dimensions, whereas one
scalar input observed by $N$~receive branches gives one independent
input dimension and $N$~independent unit-normalized output
reliability looks.

\item \textbf{Independent look}: the channel realization of a
reliability atom. It is not merely a physical output coordinate,
receive antenna, path, or nonzero singular direction. It is an
independent output-law factor counted after normalization to a
unit-output reliability atom. In fixed equal-covariance AWGN, this
normalization is literal: one look is one unit-energy, unit-gain
output observation. In scale-family or covariance channels, one look
means the model-specific one-output-factor Bhattacharyya contribution
on the relevant diversity gauge. A look helps reliability, while an
independent input dimension carries distinct information. In
Bhattacharyya distance, independent output-law factors add by product
multiplicativity.

\item \textbf{Output-look count}: in fixed equal-covariance AWGN, the
number of unit-energy, unit-gain independent looks generated by
a pair $X_1,X_2$. For $D=X_1-X_2$, define
\[
L_{\rm out}(D;H):=\frac14\|HD\|_F^2.
\]
Then
\[
d_B(X_1,X_2)=\frac{\snr}{\ln 2}L_{\rm out}(D;H).
\]
Thus $L_{\rm out}$ counts the independent looks contributed by
the pair on the fixed-AWGN $\snr$ gauge.
\emph{Example: $N$~receive antennas vs.\ $N$~independent channels.}
Consider $1 \times N$ fixed-AWGN with unit-gain channel vector
$H = \mathbf{1}_N$ and antipodal pair $D = X_1 - X_2$.
Each receive branch observes the same signal at full SNR and
contributes $\frac{1}{4}|D|^2$ to $L_{\rm out}$, so
$L_{\rm out} = N \cdot \frac{1}{4}|D|^2$: $N$~independent looks.
Now consider $N$~independent scalar channels sharing the same total
power~$P$.  Each channel receives $P/N$, so the antipodal pair on
channel~$k$ has $|D_k|^2 = 4P/N$ and contributes
$\frac{1}{4}\cdot 4P/N = P/N$ to the look count.  The total is
$N \times P/N = P$: exactly one independent look at full power.
Receive antennas replicate the observation without splitting the
transmitted energy; independent channels divide it.

\item \textbf{Fixed-separation packing}: fixed-separation packing
means packing codewords at a fixed positive Bhattacharyya distance
threshold $\delta > 0$, independent of~$\snr$. It is the rate-side
endpoint operation used to identify capacity gauges through packing
complexity.

\item \textbf{Packing complexity}: the log of the maximum number of
codewords that can be packed while maintaining a minimum
Bhattacharyya distance~$\delta$
(Definition~\ref{def:pack}).

\item \textbf{Bhattacharyya frontier}
$\Delta_B^*(K;\snr)$: the paper's size-$K$ separation frontier. It is
the maximum, over all admissible size-$K$ one-block codebooks at
SNR~$\snr$, of the minimum pairwise Bhattacharyya distance between the
output laws induced by distinct codewords. It is defined formally in
Definition~\ref{def:div_frontier} and is the central
gauge-identification tool in this paper. When the context is clear,
``frontier'' means the Bhattacharyya frontier.

\item \textbf{Binary endpoint}: the Bhattacharyya frontier evaluated
at $K=2$, namely $\Delta_B^*(2;\snr)$. It determines the zero-rate
diversity gauge.

\item \textbf{Endpoint DOF/DIV}: endpoint DOF and endpoint DIV are
the atom-normalized quantities obtained from the two endpoint uses of
the frontier. Fixed-separation packing gives endpoint DOF after
communication-atom normalization. The binary endpoint gives endpoint
DIV after reliability-atom normalization. If, for fixed $\delta > 0$,
\[
  K_{\mathrm{pack}}(\delta;\snr) \sim c_{\mathrm{pack}}\,g(\snr),
\]
where $g(\snr)$ is the capacity gauge, then
\[
  \mathrm{DOF}_B = \frac{c_{\mathrm{pack}}}{\alpha_{\mathrm{comm}}}.
\]
If
\[
  \Delta_B^*(2;\snr) \sim c_{\mathrm{div}}\,\psi_0(\snr),
\]
where $\psi_0(\snr)$ is the zero-rate diversity gauge, then
\[
  \mathrm{DIV}_{B,0}
  = \frac{c_{\mathrm{div}}}{\alpha_{\mathrm{rel}}}.
\]
In prose, $\mathrm{DOF}_B$ is called endpoint DOF and
$\mathrm{DIV}_{B,0}$ is called endpoint DIV.
For fixed deterministic~$H$,
\[
  \Delta_B^*(2;\snr)
  = \frac{T\sigma_1^2(H)}{\ln 2}\,\snr.
\]
Since the fixed-AWGN reliability atom has coefficient $1/\ln 2$, the
endpoint DIV is $\mathrm{DIV}_{B,0} = T\sigma_1^2(H)$.
Equivalently, the fixed-AWGN output-look count of a pair is
\[
L_{\rm out}(D;H)=\frac{1}{4}\|HD\|_F^2.
\]
The binary endpoint maximizes this output-look count under the total
block peak-power constraint. Thus $T\sigma_1^2(H)$ is not an
unweighted count of singular directions; it is the maximum number of
unit-energy, unit-gain output looks generated by an admissible binary
pair.
This endpoint DIV is gain-dependent for unnormalized~$H$.

\item \textbf{Load-$r$ frontier}: for capacity gauge
$g(\snr)$, define $K_r(\snr)=\lceil 2^{r g(\snr)}\rceil$.
The load-$r$ frontier is $\Delta_B^*(K_r(\snr);\snr)$. Its gauge is
the load-$r$ frontier gauge. This positive-load curve is separate
from endpoint DOF/DIV extraction.

\end{itemize}

The coefficients in the following table are reference coefficients of
one atom; they are not channel-specific DOF/DIV conclusions.
A raw coefficient becomes a DOF or DIV only after the relevant atom
coefficient is specified and the corresponding atomic normalization is
performed.

\begin{center}
\small
\begin{tabular}{@{}lll@{}}
\toprule
Atom & Gauge & Atom coefficient \\
\midrule
one real AWGN communication atom & $\log\snr$ & $1/2$ \\
one complex AWGN communication atom & $\log\snr$ & $1$ \\
one scalar fast-fading communication atom & $\log\log\snr$ & $1$ \\
one fixed-AWGN reliability atom (one unit-energy, unit-gain output look) & $\snr$ & $1/\ln 2$ \\
one Rayleigh fading reliability atom & $\log\snr$ & $1$ \\
one fast-fading scale-family reliability atom & $\log\snr$ & $1/2$ \\
one noncoherent Grassmannian communication atom & $\log\snr$ & $1$ \\
one noncoherent block-fading reliability atom & $\log\snr$ & $1$ \\
\bottomrule
\end{tabular}
\end{center}

The phrase ``unit-energy, unit-gain output look'' is literal for
fixed equal-covariance AWGN\@. For covariance and scale-family
channels, the reliability atom is the model-specific one-output-law
factor normalized by the corresponding one-look coefficient in the
table.

The SISO-reference interpretation applies to the coherent scalar/MIMO
reference rows. The noncoherent block-fading row uses the
Grassmannian atom convention.

\textbf{Relation to standard terminology.}
On the rate side, \emph{capacity prelog} and \emph{channel degrees of
freedom} refer to the $\log\snr$-coefficient of capacity in classical
MIMO settings. In the Zheng--Tse diversity--multiplexing tradeoff
(DMT), the \emph{multiplexing gain} refers to the coefficient~$r$ in
the rate scaling $R = r\log\snr$. In coherent
MIMO, the maximum multiplexing gain equals the channel degrees of
freedom, but the terms refer to different objects. On the diversity
side, \emph{diversity order} and \emph{diversity gain} both denote
the exponent~$d$ in $P_e \doteq \snr^{-d}$, which presupposes the
$\log\snr$ gauge. The \emph{reliability function} $E(R)$ measures
error decay in blocklength at fixed rate and SNR, whereas diversity
measures error decay in SNR.

This paper separates the two steps combined in the coherent-MIMO
convention: gauge selection and atomic normalization. In coherent
complex MIMO, both are numerically invisible because the correct gauge
is $\log\snr$ and the relevant complex AWGN/Rayleigh atom
coefficients are one. In real AWGN and fast-fading scale families, raw
coefficients contain reference constants such as $1/2$ or $1/\ln 2$,
which are removed by atom normalization.

\section{Main Contributions and Relation to Prior Work}\label{sec:contributions}

\subsection{Main Contributions}

This paper makes four contributions.

\emph{First,} it gives a two-step framework for DOF and DIV
extraction in Gaussian fading channels: identify the correct high-SNR
gauge, then normalize the raw coefficient by the appropriate
communication or reliability atom.

\emph{Second,} it introduces the Bhattacharyya frontier as the common
extraction tool for both sides of the problem.  The frontier is a
gauge-free geometric object.  Fixed-separation packing identifies the
capacity gauge and endpoint DOF after communication-atom
normalization; the binary endpoint identifies the zero-rate diversity
gauge and endpoint DIV after reliability-atom normalization.

\emph{Third,} it proves exact cross-gauge results for two central
examples.  For fixed deterministic~$H$, the capacity gauge is
$\log\snr$ with block endpoint DOF $T\,\mathrm{rank}(H)$, whereas the
zero-rate diversity gauge is~$\snr$ with endpoint DIV
$T\sigma_1^2(H)$.  For noncoherent fast fading, the capacity gauge is
$\log\log\snr$, the binary endpoint has raw coefficient $N/2$ on the
$\log\snr$ gauge, and the exact load-$r$ frontier is
$(N/2)(\log\snr)^{1-r}$.

\emph{Fourth,} it recovers the classical same-gauge calibration
cases---coherent Rayleigh MIMO and noncoherent block fading---and
provides a status audit for multipath, parallel fractional-log, and
stationary Toeplitz models, distinguishing exact results from
lower-bound, conditional, and open cases.

\subsection{Relation to Prior Work}

This paper builds on and is positioned relative to several strands.
(i)~\emph{Classical $\log\snr$ calibration.}
Telatar's coherent MIMO capacity~\cite{Telatar1999}, the Zheng--Tse
DMT~\cite{ZhengTse2003}, and the rank
criterion~\cite{TarokhSeshadriCalderbank1998} provide the coherent
complex-MIMO calibration in which both steps are numerically hidden:
$\log\snr$ is the correct gauge, and the relevant atom coefficients
are one.
(ii)~\emph{Non-$\log\snr$ capacity.}
Lapidoth--Moser~\cite{LapidothMoser2003},
Koch--Lapidoth~\cite{KochLapidoth2009}, and
Lapidoth~\cite{Lapidoth2005} provide the rate-side asymptotics on
which the non-$\log\snr$ gauges rest; this paper develops the
corresponding diversity-side analysis where exact frontier
calculations are available, and records lower-bound, conditional,
and open cases separately.
(iii)~\emph{Information dimension.}
R\'enyi's information dimension~\cite{Renyi1959} and the
Wu--Verd\'u framework~\cite{WuVerdu2010,WuShamaiVerdu2015} extract
dimensional quantities from entropy scaling; our capacity gauge is
related in spirit, but we additionally track a diversity gauge.
(iv)~\emph{Grassmannian and optical signal dimensions.}
Grassmannian signal design for block
fading~\cite{ZhengTse2002} and optical signal-dimension
intuition~\cite{PiestunMiller2000,Miller2000} are special cases
restricted to the $\log\snr$ gauge.
(v)~\emph{Godavarti--Hero DOF/DIV normalization.}
Godavarti--Hero~\cite{GodavartiHero2002} treated DOF and DIV as
SISO-normalized relative quantities and computed examples. In the
present terminology, the SISO reference is the Gaussian MIMO
realization of an atom. The present paper adds the missing
gauge layer and the atomic-normalization language: identify the
gauge, compute the raw coefficient, and
divide by the corresponding atom coefficient when DOF/DIV language
is used. New here are the frontier tool; exact
endpoint/frontier calculations for the fixed-$H$ and fast-fading
cross-gauge cases; exact zero-rate endpoint calculations for the
coherent Rayleigh and noncoherent block-fading calibration cases;
a guarded lower bound for multipath fading; a conditional parallel
fractional-log Gallager benchmark; and the explicit identification
of the stationary Toeplitz diversity gauge as open.

The frontier should be read as a
union--Bhattacharyya/packing quantity. Its advantage is gauge
stability: it is well-defined and finite on every channel studied
here, and it cleanly separates gauge identification, raw coefficient
recovery, and atomic-normalized endpoint DOF/DIV recovery. The status
labels in Table~\ref{tab:audit} record how much of this information
is recovered in each case.

\subsection{Roadmap}

The remainder of the paper motivates the framework through seven
channel families, defines the Bhattacharyya frontier, analyzes the
models channel by channel, and closes with the audit tables and open
problems.

\section{Motivation Through Examples}\label{sec:intuition}

We now motivate the gauge framework through the same channel families
analyzed later, in the same order. Each example compares the usual
$\log\snr$ gauge with the scale on which the relevant high-SNR
quantity actually grows. Where the gauge matches, the classical
summary is informative; where it does not, the summary misses the
phenomenon.

Each example below separates three layers: the mechanism producing the
raw coefficient, the gauge on which the coefficient lives, and the
atomic-normalized endpoint reading. Rate-side atoms appear as input dimensions or input parameters;
diversity-side atoms appear as unit-normalized output reliability
looks.

\subsection{Fixed-$H$: Input Dimensions vs.\ Unit Output Looks}
\label{subsec:int_det}

Consider $Y = \sqrt{\snr}\,HX + Z$, where $H \in \C^{N\times M}$
is fixed and known, and set $r_H = \mathrm{rank}(H)$.

The same linear map $X \mapsto HX$ supports two different high-SNR
operations.

On the rate side, the transmitter uses the input dimensions that
survive through~$H$. There are $r_H$ such
dimensions per physical symbol and $Tr_H$ over one $T$-slot block.
Covering the noiseless image at noise resolution gives approximately
$\snr^{Tr_H}$ distinguishable cells, hence
\[
  C_{\mathrm{block}}(\snr) \sim Tr_H\log\snr.
\]
If $H$ is full rank, this becomes $T\min(M,N)\log\snr$.

On the binary testing side, the receiver distinguishes two codewords
through unit-normalized output reliability looks at their received
separation. For $D = X_1 - X_2$, the pairwise Bhattacharyya distance
is
\[
  d_B(X_1, X_2) = \frac{\snr}{4\ln 2}\,\|HD\|_F^2
  = \frac{\snr}{4\ln 2}\sum_{t=1}^{T}\sum_{\ell=1}^{N}
    |h_\ell^\dagger d_t|^2,
\]
where $h_\ell^\dagger$ is the $\ell$-th row of~$H$ and $d_t$ is the
$t$-th column of~$D$. The summands are independent Gaussian output-law factors. After
received-energy normalization, they add as unit-energy, unit-gain
output reliability looks.

Optimizing the binary separation gives
\[
  \Delta_B^*(2;\snr) \sim \frac{T\sigma_1^2(H)}{\ln 2}\,\snr.
\]
Thus fixed deterministic~$H$ is cross-gauge: input-dimension covering
gives the $\log\snr$ capacity gauge, while unit-output-look distance expansion gives the linear~$\snr$
zero-rate diversity gauge. The
exact block-level raw coefficients are derived in
Section~\ref{subsec:ch_det}. After atom normalization,
fixed-separation packing gives endpoint DOF $Tr_H$, and the binary
endpoint gives endpoint DIV $T\sigma_1^2(H)$. This endpoint DIV is
gain-dependent for unnormalized~$H$.

\subsection{Coherent Rayleigh MIMO: The Log-Gauge Calibration Case}
\label{subsec:int_coh}

With Rayleigh fading ($H_{ij}\sim\mathcal{CN}(0,1)$, known to the
receiver), Telatar~\cite{Telatar1999} gives
$C_{\mathrm{sym}}(\snr) \sim \min(M,N)\log\snr$, and
Zheng--Tse~\cite{ZhengTse2003} gives the DMT $d^*(0) = MN$. Both
gauges are $\log\snr$, and both atom coefficients are~$1$, so the raw
coefficients are already the DOF and DIV\@. This is a same-gauge
calibration case where atomic normalization is numerically invisible.

\subsection{Noncoherent Block Fading: The Grassmannian}
\label{subsec:int_block}

When $H\in\C^{M\times N}$ is constant over $T$~symbols and unknown,
the signal space is the Grassmannian
$\mathrm{Gr}(M,T)$~\cite{ZhengTse2002}, giving
$C(\snr) \sim M(T-M)/T\cdot\log\snr$ per symbol for $T \ge 2M$.
The noncoherent DMT
result~\cite{ZhengTse2002Allerton} gives
$d^*(0) = MN$. As in coherent MIMO, both gauges are $\log\snr$ and
the relevant atom coefficients are~$1$, so this is another same-gauge
calibration case where atomic normalization is numerically invisible.

\subsection{Noncoherent Fast Fading: Radialization}
\label{subsec:int_fast}

When the fading coefficient $H_t\sim\mathcal{CN}(0,1)$ changes
independently every symbol and is unknown to the receiver, the
output $Y \sim \mathcal{CN}(0,\snr|x|^2+1)$ depends on~$x$ only
through $|x|^2$~\cite{LapidothMoser2003}. Fading radializes the
scalar channel studied here: the output law depends on the input
only through the energy $|x|^2$. (A vector-input analogue would
similarly depend on $\|x\|^2$. The formal load-$r$ statement appears in the channel analysis.) The
capacity is $C(\snr)\sim\log\log\snr$, so the capacity gauge is
$\log\log\snr$: the output law depends on the input only
through~$|x|^2$. In log-variance coordinates, the distinguishable
inputs lie on an interval of length $\ln(1+\snr) \sim \ln\snr$, so
the number of distinguishable input levels is on the order
of~$\log\snr$, and capacity grows as $\log\log\snr$.

The classical prelog $C(\snr)/\log\snr \to 0$ declares zero capacity
prelog, failing to register the $\log\log\snr$ capacity growth that
the channel genuinely has.

The $\log\snr$ gauge is mismatched to this channel on the capacity
side. On the diversity side, with $N$~receive antennas, each receive antenna provides one independent scale-family
output-law factor. After normalization by the one-look scale-family
Bhattacharyya coefficient, these factors are unit-normalized output
reliability looks. These $N$~looks multiply the one-antenna
Bhattacharyya exponent;
Section~\ref{subsec:ch_fast} gives the raw diversity coefficient~$N/2$.
Dividing by the fast-fading reliability-atom coefficient $1/2$
gives endpoint DIV~$N$.

\subsection{Multipath Fading: Guarded Fast-Fading Structure}
\label{subsec:int_multipath}

Koch--Lapidoth~\cite{KochLapidoth2009} showed
$C(\snr)\sim\log\log\snr$. The guarded construction reduces separated
data symbols to fast-fading-like scale-family observations along the
nonzero paths, giving a diversity-side lower bound on the $\log\snr$
gauge per effective guarded data symbol; the exact unguarded diversity
gauge is not identified here.

\subsection{Parallel Fractional-Log Fading: Independent-Subchannel Benchmark}
\label{subsec:int_parallel}

The parallel fractional-log model diagonalizes the channel into
independent scalar subchannels by assumption, with total capacity
gauge $G_J(\snr)=J(\snr)(\log\snr)^\beta$. Because the subchannels
are independent, the Gallager $E_0$ function factors coordinatewise,
giving a conditional lower-bound benchmark on this total gauge. This
is a controlled comparison case, not an exact solution of the
stationary Toeplitz problem.

\subsection{Stationary Toeplitz Fading: Fractional-Log Capacity and an Open Diversity Gauge}
\label{subsec:int_toeplitz}

Stationary non-regular Gaussian fading~\cite{Lapidoth2005} gives
$C(\snr) \asymp (\log\snr)^\beta$, $0<\beta<1$. The Toeplitz
covariance matrices do not commute across codewords, so the
coordinatewise factorization from the parallel benchmark is
unavailable. The diversity gauge remains open in this paper.

\subsection{Summary of Motivating Examples}

\begin{center}
\small
\begin{tabular}{@{}llll@{}}
\toprule
Channel & Capacity gauge & Classical $\log\snr$ & Takeaway \\
& (established) & summary & \\
\midrule
Fixed-$H$ & $\log\snr$ & diversity slope $= \infty$ & Wrong diversity gauge \\
Coherent Rayleigh MIMO & $\log\snr$ & Prelog and diversity work & Same-gauge calibration \\
Noncoherent block fading & $\log\snr$ & Prelog and diversity work & Same-gauge calibration \\
Noncoherent fast fading & $\log\log\snr$ & capacity prelog $= 0$ & Wrong capacity gauge \\
Multipath fading & $\log\log\snr$ & capacity prelog $= 0$ & Guarded diversity lower bound \\
Parallel fractional-log fading & $J(\snr)(\log\snr)^\beta$ total & fixed $\log\snr$ normalization mismatched & Conditional benchmark \\
Stationary Toeplitz fading & $(\log\snr)^\beta$ & capacity prelog $= 0$ & Diversity gauge open \\
\bottomrule
\end{tabular}
\end{center}

These examples show why the paper tracks capacity and diversity gauges
separately and why raw coefficients require atom normalization before
they are read as DOF or DIV. In the calibration cases, the $\log\snr$
gauge works and the relevant atom coefficients are one. In the
cross-gauge and fractional-log cases, the classical normalization
either misses the capacity gauge, the diversity gauge, or the atomic
meaning of the raw coefficient. The channel-by-channel analysis below
follows the same order and records what is proved in each case.

\section{The Bhattacharyya Frontier}\label{sec:tool}

\subsection{Bhattacharyya Distance}

\begin{definition}[Bhattacharyya coefficient and distance]
\label{def:bhatt}
For measures $P,Q$ with densities $p,q$:
\begin{align}
B(P,Q) &:= \int \sqrt{p\,q}, \label{eq:bhatt_coeff}\\
d_B(P,Q) &:= -\log B(P,Q). \label{eq:bhatt_dist}
\end{align}
\end{definition}

The key property is multiplicativity: if $P = P_1 \times P_2$ and
$Q = Q_1 \times Q_2$, then
$d_B(P,Q) = d_B(P_1,Q_1) + d_B(P_2,Q_2)$
(Lemma~\ref{lem:mult}). Independent output-law factors add: for product output laws,
$d_B$ is the sum of the component Bhattacharyya distances. After
division by the model-specific reliability-atom coefficient, this
additive sum is interpreted as the number of unit-normalized output
reliability looks. For identical independent receive observations
this sum is $N$~times the single-look distance, and for $n$~independent model blocks it is
$n$~times the single-block distance.
This multiplicativity, together with the Bhattacharyya sandwich
(Proposition~\ref{thm:bridge}), is the reason the paper uses
Bhattacharyya distance as its frontier metric.

\subsection{Packing Frontier}

Let $\setX_{\mathrm{one}} = \setX_{\mathrm{one}}(\snr)$ denote the set of valid single-use
inputs at SNR~$\snr$ (determined by the input constraint;
see Section~\ref{sec:channel_model}).

\begin{definition}[Bhattacharyya packing number]\label{def:pack}
\emph{Peak power.}
\begin{equation}
\begin{split}
N_{\mathrm{pack}}(\delta;\snr) := \max\big\{|\setC| :{}
&\setC \subseteq \setX_{\mathrm{one}},\\
&\min_{\substack{x,x'\in\setC\\x\ne x'}}
d_B(P_{Y|X=x}^{(\snr)}, P_{Y|X=x'}^{(\snr)}) \ge \delta\big\}.
\end{split}
\end{equation}
\emph{Average power.}
\begin{equation}
\begin{split}
N_{\mathrm{pack}}^{\mathrm{avg}}(\delta;\snr) := \max\big\{|\setC| :{}
&\tfrac{1}{|\setC|}\!\sum_{x\in\setC}\!\|x\|^2 \le P,\\
&\min_{\substack{x,x'\in\setC\\x\ne x'}}
d_B(P_{Y|X=x}^{(\snr)}, P_{Y|X=x'}^{(\snr)}) \ge \delta\big\}.
\end{split}
\end{equation}
The packing complexity is
$K_{\mathrm{pack}}(\delta;\snr) := \log N_{\mathrm{pack}}(\delta;\snr)$
(and similarly for the average-power variant).
\end{definition}

The following definition formalizes the frontier
construction introduced in the abstract and vocabulary section.

\begin{definition}[Bhattacharyya frontier]\label{def:div_frontier}
For integer $K\ge 2$, the Bhattacharyya frontier is
\begin{equation}\label{eq:div_frontier_full}
\Delta_B^*(K;\snr) := \max_{\substack{\setC\subseteq\setX_{\mathrm{one}}\\
|\setC|=K}} \;\min_{\substack{x,x'\in\setC\\x\ne x'}}
d_B\big(P_{Y|X=x}^{(\snr)},\, P_{Y|X=x'}^{(\snr)}\big).
\end{equation}
\end{definition}

The packing number and the frontier are inverses:
$N_{\mathrm{pack}}(\delta;\snr) \ge K \Leftrightarrow
\Delta_B^*(K;\snr) \ge \delta$
(Lemma~\ref{lem:inverse}).

\begin{theorem}[Capacity-side bridge: capacity--packing sandwich]\label{lem:sandwich}
For a one-model-block channel family $x \mapsto P_x^{(\snr)}$, let
\[
C_{\mathrm{block}}(\snr) = \sup_{P_X} I(X;Y)
\]
be the Shannon capacity in bits per model block. Let
$g(\snr) \to \infty$. Suppose the following two fixed-resolution
conditions hold.

\emph{First}, for some $c_- > 0$, there are inputs
$x_1,\dots,x_{M_\snr}$ such that, writing
$P_i = P_{x_i}^{(\snr)}$,
\[
\log M_\snr \ge c_-\,g(\snr) + o(g(\snr)),
\]
and
\[
\log\!\sup_i \sum_{j=1}^{M_\snr} B(P_i, P_j) = o(g(\snr)).
\]

\emph{Second}, for some $c_+ < \infty$, the admissible input set can
be partitioned into $L_\snr$ cells $A_1,\dots,A_{L_\snr}$ such that
\[
\log L_\snr \le c_+\,g(\snr) + o(g(\snr)),
\]
and
\[
\sup_\ell\;\inf_{Q_\ell}\;\sup_{x \in A_\ell}
D\!\left(P_x^{(\snr)} \,\big\|\, Q_\ell\right) = o(g(\snr)),
\]
with KL divergence measured in bits.

Then
\[
c_- \;\le\;
\liminf_{\snr\to\infty} \frac{C_{\mathrm{block}}(\snr)}{g(\snr)}
\;\le\;
\limsup_{\snr\to\infty} \frac{C_{\mathrm{block}}(\snr)}{g(\snr)}
\;\le\; c_+.
\]
In particular, if the same coefficient~$c$ is achieved on both sides,
namely
\[
\log M_\snr = c\,g(\snr) + o(g(\snr)),\qquad
\log L_\snr = c\,g(\snr) + o(g(\snr)),
\]
with the same overlap and local-KL conditions, then
\[
C_{\mathrm{block}}(\snr) \sim c\,g(\snr).
\]
If the sharper bounds
\[
\log M_\snr = c\,g(\snr) + O(1),\qquad
\log L_\snr = c\,g(\snr) + O(1),
\]
hold and the aggregate Bhattacharyya overlap and local KL radius are
both $O(1)$, then
\[
C_{\mathrm{block}}(\snr) = c\,g(\snr) + O(1).
\]
Thus fixed-resolution output-law packing and covering recover not only
the capacity gauge but also the raw capacity coefficient whenever
their entropy coefficients match.
\end{theorem}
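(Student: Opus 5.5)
The plan is to prove the two inequalities separately, both by standard one-shot information-theoretic bounds, and then read off the corollaries.

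\textbf{Lower bound.} Take $X$ uniform on $\{x_1,\dots,x_{M_\snr}\}$ and use the mixture lower bound on mutual information via Bhattacharyya overlaps. Writing $\bar P=\frac1{M}\sum_j P_j$, we have
\[
I(X;Y)=\frac1M\sum_i D(P_i\|\bar P)\ \ge\ -\frac1M\sum_i \log\Big(\int \sqrt{p_i}\,\sqrt{\bar p}\Big)^2\cdot\ \text{(order-}1/2\text{ R\'enyi)}.
\]
It is cleanest to use that KL dominates R\'enyi divergence of order $1/2$, $D(P_i\|\bar P)\ge D_{1/2}(P_i\|\bar P)=-2\log\int\sqrt{p_i\bar p}$. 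Then bound $\int\sqrt{p_i\bar p}$ using concavity of the square root, $\sqrt{\bar p}\le \frac{1}{\sqrt M}\sqrt{\sum_j p_j}\le\frac1{\sqrt M}\sum_j\sqrt{p_j}$, to get $\int\sqrt{p_i\bar p}\le M^{-1/2}\sum_j B(P_i,P_j)$. This yields
\[
I(X;Y)\ge \log M_\snr-2\log\sup_i\sum_j B(P_i,P_j)=c_-g(\snr)+o(g(\snr)).
\]
An alternative route is the generalized Fano/Gallager bound $I\ge -\log\frac1{M^2}\sum_{i,j}B(P_i,P_j)$, obtained from $E_0(1,\cdot)$ with $\rho=1$. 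Either route gives the same conclusion, and it is routine. Dividing by $g$ and taking $\liminf$ gives $c_-$.

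\textbf{Upper bound.} Fix any $P_X$ and let $\ell(X)$ be the cell index. By the chain rule, $I(X;Y)\le I(\ell(X);Y)+I(X;Y\mid \ell(X))\le H(\ell(X))+I(X;Y\mid\ell)\le\log L_\snr+I(X;Y\mid\ell)$. For each cell, the golden formula gives $I(X;Y\mid \ell=\ell_0)\le\sup_{x\in A_{\ell_0}}D(P_x\|Q_{\ell_0})$ for any $Q_{\ell_0}$. Choosing near-optimal $Q_\ell$, this is $o(g)$ uniformly in $\ell$. Taking the sup over $P_X$ gives $C_{\mathrm{block}}\le c_+g+o(g)$.

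\textbf{Corollaries and main obstacle.} The matching-coefficient statements follow immediately. With $O(1)$ errors, the same inequalities give $c\,g-O(1)\le C\le c\,g+O(1)$. The main care point is the lower-bound step. One must justify $D\ge D_{1/2}$, which is monotonicity of R\'enyi order, and the factor~2 in the overlap term, which is harmless since it is $o(g)$. One must also handle measurability and the infimum over $Q_\ell$ possibly not being attained; an $\epsilon$-optimal choice suffices. If the R\'enyi route proves awkward, I would instead use the Gallager $\rho=1$ bound, $I(X;Y)\ge E_0(1,P_X)=-\log\int(\sum_i\frac1M\sqrt{p_i})^2$, whose integrand expands exactly into $\frac1{M^2}\sum_{i,j}B(P_i,P_j)\le \frac1M\sup_i\sum_jB(P_i,P_j)$.
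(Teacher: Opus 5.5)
Your proposal is correct and follows the paper's proof essentially step for step. For the lower bound you take the uniform input on the packing, use $D(P_i\|\bar P)\ge D_{1/2}(P_i\|\bar P)$ and $\sqrt{\sum_j p_j}\le\sum_j\sqrt{p_j}$, and obtain $I(X;Y)\ge\log M_\snr-2\log\sup_i\sum_j B(P_i,P_j)$. For the upper bound you apply the chain rule through the cell index, which is in fact an equality since the index is a function of $X$, and then the golden formula with an $\epsilon$-optimal $Q_\ell$.

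Your fallback via $I(X;Y)\ge E_0(1,P_X)$ is also valid and even removes the factor $2$ in front of the overlap term, but it is not needed.
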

\begin{proof}
See Appendix~\ref{app:sandwich_proof}.
\end{proof}

\subsection{Endpoint DOF/DIV from the Frontier}\label{subsec:endpoint_dofdiv}

The frontier has two bridge theorems.
Theorem~\ref{lem:sandwich} is the \emph{capacity-side bridge}:
fixed-resolution output-law packing and covering identify the Shannon
capacity gauge, and in coefficient-matched cases the raw capacity
coefficient. Theorem~\ref{lem:binary_endpoint} is the
\emph{diversity-side bridge}: the optimal two-message testing error is
sandwiched between constant powers of the Bhattacharyya coefficient.
Consequently, the binary endpoint $\Delta_B^*(2;\snr)$ identifies the
actual two-message zero-rate diversity gauge whenever it diverges.
Atomic normalization is then applied separately to the raw rate-side
and diversity-side coefficients.

The frontier is used in two endpoint ways. On the rate side,
fixed-separation packing gives the output-law packing scale.
Theorem~\ref{lem:sandwich} is the capacity-side bridge: when
fixed-resolution output-law packing and covering have the same entropy
gauge, this scale is the one-model-block Shannon capacity gauge; when
their entropy coefficients match, it also recovers the raw capacity
coefficient. That raw rate-side coefficient becomes endpoint DOF after
communication-atom normalization. On the diversity side, the binary
endpoint gives the raw diversity coefficient, which becomes endpoint
DIV after reliability-atom normalization.

Suppose that for fixed $\delta > 0$,
\[
  K_{\mathrm{pack}}(\delta;\snr) \sim c_{\mathrm{pack}}\,g(\snr),
\]
where $g(\snr)$ is the rate-side gauge. Let
$\alpha_{\mathrm{comm}}$ be the communication-atom coefficient on
this gauge. The endpoint DOF is
\[
  \mathrm{DOF}_B = \frac{c_{\mathrm{pack}}}{\alpha_{\mathrm{comm}}}.
\]

Suppose that
\[
  \Delta_B^*(2;\snr) \sim c_{\mathrm{div}}\,\psi_0(\snr),
\]
where $\psi_0(\snr)$ is the zero-rate diversity gauge. Let
$\alpha_{\mathrm{rel}}$ be the reliability-atom coefficient on this
gauge. The endpoint DIV is
\[
  \mathrm{DIV}_{B,0} = \frac{c_{\mathrm{div}}}{\alpha_{\mathrm{rel}}}.
\]

In coherent AWGN/Rayleigh cases, $\alpha_{\mathrm{comm}}$ and
$\alpha_{\mathrm{rel}}$ coincide with the usual SISO reference-atom
coefficients. In noncoherent block fading, the communication-side
normalization is Grassmannian.

For $r > 0$, the frontier
$\Delta_B^*(K_r(\snr);\snr)$, with
$K_r(\snr) = \lceil 2^{r g(\snr)}\rceil$, gives the load-$r$
frontier. This positive-load curve records how the frontier gauge
changes as rate load increases. Its coefficient on the load-$r$ frontier gauge may be called the
load-$r$ coefficient. Endpoint DOF is extracted from fixed-separation
packing. Endpoint DIV is extracted from the binary endpoint.

\subsection{Frontier Roles and Status}\label{subsec:status_labels}

The following terms use Definition~\ref{def:div_frontier}.

\begin{itemize}
\item \textbf{Load-$r$ frontier gauge.} For capacity gauge $g(\snr)$,
define
\[
    K_r(\snr)=\left\lceil 2^{r g(\snr)}\right\rceil.
\]
The gauge of $\Delta_B^*(K_r(\snr);\snr)$ is the load-$r$ frontier
gauge. When matching upper and lower bounds for this quantity are
proved, the load-$r$ frontier gauge is identified. Equality with an
operational reliability gauge $\psi_{\mathrm{op},r}(\snr)$ is a
separate, model-dependent question.

\item \textbf{Zero-rate diversity gauge.} As defined in
Section~\ref{subsec:vocab}, the zero-rate diversity gauge is the gauge
$\psi_0(\snr)$ satisfying
\[
  \Delta_B^*(2;\snr) \asymp \psi_0(\snr).
\]
By Theorem~\ref{lem:binary_endpoint}, this endpoint is equivalent up to
constant factors to the best equal-prior two-message testing exponent
whenever $\Delta_B^*(2;\snr) \to \infty$. Unless an external
operational benchmark, such as a DMT value $d^*(0)$, is explicitly
cited, ``zero-rate diversity gauge'' means this binary-endpoint gauge.
We do not claim equality with the full zero-rate reliability function
or with every possible $r \downarrow 0$ operational limit.

\end{itemize}

\textbf{Gauge identification and endpoint normalization.}
The frontier has two endpoint roles and one
positive-load extension. On the capacity side, fixed-separation
packing gives a geometric output-law packing scale.
Theorem~\ref{lem:sandwich} converts this scale into the one-model-block
Shannon capacity gauge, and in coefficient-matched cases into the raw
capacity coefficient. Dividing that raw coefficient by
$\alpha_{\mathrm{comm}}$ gives endpoint DOF. On the diversity side,
the binary endpoint $\Delta_B^*(2;\snr)$ is tied to the optimal
two-message error exponent by Theorem~\ref{lem:binary_endpoint}; when
this endpoint has raw coefficient $c_{\mathrm{bin}}$, dividing by
$\alpha_{\mathrm{rel}}$ gives endpoint DIV. At positive load,
$\Delta_B^*(K_r(\snr);\snr)$ gives the load-$r$ frontier.

In this paper, the load-$r$ frontier gauge is identified exactly
only for the fast-fading model. For the other channel families,
load-$r$ frontier-gauge identification is left for future work.

\textbf{Bridge to error probability.}

\begin{theorem}[Diversity-side bridge: binary endpoint identifies the
zero-rate diversity gauge]
\label{lem:binary_endpoint}
For two channel-induced output distributions $P,Q$, let
$P_{e,2}^*(P,Q)$ be the optimal equal-prior binary testing error.
Then
\begin{equation}\label{eq:binary_sandwich}
d_B(P,Q) + 1 \;\le\; -\log_2 P_{e,2}^*(P,Q) \;\le\;
2\,d_B(P,Q) + 2.
\end{equation}
Consequently, if\/ $E_2^*(\snr) := \sup_{x\ne x'}
[-\log_2 P_{e,2}^*(P_{Y|x}^{(\snr)},P_{Y|x'}^{(\snr)})]$,
then
\begin{equation}\label{eq:frontier_operational}
\Delta_B^*(2;\snr) + 1 \;\le\; E_2^*(\snr) \;\le\;
2\,\Delta_B^*(2;\snr) + 2.
\end{equation}
Therefore, whenever $\Delta_B^*(2;\snr)\to\infty$, the binary
endpoint identifies the actual two-message zero-rate diversity gauge:
\begin{equation}\label{eq:gauge_equiv}
E_2^*(\snr) \;\asymp\; \Delta_B^*(2;\snr).
\end{equation}
The theorem identifies the gauge. Exact raw operational coefficients
require model-specific tightening or direct computation.

For the $n$-fold repetition of any fixed single-use pair, the same
inequalities hold with $d_B$ replaced by $n\,d_B$. In particular,
repeating a pair that achieves $\Delta_B^*(2;\snr)$ gives a
two-message exponent of gauge $n\,\Delta_B^*(2;\snr)$. This
statement does not upper-bound arbitrary $n$-block two-message codes
unless the corresponding $n$-block frontier is separately analyzed.
\end{theorem}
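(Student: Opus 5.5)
The plan is to reduce everything to the two classical inequalities linking total variation and the Bhattacharyya coefficient. First I will write the optimal equal-prior binary testing error in closed form. By the Neyman--Pearson/MAP argument, the likelihood-ratio test is optimal, so
\[
P_{e,2}^*(P,Q)=\tfrac12\int \min(p,q).
\]
Both sides of \eqref{eq:binary_sandwich} then become statements about $\int\min(p,q)$ versus $B(P,Q)=\int\sqrt{pq}$.

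\emph{Upper bound on the error (the left inequality).} Pointwise, $\min(p,q)\le\sqrt{pq}$. Hence $P_{e,2}^*\le \tfrac12 B(P,Q)$, and taking $-\log_2$ gives
\[
-\log_2 P_{e,2}^*\ge d_B(P,Q)+1 .
\]

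\emph{Lower bound on the error (the right inequality).} Write $\sqrt{pq}=\sqrt{\min(p,q)\max(p,q)}$ and apply Cauchy--Schwarz:
\[
B^2\le \int\min(p,q)\int\max(p,q).
\]
Since $\int\max(p,q)=2-\int\min(p,q)\le 2$, this gives $\int\min(p,q)\ge B^2/2$. Therefore $P_{e,2}^*\ge B^2/4$, that is,
\[
-\log_2 P_{e,2}^*\le 2d_B+2 .
\]
The degenerate case $B=0$ (mutually singular laws) needs a separate remark: there $d_B=\infty$ and $P_{e,2}^*=0$, so both sides of the sandwich are $+\infty$ and the inequality holds trivially.

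\emph{Passage to the frontier.} The maps $t\mapsto t+1$ and $t\mapsto 2t+2$ are nondecreasing. Taking the supremum over admissible pairs $x\ne x'$ in the pairwise sandwich therefore yields \eqref{eq:frontier_operational}, with $\Delta_B^*(2;\snr)$ read as the supremum of $d_B$ over pairs. This agrees with the max in Definition~\ref{def:div_frontier} whenever the max is attained. In general I would phrase the argument with suprema, which is the step needing the most care even though it is routine. Dividing \eqref{eq:frontier_operational} by $\Delta_B^*(2;\snr)\to\infty$ then gives
\[
1\le\liminf\frac{E_2^*}{\Delta_B^*}\le\limsup\frac{E_2^*}{\Delta_B^*}\le 2,
\]
which is \eqref{eq:gauge_equiv}.

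\emph{Repetition.} For $n$-fold repetition of a fixed pair, the output laws are the products $P^{\otimes n}$ and $Q^{\otimes n}$. By Lemma~\ref{lem:mult}, $d_B(P^{\otimes n},Q^{\otimes n})=n\,d_B(P,Q)$. Applying the already-proved pairwise sandwich to these product laws gives the stated inequalities with $d_B$ replaced by $n\,d_B$. This argument only concerns repetition codes, which is consistent with the caveat that arbitrary $n$-block codes are not covered.

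I expect no genuine obstacle here. The main points to handle carefully are the Cauchy--Schwarz split in the lower bound and the sup/max and $B=0$ bookkeeping.
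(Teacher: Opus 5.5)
Your proposal is correct and follows essentially the paper's route. The paper writes $P_{e,2}^*=(1-\mathrm{TV})/2$ and quotes $1-B\le\mathrm{TV}\le\sqrt{1-B^2}$; you derive the same two bounds directly, via pointwise $\min(p,q)\le\sqrt{pq}$ and Cauchy--Schwarz with $\int\max(p,q)\le 2$, and reach the identical constants $B^2/4\le P_{e,2}^*\le B/2$. Your handling of sup versus max in $\Delta_B^*(2;\snr)$ and of the $B=0$ case only makes explicit what the paper's one-line ``optimizing over pairs'' step leaves implicit.
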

\begin{proof}
The optimal binary testing error satisfies
$P_{e,2}^* = (1-\mathrm{TV}(P,Q))/2$.
The total-variation/Bhattacharyya inequalities give
$1-B(P,Q) \le \mathrm{TV}(P,Q) \le \sqrt{1-B(P,Q)^2}$,
hence $B(P,Q)^2/4 \le P_{e,2}^* \le B(P,Q)/2$.
Taking $-\log_2$ yields~\eqref{eq:binary_sandwich}.
Optimizing over pairs $x\ne x'$
gives~\eqref{eq:frontier_operational}.
\end{proof}

\begin{proposition}[Bhattacharyya sandwich for repeated one-block
codebooks]\label{thm:bridge}
Let $\setC = \{x_1,\dots,x_K\}\subseteq\setX_{\mathrm{one}}$ be a
one-block codebook with minimum pairwise Bhattacharyya distance
$\Delta_{\min}(\setC;\snr)$. For $n$~independent repetitions of the
same one-block codebook, the optimal $K$-message error probability
satisfies
\begin{equation}\label{eq:bridge}
\frac{1}{2K}\,2^{-2n\,\Delta_{\min}(\setC;\snr)}
\;\le\; P_{e,\mathrm{opt}}(\setC^{(n)})
\;\le\; (K-1)\,2^{-n\,\Delta_{\min}(\setC;\snr)}.
\end{equation}
The upper bound is achieved by ML via the union--Bhattacharyya bound;
the lower bound follows from the closest pair and
Theorem~\ref{lem:binary_endpoint}.
\end{proposition}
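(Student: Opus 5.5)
The plan is to prove the two inequalities in \eqref{eq:bridge} separately. Both reduce to pairwise statements about the $n$-fold product laws $P_i^{(n)} := (P_{Y|X=x_i}^{(\snr)})^{\otimes n}$, $i=1,\dots,K$. First I will record the consequence of multiplicativity (Lemma~\ref{lem:mult}): for every pair $i\ne j$,
\[
d_B(P_i^{(n)},P_j^{(n)}) = n\,d_B(P_{Y|x_i},P_{Y|x_j}) \ge n\,\Delta_{\min}(\setC;\snr),
\]
with equality for a closest pair $(i^*,j^*)$ attaining $\Delta_{\min}$. Every later step uses only this product distance.

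\textbf{Upper bound.} I will use ML decoding and the union bound. Conditioned on message $i$, an error requires some $j\ne i$ with $p_j^{(n)}(y)\ge p_i^{(n)}(y)$. On that event $\mathbf{1}\{p_j\ge p_i\}\le \sqrt{p_j/p_i}$, so
\[
\Pr[\hat W\ne i\mid W=i]\le \sum_{j\ne i}\int \sqrt{p_i^{(n)}p_j^{(n)}} = \sum_{j\ne i} B(P_i^{(n)},P_j^{(n)}) = \sum_{j\ne i} 2^{-d_B(P_i^{(n)},P_j^{(n)})}.
\]
Applying the pairwise bound, each term is at most $2^{-n\Delta_{\min}}$, so the sum is at most $(K-1)2^{-n\Delta_{\min}}$. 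Averaging over $i$ keeps this bound. The optimal decoder does at least as well as ML, which gives the right-hand inequality. Since both $d_B$ and $\log$ are base~2, no constants change.

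\textbf{Lower bound.} I will reduce to the closest pair $(i^*,j^*)$. Take any decoder $\hat W$ for the $K$-ary problem and build a binary test between $P_{i^*}^{(n)}$ and $P_{j^*}^{(n)}$: declare $i^*$ if $\hat W=i^*$, and $j^*$ otherwise. The equal-prior error of this test is
\[
\tfrac12\bigl(\Pr[\hat W\ne i^*\mid i^*]+\Pr[\hat W= i^*\mid j^*]\bigr)\le \tfrac12\bigl(\Pr[\hat W\ne i^*\mid i^*]+\Pr[\hat W\ne j^*\mid j^*]\bigr)\le \tfrac{K}{2}\,P_e,
\]
because the two conditional errors are among the $K$ terms in $K P_e$. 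Hence $P_e\ge (2/K)\,P_{e,2}^*(P_{i^*}^{(n)},P_{j^*}^{(n)})$.

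From the proof of Theorem~\ref{lem:binary_endpoint} we have $P_{e,2}^*\ge B^2/4$, where $B=2^{-n\Delta_{\min}}$ by the pairwise identity. This gives $P_e\ge \frac{1}{2K}2^{-2n\Delta_{\min}}$, which is the stated bound. Since the argument holds for every decoder, it holds for the optimal one.

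The only delicate step is the bookkeeping in the reduction from the $K$-ary to the binary problem. The point to check is that the constant that comes out is $1/(2K)$ and not something weaker. The chain $(2/K)\cdot(B^2/4)=B^2/(2K)$ confirms it. If one only used the cruder bound $P_e\ge \frac1K\max_i\Pr[\hat W\ne i\mid i]$, the constant would be different, so I will be careful to use the exact averaging argument above. The remaining steps are routine applications of multiplicativity and Theorem~\ref{lem:binary_endpoint}.
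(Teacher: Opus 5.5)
Your proposal is correct and follows the paper's proof essentially step for step. The upper bound uses the union--Bhattacharyya bound under ML decoding. The lower bound reduces to the closest pair via an induced binary test and applies $P_{e,2}^*\ge B^2/4$ from Theorem~\ref{lem:binary_endpoint}, with Lemma~\ref{lem:mult} supplying the factor $n$. The only cosmetic difference is that you pass to the product laws at the outset rather than at the end. Your bookkeeping $(2/K)\cdot(B^2/4)=B^2/(2K)$ reproduces the paper's constant $1/(2K)$ exactly.
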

\begin{proof}
See Appendix~\ref{app:coding_bounds}.
\end{proof}

\begin{corollary}\label{cor:optimal_packing}
For the frontier-optimal one-block codebook repeated over
$n$~independent blocks,
\begin{equation}\label{eq:closest_pair}
\frac{1}{2K}\,2^{-2n\,\Delta_B^*(K;\snr)}
\;\le\; P_{e,\mathrm{opt}}
\;\le\; (K-1)\,2^{-n\,\Delta_B^*(K;\snr)}.
\end{equation}
\end{corollary}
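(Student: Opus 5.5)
The statement to prove is Corollary~\ref{cor:optimal_packing}. It is a direct specialization of Proposition~\ref{thm:bridge}: apply that proposition to a codebook attaining the frontier, so that $\Delta_{\min}(\setC;\snr)$ becomes $\Delta_B^*(K;\snr)$.

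\emph{Step 1: existence of a frontier-optimal codebook.} Definition~\ref{def:div_frontier} writes the frontier as a maximum, so there is a $K$-point codebook $\setC^\star\subseteq\setX_{\mathrm{one}}$ with
\[
\Delta_{\min}(\setC^\star;\snr)=\Delta_B^*(K;\snr).
\]
In the models considered, $\setX_{\mathrm{one}}$ is compact (peak power), and $x\mapsto P^{(\snr)}_{Y|X=x}$ is continuous in a sense that makes $d_B$ continuous. Hence the min-over-pairs objective is upper semicontinuous on the closed set of $K$-tuples with distinct entries, and the supremum is attained.

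If attainment were in doubt, I would instead take codebooks with $\Delta_{\min}\ge \Delta_B^*(K;\snr)-\epsilon$. The bounds then hold with $\Delta_B^*(K;\snr)$ replaced by $\Delta_B^*(K;\snr)-\epsilon$, and the lower bound holds with $\Delta_B^*(K;\snr)$ itself, since every codebook has $\Delta_{\min}\le\Delta_B^*(K;\snr)$. The statement refers to "the frontier-optimal codebook," so I would assume attainment and note this caveat.

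\emph{Step 2: substitution.} Insert $\setC^\star$ into~\eqref{eq:bridge}.

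For the upper bound, the union--Bhattacharyya argument gives
\[
P_{e,\mathrm{opt}}\le \frac1K\sum_i\sum_{j\ne i}B(P_i^{\otimes n},P_j^{\otimes n}).
\]
By Lemma~\ref{lem:mult}, $B(P_i^{\otimes n},P_j^{\otimes n})=2^{-n d_B(P_i,P_j)}$, and each of these terms is at most $2^{-n\Delta_B^*(K;\snr)}$. This yields $(K-1)2^{-n\Delta_B^*(K;\snr)}$.

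For the lower bound, take the closest pair $(i,j)$, which achieves $d_B(P_i,P_j)=\Delta_B^*(K;\snr)$. The average error is at least $\frac{1}{K}(P_{e\mid i}+P_{e\mid j})$, which is at least $\frac{2}{K}P^*_{e,2}$ for that pair under the $n$-fold product. Theorem~\ref{lem:binary_endpoint} with $n d_B$ gives $P^*_{e,2}\ge 2^{-2n d_B-2}$. This produces a bound of the stated form, $\frac{1}{2K}2^{-2n\Delta_B^*(K;\snr)}$.

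\emph{Main obstacle.} The only non-routine point is the attainment issue in Step 1; everything else is a direct application of Proposition~\ref{thm:bridge}. I would also check that the constant in the lower bound of Proposition~\ref{thm:bridge} matches, but that is already settled by the cited proposition.
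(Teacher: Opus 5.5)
Your proposal is correct and follows the paper's proof: take a codebook attaining the maximum in Definition~\ref{def:div_frontier}, so that $\Delta_{\min}(\setC;\snr)=\Delta_B^*(K;\snr)$, and substitute into Proposition~\ref{thm:bridge}; your check of the constants $(K-1)$ and $\frac{1}{2K}$ agrees with that proposition. One small slip in the attainment step: the set of $K$-tuples with distinct entries is open, not closed, so instead maximize the min-distance objective over the compact set $\setX_{\mathrm{one}}^K$, where it is $0$ on tuples with a repeated entry and hence any maximizer has distinct entries whenever $\Delta_B^*(K;\snr)>0$ (your $\epsilon$-fallback covers this in any case).
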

\begin{proof}
Choose $\setC$ to maximize $\Delta_{\min}(\setC;\snr)$; by
definition, $\Delta_{\min}(\setC;\snr) = \Delta_B^*(K;\snr)$.
\end{proof}

\textbf{Status labels.}

For each channel, we compare the frontier information
against the operational benchmarks. Three separate questions are
tracked:

\begin{enumerate}
\item \textbf{Gauge match}: does the frontier gauge agree with the
best operational benchmark?
\item \textbf{Raw coefficient match}: does the computed raw
coefficient agree with the named comparison benchmark, when one is
being used?
\item \textbf{Endpoint match}: after division by the appropriate atom
coefficient, does the endpoint recover the corresponding DOF or DIV?
\end{enumerate}

The following status labels, from strongest to weakest, are used:

\begin{itemize}
\item \textbf{Exact frontier}: matching upper and lower bounds
for $\Delta_B^*$, including the raw coefficient, at the stated
$K(\snr)$.
\item \textbf{Zero-rate exact}: matching upper and lower
bounds for the binary endpoint $\Delta_B^*(2;\snr)$, including the
raw diversity coefficient, are proved.
\item \textbf{Zero-rate coefficient exact}: the binary endpoint
recovers both the zero-rate gauge and the known zero-rate raw
operational coefficient.
\item \textbf{Endpoint exact}: the frontier endpoint recovers the
gauge and, after atom normalization, the corresponding DOF or DIV.
\item \textbf{Gauge only}: the frontier identifies the relevant
gauge, while raw coefficient recovery remains separate.
\item \textbf{Lower-bound only}: the paper proves a lower bound at
a stated scale.
\item \textbf{Conditional}: the conclusion depends on an explicit
hypothesis, such as the existence and positivity of a normalized
Gallager limit.
\item \textbf{Open}: the paper does not determine the gauge.
\end{itemize}

For load $r>0$, ``identified'' means matching upper and lower bounds
for $\Delta_B^*(\lceil 2^{r g(\snr)}\rceil;\snr)$ are proved.

The distinction between raw coefficient match and atomic-normalized
endpoint match matters. By Theorem~\ref{lem:binary_endpoint}
(diversity-side bridge), the binary endpoint
$\Delta_B^*(2;\snr)$ is not merely a proxy for zero-rate
reliability. It is gauge-equivalent to the optimal equal-prior
two-message testing exponent, and therefore identifies the actual
two-message zero-rate diversity gauge whenever it diverges. What may
remain model-dependent is the exact raw operational coefficient, not
the gauge.

\textbf{Scope disclaimer.}
In this paper, the frontier is evaluated model by model. For each
channel where a frontier claim is made, the gauge
equivalence is verified by direct computation. The bridge to error
probability uses the Bhattacharyya
sandwich~(Proposition~\ref{thm:bridge}), which controls
repetition-code reliability from both sides at the gauge level.
These bounds identify binary gauges and support the endpoint DIV
calculations stated in the exact cases. Raw operational reliability
coefficients and full positive-rate reliability curves require
model-specific tools, as recorded in the status labels and audit
table.

\section{Channel-by-Channel Analysis}\label{sec:channels}

Each channel follows the same template: model, intuition, operational
benchmark, frontier result, gauge recovery, and status.

\subsection{Fixed-$H$}\label{subsec:ch_det}

\textbf{Model.}
$H \in \C^{N\times M}$ is a known, deterministic matrix, and
$r_H = \mathrm{rank}(H)$. Assume $H \ne 0$. If $H = 0$, then both capacity and the binary endpoint are zero for
every~$\snr$, so the gauge statements in this subsection no longer
apply.
The input $X \in \C^{M\times T}$
satisfies $\|X\|_F^2 \le T$. Over one $T$-slot block:
\begin{equation}\label{eq:model_det}
Y = \sqrt{\snr}\,H\,X + Z, \quad Z_{ij} \sim \mathcal{CN}(0,1)
\text{ independently.}
\end{equation}

\textbf{Intuition: covering vs.\ expansion.}
The channel output (ignoring noise) is the linear map $X \mapsto HX$.
The capacity gauge and the zero-rate diversity gauge arise from two
different geometric operations on this map
(Table~\ref{tab:dof_div_dual}).

\emph{Capacity side: input dimensions.}
The image of $X \mapsto HX$ has $r_H$ independent input dimensions
per physical symbol and $Tr_H$ over one $T$-slot block; these are the
input dimensions available for carrying distinct information.
Equivalently,
$\mathrm{rank}(I_T \otimes H) = Tr_H$.
Covering the received image at AWGN noise resolution gives
\begin{equation}\label{eq:det_dof}
N_{\mathrm{cover}}(\snr) \asymp \snr^{Tr_H}.
\end{equation}
Taking logarithms gives
\begin{equation}\label{eq:dof_capacity}
C_{\mathrm{block}}(\snr) \sim Tr_H\log\snr, \qquad
C_{\mathrm{sym}}(\snr) \sim r_H\log\snr.
\end{equation}
This is an instance of Theorem~\ref{lem:sandwich}. In noise-normalized
output coordinates, set
\[
z = \sqrt{\snr}\,HX,\qquad
S_\snr = \{\sqrt{\snr}\,HX : \|X\|_F^2 \le T\}.
\]
The set $S_\snr$ is a fixed ellipsoid scaled by~$\sqrt{\snr}$ in
$d = Tr_H$ complex dimensions. Hence fixed-radius Euclidean packing
and covering numbers in $S_\snr$ both have logarithmic size
\[
d\log\snr + O(1) = Tr_H\log\snr + O(1).
\]
For equal-covariance complex Gaussians centered at $z$ and~$z'$,
\[
B(P_z, P_{z'}) = \exp\!\Big(-\frac{\|z - z'\|^2}{4}\Big),
\qquad
D(P_z \| P_{z'}) = \frac{\|z - z'\|^2}{\ln 2}.
\]
A fixed lattice packing therefore has bounded aggregate Bhattacharyya
overlap, because the corresponding Gaussian theta sum is finite.
A fixed-radius covering has bounded local KL radius.
Theorem~\ref{lem:sandwich} therefore gives the coefficient-level
capacity statement
\[
C_{\mathrm{block}}(\snr) = Tr_H\log\snr + O(1).
\]
Thus fixed-separation output-law packing recovers both the capacity
gauge $\log\snr$ and the raw block capacity coefficient~$Tr_H$.
Since the complex AWGN communication atom has coefficient~$1$ on the
$\log\snr$ gauge, the block endpoint DOF is~$Tr_H$.
Dividing by~$T$ gives the
per-physical-symbol value $r_H$.
For full-rank~$H$, $r_H = \min(M,N)$.

\emph{Diversity side: unit-normalized output reliability looks.}
For two codewords $X_1, X_2$, let $D = X_1 - X_2$. By the
same-covariance Gaussian Bhattacharyya formula,
\begin{equation}\label{eq:det_div}
d_B(X_1, X_2) = \frac{\snr}{4\ln 2}\,\|HD\|_F^2.
\end{equation}
For fixed equal-covariance AWGN, define the output-look count of the
pair by
\[
L_{\rm out}(D;H):=\frac14\|HD\|_F^2.
\]
Then
\[
d_B(X_1,X_2)=\frac{\snr}{\ln 2}L_{\rm out}(D;H).
\]
Thus $L_{\rm out}$ is the number of unit-energy, unit-gain output
reliability looks generated by the pair.

Writing $d_t$ for the $t$-th column of~$D$ and $h_\ell^\dagger$ for
the $\ell$-th row of~$H$,
\begin{equation}\label{eq:det_div_looks}
d_B(X_1, X_2) = \frac{\snr}{4\ln 2}
  \sum_{t=1}^{T}\sum_{\ell=1}^{N} |h_\ell^\dagger d_t|^2.
\end{equation}
These terms are independent Gaussian output-law factors for the same
pairwise received separation. After received-energy normalization,
they contribute additively to the number of unit-energy, unit-gain
output reliability looks.

The best binary separation uses the strongest right singular direction
of~$H$. Since $\|X_i\|_F^2 \le T$,
$\max \|H(X_1-X_2)\|_F^2 = 4T\sigma_1^2(H)$, and the binary
endpoint is
\begin{equation}\label{eq:div_exponent}
\Delta_B^*(2;\snr) = \frac{T\sigma_1^2(H)}{\ln 2}\,\snr.
\end{equation}
Equivalently,
\[
\mathrm{DIV}_{B,0}
=
\max_{\|X_i\|_F^2\le T}
\frac14\|H(X_1-X_2)\|_F^2
=
T\sigma_1^2(H).
\]
Thus $T\sigma_1^2(H)$ is the optimized unit-output-look count under
one total block power budget. It is not an unweighted count of
nonzero singular directions.

Thus $T\sigma_1^2(H)/\ln 2$ is the raw diversity coefficient.
The fixed-AWGN reliability atom has coefficient
$1/\ln 2$ on the $\snr$ gauge, so the endpoint DIV is
$T\sigma_1^2(H)$. For an unnormalized deterministic matrix~$H$, this
endpoint DIV is gain-dependent.

\emph{Examples.}
If $H = I_N$, then $r_H = N$. This is the diagonal $N$-channel
AWGN example: there are $N$ independent input dimensions, so
$C_{\mathrm{sym}}(\snr) \sim N\log\snr$,
$C_{\mathrm{block}}(\snr) \sim TN\log\snr$.
Under the total block peak-power constraint, these $N$ transmit
directions share the binary-pair energy. Therefore the maximum
unit-energy, unit-gain output-look count is $T$, not $TN$.
If $H = \mathbf{1}_N \in \C^{N\times 1}$, then
$r_H = 1$, $\sigma_1^2(H) = N$. This is the clean SIMO example:
there is one input dimension and $N$ unit-normalized output
reliability looks, because the same input energy is copied into $N$
independent output branches, so
$\Delta_B^*(2;\snr) \sim \frac{TN}{\ln 2}\,\snr$.
If $H = \mathbf{1}_N\mathbf{1}_M^\top$, then
$r_H = 1$, $\sigma_1^2(H) = MN$. Thus the rate-side calculation gives one input dimension
per physical symbol, while the binary-endpoint
calculation gives raw diversity coefficient $TMN/\ln 2$; after reliability-atom
normalization, the corresponding endpoint DIV is $TMN$.
This endpoint DIV is gain-dependent for unnormalized~$H$.
The extra factor~$M$ is coherent transmit gain: it increases received
output energy and hence the unit-output-look count; it is not an
additional count of independent receive branches.

These examples separate input dimensions from unit-normalized output
reliability looks: $H = I_N$ gives $N$~input dimensions sharing one
total binary energy budget, whereas $H = \mathbf{1}_N$ gives one
input dimension whose energy is copied into $N$~independent output
reliability looks.

\emph{Why the gauges differ.}
Covering produces logs (the log of a count); expansion produces
linear growth (a squared distance). Both operations act on the
same linear map $X \mapsto HX$, so the gauge mismatch follows
directly from the geometry.

\textbf{Operational benchmark.}
Capacity: $C_{\mathrm{block}}(\snr) \sim Tr_H\log\snr$
(metric entropy of
ellipsoids~\cite{KolmogorovTikhomirov1959}). ML error exponent:
for codeword difference $D = X_1 - X_2$,
$-\log P_e \propto \snr\,\|HD\|_F^2$ (exact Bhattacharyya
for the AWGN channel; see Appendix~\ref{app:bhatt_gaussian}).

\textbf{Frontier result.}
By Lemma~\ref{lem:bhatt_same_cov}, with
$\Sigma = I_N$ (AWGN noise covariance) and
$\mu_i = \sqrt{\snr}\,H X_i$:
\begin{equation}\label{eq:bhatt_det}
d_B(P_{Y|X_1},P_{Y|X_2})
= \frac{\snr}{4\ln 2}\|H(X_1-X_2)\|_F^2.
\end{equation}
Algebraically, this decomposes into $N\!\cdot\!\mathrm{rank}(D)$
terms, each linear in~$\snr$ with coefficient
$s_j^2\,|H_\ell^\dagger u_j|^2$, possibly zero
(Lemma~\ref{lem:rank_product}(a)); the decomposition is used in
the coherent Rayleigh analysis (Section~\ref{subsec:ch_coh}).

Let $\sigma_1 \ge \cdots \ge \sigma_{\min(M,N)}$
be the singular values of~$H$. Then
$\|HD\|_F^2 = \tr(D^\dagger H^\dagger H D)$.
Under the power constraint, the achievable difference satisfies
$\|D\|_F \le 2\sqrt{T}$ (antipodal signaling: $X_2 = -X_1$,
$\|X_i\|_F^2 \le T$). The maximum of $\|HD\|_F^2$ over this
constraint is $4T\sigma_1^2(H)$, achieved by concentrating all
energy in the top right singular direction of~$H$. Hence
$\Delta_B^*(2;\snr) = \frac{\snr}{4\ln 2}\cdot
4T\sigma_1^2(H) = \frac{T\sigma_1^2(H)}{\ln 2}\,\snr
\sim c_H\cdot\snr$,
where $c_H = T\sigma_1^2(H)/\ln 2$. This gives the zero-rate
diversity gauge~$\snr$; the raw diversity coefficient depends on $\sigma_{\max}(H)$, not on
$\|H\|_F^2$, because the binary endpoint maximizes received output
energy under one total block power budget rather than assigning a
separate unit-energy budget to every singular mode.

For the packing number, the $\delta$-separated codebook must satisfy
$\|H(X_i - X_j)\|_F^2 \ge 4\delta\ln 2/\snr$ for all $i\ne j$.
In the unscaled image space $\{HX : \|X\|_F^2 \le T\}$, the
semi-axes are proportional to the singular values $\sigma_k(H)$.
Equivalently, in the noise-normalized received image
$\{\sqrt{\snr}\,HX : \|X\|_F^2 \le T\}$, the semi-axes are
proportional to $\sqrt{\snr}\,\sigma_k(H)$. Packing this received
image at constant noise resolution is the same as packing the
unscaled image at radius $O(\snr^{-1/2})$, giving
$\log K_{\mathrm{pack}}(\delta;\snr) \sim
Tr_H\cdot\log\snr$~\cite{KolmogorovTikhomirov1959}.

\begin{theorem}[Fixed-$H$ cross-gauge endpoint]\label{thm:fixed_H}
For $H \ne 0$ with $r_H = \mathrm{rank}(H)$, the capacity gauge is
$\log\snr$ and the zero-rate diversity gauge is~$\snr$:
\begin{equation}\label{eq:fixedH_capacity}
C_{\mathrm{block}}(\snr) = Tr_H\log\snr + O(1),
\end{equation}
\begin{equation}\label{eq:fixedH_binary}
\Delta_B^*(2;\snr) = \frac{T\sigma_1^2(H)}{\ln 2}\,\snr.
\end{equation}
Block endpoint DOF is~$Tr_H$; endpoint DIV is~$T\sigma_1^2(H)$.
\end{theorem}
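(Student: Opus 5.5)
The plan treats the two displayed claims separately and then reads off the endpoint quantities by atomic normalization.

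\emph{Capacity side.} I would apply Theorem~\ref{lem:sandwich} with $g(\snr)=\log\snr$ in the sharp $O(1)$ form. Work in noise-normalized output coordinates. Write $H=U\Sigma V^\dagger$ and apply the unitary change of variables $X\mapsto V^\dagger X$ and $Y\mapsto U^\dagger Y$, which preserves the Frobenius constraint and the noise law. The channel then reduces to $d=Tr_H$ parallel complex coordinates $z_k=\sqrt{\snr}\,\sigma_{k}x_k+Z_k$; the remaining coordinates are pure noise and carry no information. The set of noiseless outputs $S_\snr$ is then an ellipsoid in $\C^d$ with semi-axes $\sqrt{\snr T}\,\sigma_k$, $k\le r_H$, each repeated $T$ times.

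For the packing condition, intersect a fixed lattice $\Lambda\subset\C^d$ of minimum distance $a$ with $S_\snr$. The number of points is $\mathrm{vol}(S_\snr)/\mathrm{vol}(\Lambda)\,(1+o(1))$, so $\log M_\snr = d\log\snr+O(1)$. The aggregate overlap is $\sup_i\sum_j B(P_i,P_j)\le\sum_{\lambda\in\Lambda}e^{-\|\lambda\|^2/4}<\infty$, which is a theta-function bound that does not depend on $\snr$.

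For the covering condition, partition the bounding box of $S_\snr$ into cubes of fixed side. Pull these back to cells of inputs and take $Q_\ell$ to be the Gaussian centered at the cell center. The local KL radius is then at most $\mathrm{diam}^2/\ln 2=O(1)$, and the number of cells is $O(\snr^{d})$. The inputs annihilated by $H$ all land in a common cell, so they are harmless. Both coefficients equal $d$, so Theorem~\ref{lem:sandwich} gives \eqref{eq:fixedH_capacity}.

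\emph{Diversity side.} By Lemma~\ref{lem:bhatt_same_cov}, $d_B=\frac{\snr}{4\ln2}\|H(X_1-X_2)\|_F^2$, so we must maximize $\|HD\|_F^2$ over admissible pairs.

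For the upper bound, I would use
\[
\|HD\|_F^2\le\sigma_1^2\|D\|_F^2 \le \sigma_1^2\,(\|X_1\|_F+\|X_2\|_F)^2\le 4T\sigma_1^2 .
\]
The first inequality holds columnwise because $\|Hd_t\|\le\sigma_1\|d_t\|$; the second is the triangle inequality followed by the constraint.

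For achievability, take $X_1=-X_2=v_1 w^\top$, where $v_1$ is the top right singular vector and $w\in\C^T$ has $\|w\|^2=T$. Then $\|X_i\|_F^2=T$ and $\|HD\|_F^2=4\sigma_1^2 T$.

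This gives \eqref{eq:fixedH_binary} exactly, for every $\snr$, not just asymptotically. Hence the zero-rate diversity gauge is $\snr$, and by Theorem~\ref{lem:binary_endpoint} it is also the two-message testing gauge.

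\emph{Normalization.} The raw coefficients are $Tr_H$ on $\log\snr$ and $T\sigma_1^2/\ln2$ on $\snr$. Dividing by the table values $\alpha_{\mathrm{comm}}=1$ and $\alpha_{\mathrm{rel}}=1/\ln 2$ gives DOF $Tr_H$ and DIV $T\sigma_1^2(H)$. Since $\log\snr$ and $\snr$ are not asymptotically equivalent, the channel is cross-gauge.

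The main obstacle is making the capacity-side conditions rigorous with $O(1)$ errors rather than $o(\log\snr)$. In particular, the lattice-point count in a growing ellipsoid must be $\snr^{d}\cdot\Theta(1)$, and the covering cells must be genuine partitions of the input set, including inputs whose images lie near the boundary of $S_\snr$. I would handle both by sandwiching the ellipsoid between unions of lattice cells of inflated and deflated ellipsoids: scaling by $1\pm O(\snr^{-1/2})$ changes the volume only by a constant factor.
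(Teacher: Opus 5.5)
Your proposal is correct and follows essentially the same route as the paper. The capacity claim comes from Theorem~\ref{lem:sandwich}, applied to fixed-resolution lattice packing and covering of the noise-normalized ellipsoid $S_\snr$ in $Tr_H$ complex dimensions, using the Gaussian theta-sum overlap bound and the $\|z-z'\|^2/\ln 2$ KL radius; the binary endpoint comes from Lemma~\ref{lem:bhatt_same_cov} and the antipodal pair along the top right singular vector, and your SVD reduction, bounding-box partition, and chain $\|HD\|_F^2\le\sigma_1^2(H)\|D\|_F^2\le 4T\sigma_1^2(H)$ only spell out steps the paper leaves implicit.
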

\begin{proof}
The capacity statement follows from Theorem~\ref{lem:sandwich}
applied to the noise-normalized ellipsoid packing
and covering, as derived above.
The binary endpoint follows from
the same-covariance formula~\eqref{eq:bhatt_det} and the
constraint-optimal antipodal pair.
\end{proof}

\emph{Cross-gauge observation and load-$r$ gauge prediction.}
At per-physical-symbol rate $R_{\mathrm{sym}} = r\log\snr$,
$0 \le r < r_H$; here $r$ is a rate-load coefficient, and the
corresponding capacity fraction is $r/r_H$.
The corresponding block rate is
$R_{\mathrm{block}} = Tr\log\snr$. Hence one $T$-slot block contains
$K = 2^{R_{\mathrm{block}}} = \snr^{Tr}$ codewords. A Euclidean
volume-packing comparison in the received image of block dimension
$Tr_H$ gives the gauge-level prediction
\begin{equation}\label{eq:det_frontier}
\Delta_B^*(K;\snr) \asymp c(H)\,\snr^{1-r/r_H}.
\end{equation}
Here $c(H)$ may depend on the singular values of~$H$ and on
codebook geometry. This is a load-$r$ frontier prediction. Matching upper and
lower bounds for the load-$r$ frontier, and comparison with an
operational reliability gauge, are not proved here and are listed as
future work.

\textbf{Gauge recovery.}
Capacity gauge: $\log\snr$ (exact). Zero-rate diversity gauge:
$\snr$ (exact). The Bhattacharyya distance equals the ML pairwise
error exponent for the AWGN channel at high SNR, so the frontier is
tight.

\textbf{Status:} capacity gauge exact; zero-rate exact; raw
diversity coefficient exact for fixed-$H$ AWGN;
endpoint DIV $T\sigma_1^2(H)$ recovered.
Classification: \textbf{cross-gauge}.

\textbf{What is known / what is new.}
Metric entropy of ellipsoids is
classical~\cite{KolmogorovTikhomirov1959}; the rank criterion is due
to~\cite{TarokhSeshadriCalderbank1998}; the optical
signal-dimension interpretation is
in~\cite{PiestunMiller2000,Miller2000}. New here:
the covering/expansion argument that produces both gauges from the
same map $X \mapsto HX$, with raw diversity coefficient
$T\sigma_1^2(H)/\ln 2$ and endpoint DIV
$T\sigma_1^2(H)$. The rank-product decomposition
(Lemma~\ref{lem:rank_product}) and its role in the genuine
$N\!\cdot\!\mathrm{rank}(D)$ diversity coefficient appear in the
coherent Rayleigh analysis (Section~\ref{subsec:ch_coh}).

\begin{table}[t]
\centering
\caption{Fixed-$H$ channel: input dimensions vs.\ unit output looks.}
\label{tab:dof_div_dual}
\small
\begin{tabular}{@{}p{1.8cm}p{3.0cm}p{3.0cm}@{}}
\toprule
& Capacity side & Diversity side \\
\midrule
Operation & Use input dimensions through~$H$ &
Use unit-normalized output looks of the pairwise received separation \\
Object & Image $\{HX : \|X\|_F^2\!\le\!T\}$ &
$L_{\rm out}(X_1\!-\!X_2;H)=\frac14\|H(X_1\!-\!X_2)\|_F^2$ \\
Additive structure & $Tr_H$ input dimensions per block &
$\sum_{t,\ell}|h_\ell^\dagger d_t|^2$ \\
Raw coefficient & $Tr_H$ block capacity coefficient &
$T\sigma_1^2(H)/\!\ln 2$ raw diversity coefficient \\
Atom coefficient & $1$ complex AWGN communication atom &
$1/\ln 2$ fixed-AWGN reliability atom \\
Endpoint DOF/DIV & block DOF $Tr_H$ &
endpoint DIV $T\sigma_1^2(H)$ \\
Gauge & $\log\snr$ & $\snr$ \\
\bottomrule
\end{tabular}

\smallskip
The capacity coefficient arises from input-dimension covering; the
raw diversity coefficient arises from optimized pairwise output
separation. The diversity-side look is normalized; it is not a bare
physical coordinate, receive antenna, or singular direction.
\end{table}

\subsection{Coherent Rayleigh MIMO}\label{subsec:ch_coh}

\textbf{Model.}
$H \in \C^{N\times M}$ with i.i.d.\
$H_{ij} \sim \mathcal{CN}(0,1)$ (Rayleigh fading), constant over
$T \ge M$ symbol slots, known to the receiver. Over one $T$-slot block:
$Y = \sqrt{\snr}\,HX + Z$ with $X \in \C^{M\times T}$,
$\|X\|_F^2 \le T$.

\textbf{Intuition.}
By the rank-product decomposition (Lemma~\ref{lem:rank_product}), for known~$H$
each output-law factor contributes $\sim\snr$ to the Bhattacharyya
distance. When $H$ is random, the deep-fade event $|H|^2\approx 0$
dominates the average, pulling each factor down to~$\sim\log\snr$
(Lemma~\ref{lem:rank_product}(b)), matching the capacity gauge.
Both capacity and diversity should live on the $\log\snr$ scale.

\textbf{Operational benchmark.}
Capacity: per physical symbol,
$C_{\mathrm{sym}}(\snr) \sim \min(M,N)\log\snr$~\cite{Telatar1999};
over one $T$-slot block,
$C_{\mathrm{block}}(\snr) \sim T\min(M,N)\log\snr$.
DMT: $d^*(r)$ is the piecewise-linear curve connecting
$(k,\,(M-k)(N-k))$, $k=0,\dots,\min(M,N)$, at per-physical-symbol rate $R_{\mathrm{sym}} = r\log\snr$,
with $P_e \doteq \snr^{-d^*(r)}$~\cite{ZhengTse2003}. Capacity
gauge: $\log\snr$. Diversity gauge: $\log\snr$. Raw per-symbol
capacity coefficient: $\min(M,N)$; equivalently, atom-normalized
per-symbol DOF $\min(M,N)$, because the complex AWGN communication
atom has coefficient~$1$. The DMT coefficient $d^*(r)$ is the
operational load-$r$ reliability coefficient; at zero load,
$d^*(0)=MN$ is the endpoint DIV.

At the gauge level, finite-dimensional output-law packing and
covering are logarithmic in~$\snr$. The exact coefficient
$T\min(M,N)$ is the operational capacity coefficient supplied by
Telatar's capacity theorem; the frontier calculation below is used
for the zero-rate Bhattacharyya endpoint.

\emph{Assumption note.} The zero-rate Bhattacharyya calculation
below only requires $T\ge M$, so that full-rank pairwise differences
can be chosen. The positive-rate Zheng--Tse DMT curve is quoted under
its standard coherent block-fading assumptions and is used here only
as an operational benchmark. This subsection does not rederive the
positive-rate DMT.

\textbf{Frontier result.}
Since $H$ is revealed to the receiver, the channel-induced output law
for a codeword $X$ is the joint law
\[
    P_{Y,H|X}=P_H P_{Y|X,H}.
\]
Therefore the Bhattacharyya coefficient between two codewords is
\[
    \bar B(X_1,X_2)
    =
    \E_H
    \left[
        B(P_{Y|X_1,H},P_{Y|X_2,H})
    \right].
\]
Conditioned on~$H$, the channel is AWGN with known channel matrix,
so by Lemma~\ref{lem:bhatt_same_cov}:
$d_B(P_{Y|X_1,H},P_{Y|X_2,H})
= \frac{\snr}{4\ln 2}\|H(X_1-X_2)\|_F^2$.
The averaged Bhattacharyya coefficient is
$\bar{B} = \E_H[B(P_{Y|X_1,H},P_{Y|X_2,H})]
= \E_H[\exp(-\frac{\snr}{4}\|HD\|_F^2)]$
where $D = X_1-X_2$.
For i.i.d.\ Rayleigh~$H$ and $D$ of rank~$m$,
$\|HD\|_F^2$ is a sum of $Nm$ independent $\chi^2$
variables scaled by the squared singular values of~$D$. At
high~$\snr$, $\bar{B} \doteq \snr^{-mN}$, hence
$\bar{d}_B(D) \sim m N\log\snr$
(each of the $Nm$~Rayleigh output-law factors contributing
$\sim\log\snr$).
For any codebook with full-rank pairwise differences
($m = \mathrm{rank}(D) = \min(M,T)$, which requires $T \ge M$),
$\bar{d}_B \sim \min(M,T)\cdot N\log\snr$: the diversity gauge
is~$\log\snr$.

Conversely, for any admissible pair $X_1,X_2$, the difference
$D=X_1-X_2$ has rank at most~$M$ and
$\|D\|_F\le 2\sqrt{T}$. Hence its nonzero singular values are
bounded independently of~$\snr$, and Lemma~\ref{lem:rank_product}
gives
\[
    \bar{d}_B(D)
    =
    N\sum_{j=1}^{\operatorname{rank}(D)}
    \log\!\left(1+\frac{\snr\sigma_j^2(D)}{4}\right)
    \le MN\log\snr+O(1).
\]
A full-rank pair with singular values bounded away from zero achieves
$MN\log\snr+O(1)$. Therefore
\[
    \Delta_B^*(2;\snr)\sim MN\log\snr
\]
for $T\ge M$.

\textbf{Gauge recovery and status.}
Capacity gauge: $\log\snr$. Zero-rate diversity gauge:
$\log\snr$. For $T \ge M$, a full-rank pair gives
$\bar{d}_B(D) \sim MN\log\snr$, matching $d^*(0) = MN$. The
Rayleigh reliability atom has coefficient~$1$, so the raw diversity
coefficient $MN$ is also the endpoint DIV\@. The positive-rate
Zheng--Tse DMT coefficient curve is governed by outage geometry and
is quoted as an operational benchmark.

\begin{table}[t]
\centering
\caption{Coherent Rayleigh MIMO: two slices of the bilinear
map $(H,X) \mapsto HX$.}
\label{tab:coh_rayleigh}
\small
\begin{tabular}{@{}p{1.6cm}p{3.1cm}p{3.1cm}@{}}
\toprule
& Rate side & Pairwise diversity side \\
\midrule
Slice & $X \mapsto HX$, & $H \mapsto HD$, \\
      & conditioned on typical~$H$ & $D = X_1\!-\!X_2$ fixed \\
Rank  & $T\!\cdot\!\min(M,N)$ &
$N\!\cdot\!\mathrm{rank}(D)$ \\
Max rank & $T\!\cdot\!\min(M,N)$ &
$N\!\cdot\!\min(M,T)$ \\
Mechanism & Covering signal image &
Averaging over Rayleigh fades \\
Gauge & $\log\snr$ & $\log\snr$ \\
\bottomrule
\end{tabular}
\end{table}

Classification: \textbf{same-gauge}.

\subsection{Noncoherent Block Fading}\label{subsec:ch_block}

\textbf{Model.}
$H \in \C^{M\times N}$ with i.i.d.\
$\mathcal{CN}(0,1)$ entries, constant over $T$~slots (coherence
length~$T$), unknown to the receiver. Assume $N \ge M$ and $T \ge 2M$, so the active noncoherent dimension
is~$M$. Per block:
$Y = \sqrt{\snr}\,XH + Z$ with $X \in \C^{T\times M}$,
$\|X\|_F^2 \le T$. For scaled-unitary inputs
($X^\dagger X = (T/M)\,I_M$), the output depends on $X$ only
through its column space~\cite{ZhengTse2002}, so the signal space
is the Grassmannian $\mathrm{Gr}(M,T)$. The prelog
$M(T-M)/T$ and the principal-angle calculation below use this
Grassmannian structure.

\textbf{Intuition.}
Subspace packing on the Grassmannian replaces point packing in
Euclidean space. Over one $T$-slot block, the number of resolvable
subspaces scales as $\snr^{M(T-M)}$, so
$C_{\mathrm{block}}(\snr) \sim M(T-M)\log\snr$. Dividing by~$T$
gives the conventional per-physical-symbol capacity
$C_{\mathrm{sym}}(\snr) \sim \frac{M(T-M)}{T}\log\snr$. Random fading pulls the binary endpoint from the fixed-$H$
linear~$\snr$ scale to the logarithmic $\log\snr$ scale, as in
coherent MIMO.

\textbf{Operational benchmark.}
Capacity: $C(\snr) \sim M(T-M)/T\cdot\log\snr$ per
symbol~\cite{ZhengTse2002}. Noncoherent DMT:
$d^*(0) = MN$~\cite{ZhengTse2002Allerton}.
Capacity gauge: $\log\snr$. Diversity gauge: $\log\snr$.

\textbf{Frontier result.}
The output per block is $Y = \sqrt{\snr}\,XH + Z$, which has
distribution $\mathcal{CN}(0, \snr XX^\dagger + I_T)$ per receive
antenna (marginalizing over~$H$). By
Lemma~\ref{lem:bhatt_same_mean} applied to the $N$~independent
receive-antenna outputs:
\begin{equation}\label{eq:bhatt_block}
d_B(P_{Y|X_1},P_{Y|X_2}) = N\log
\frac{\det\!\big(\frac{\Sigma_1+\Sigma_2}{2}\big)}
{\sqrt{\det\Sigma_1\cdot\det\Sigma_2}},
\end{equation}
where $\Sigma_i = \snr X_i X_i^\dagger + I_T$.

For two scaled-unitary inputs $X_1, X_2 \in \C^{T\times M}$
satisfying $X_i^\dagger X_i = (T/M)\,I_M$, spanning
distinct $M$-dimensional subspaces with principal angles
$\theta_1,\dots,\theta_M$, the eigenvalues of $\Sigma_i$ are
$\snr\,(T/M)+1$ ($M$~times) and $1$ ($T-M$~times).
(Equivalently, one may write the calculation with unitary
representatives after replacing $\snr$ by $\snr T/M$; this changes
only $O(1)$ terms and does not affect the $\log\snr$ coefficient.)
With the original scaled-unitary normalization, the nontrivial
eigenvalues of $(\Sigma_1+\Sigma_2)/2$ are
$1+\frac{\snr T}{2M}(1\pm\cos\theta_k)$ for $k=1,\dots,M$
and $1$ ($T-2M$~times). At high~$\snr$, each principal-angle
pair contributes
$\log\!\big(1+\frac{(\snr T/M)^2\sin^2\theta_k}{4(\snr T/M+1)}\big)
\sim \log(\snr\sin^2\theta_k/4) + O(1)$
when $\sin\theta_k > 0$, giving
$d_B \sim N\sum_{k=1}^M \log\!\big(\snr\sin^2\theta_k/4\big)
= NM\log\snr + O(1)$.
For maximally separated subspaces ($\sin\theta_k = 1$ for
all~$k$, which requires $T \ge 2M$): $d_B \sim NM\log\snr$.

The packing number on the Grassmannian satisfies
\[
\log K_{\mathrm{pack}} \sim M(T-M)\log\snr.
\]
This is the coefficient-level Grassmannian instance of
Theorem~\ref{lem:sandwich}: fixed-resolution subspace packing and
covering have entropy coefficient $M(T-M)$, and the local covariance
KL radius is bounded at that resolution. Hence
\[
C_{\mathrm{block}}(\snr) \sim M(T-M)\log\snr,
\]
equivalently
$C_{\mathrm{sym}}(\snr) \sim \frac{M(T-M)}{T}\log\snr$.
The lower bound gives
$\Delta_B^*(2;\snr) \ge NM\log\snr + O(1)$.

\emph{Upper bound.} Let $A_i=X_iX_i^\dagger$, so
$\operatorname{rank}(A_i)\le M$ and $\operatorname{tr}(A_i)\le T$.
Write $\Sigma_i=I+\snr A_i$. Then
$(\Sigma_1+\Sigma_2)/2 = I+(\snr/2)(A_1+A_2)$.
By Lemma~\ref{lem:det_submult}, with $X=(\snr/2)A_1$ and
$Y=(\snr/2)A_2$,
\[
    \det\!\left(I+\frac{\snr}{2}(A_1+A_2)\right)
    \le
    \det\!\left(I+\frac{\snr}{2}A_1\right)
    \det\!\left(I+\frac{\snr}{2}A_2\right)
    \le
    \det(I+\snr A_1)\det(I+\snr A_2).
\]
Therefore
\begin{align*}
    \log
    \frac{
        \det((\Sigma_1+\Sigma_2)/2)
    }{
        \sqrt{\det\Sigma_1\det\Sigma_2}
    }
    &\le
    \frac{1}{2}\log\det\Sigma_1
    +
    \frac{1}{2}\log\det\Sigma_2  \\
    &\le
    M\log\snr+O(1),
\end{align*}
because each $A_i$ has rank at most~$M$ and trace bounded by~$T$.
Multiplying by the $N$~independent receive antennas gives
\[
    d_B(P_{Y|X_1},P_{Y|X_2})
    \le
    NM\log\snr+O(1).
\]
The maximally separated scaled-unitary construction above gives the
matching lower bound. Hence
\[
    \Delta_B^*(2;\snr)\sim NM\log\snr.
\]

\textbf{Gauge recovery and status.}
Capacity gauge: $\log\snr$. Zero-rate diversity gauge: $\log\snr$.
The binary endpoint gives
$\Delta_B^*(2;\snr) \sim NM\log\snr$, matching the zero-rate
noncoherent DMT coefficient~$MN$. The reliability atom has
coefficient~$1$, so the raw diversity coefficient $MN$ is also the
endpoint DIV\@. The load-$r$ frontier gauge is left for future
work.

Classification: \textbf{same-gauge}.

\subsection{Noncoherent Fast Fading}\label{subsec:ch_fast}

\textbf{Model.}
A single-antenna transmitter sends scalar $X_t$ per symbol slot;
$N$~receive antennas each see an independent fade:
\begin{equation}\label{eq:model_fast}
Y_t^{(j)} = \sqrt{\snr}\,H_t^{(j)} X_t + Z_t^{(j)},
\quad j=1,\dots,N,
\end{equation}
where $H_t^{(j)} \sim \mathcal{CN}(0,1)$ and
$Z_t^{(j)} \sim \mathcal{CN}(0,1)$ are mutually independent, drawn
independently at each symbol slot, unknown to the receiver.
Peak power: $|X_t| \le 1$.

\textbf{Intuition.}
Fading radializes the channel: the output depends on $x$ only
through $|x|^2$ (Lemma~\ref{lem:energy_only}). The output family is
a scale family $\{\mathcal{CN}(0,v): v=\snr|x|^2+1\}$ parameterized
by the variance $v\in[1,1+\snr]$. In log-variance coordinates
$u = \ln v \in [0,L]$ with $L = \ln(1+\snr)\sim\ln\snr$, the
Bhattacharyya distance depends only on the spacing
$|u_1-u_2|$. The packing space is a one-dimensional interval of
length $\sim\ln\snr$, so the number of distinguishable points is
$\sim\log\snr$, giving capacity gauge $\log\log\snr$.
(Section~\ref{subsec:int_fast}).

\textbf{Operational benchmark.}
Capacity: $C(\snr)\sim\log\log\snr$~\cite{LapidothMoser2003}.
Capacity gauge: $\log\log\snr$. Raw capacity coefficient on the
$\log\log\snr$ gauge: $1$, matching the scalar fast-fading
communication atom. Zero-rate diversity gauge: identified below by
the binary endpoint.

\textbf{Frontier result.}

\begin{lemma}[Energy-only reduction]\label{lem:energy_only}
Each output law depends on $x$ only through $S = |x|^2$:
$P_{Y^{(j)}|X=x}^{(\snr)} = \mathcal{CN}(0,\snr|x|^2+1)$.
The $N$ outputs are conditionally independent given~$X$.
\end{lemma}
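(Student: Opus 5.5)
The plan is a direct conditional computation, treating each receive antenna separately and then assembling the joint law from the independence structure of the model~\eqref{eq:model_fast}.

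\emph{Step 1 (one antenna).} Fix $x$ and $j$. Conditioned on $X_t = x$, the output $Y_t^{(j)} = \sqrt{\snr}\,H_t^{(j)}x + Z_t^{(j)}$ is a linear combination of the two independent circularly symmetric Gaussians $H_t^{(j)}$ and $Z_t^{(j)}$. It is therefore circularly symmetric complex Gaussian with mean zero. Its variance is
\[
\E|Y_t^{(j)}|^2 = \snr|x|^2\,\E|H_t^{(j)}|^2 + \E|Z_t^{(j)}|^2 = \snr|x|^2 + 1 .
\]
The cross term vanishes by independence and zero means. Circular symmetry is preserved because multiplying $H_t^{(j)}$ by the fixed phase of $x$ leaves its law unchanged. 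Hence $P_{Y^{(j)}|X=x}^{(\snr)} = \mathcal{CN}(0,\snr|x|^2+1)$, and this law depends on $x$ only through $S = |x|^2$.

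\emph{Step 2 (conditional independence).} Given $X_t = x$, each $Y_t^{(j)}$ is a deterministic function of the pair $(H_t^{(j)}, Z_t^{(j)})$. These pairs are mutually independent across $j$, and they are independent of the input. A measurable function applied to each of a collection of independent variables gives independent outputs. So the $N$ outputs are conditionally independent given $X$. The joint conditional law is the product $\prod_{j=1}^N \mathcal{CN}(0,\snr S+1)$, which again depends on $x$ only through $S$. The same argument applies slot by slot, using the independence of the fades and noise over $t$.

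\emph{Obstacle.} There is no real obstacle; the one point needing care is to state clearly that $H$ is unknown to the receiver. The output law must therefore be the marginal over $H$, not the law conditioned on $H$ used in Section~\ref{subsec:ch_coh}. It is exactly this marginalization that removes the phase of $x$ and yields the radialization. This product structure is what later allows Lemma~\ref{lem:mult} to multiply the one-antenna Bhattacharyya distance by $N$.
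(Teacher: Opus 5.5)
Your proposal is correct and follows the same route as the paper: marginalize over the unknown fade $H^{(j)}$ to obtain $\mathcal{CN}(0,\snr|x|^2+1)$, and get conditional independence across antennas from the mutual independence of the pairs $(H^{(j)},Z^{(j)})$. You fill in details the paper leaves implicit, namely the variance computation, the preservation of circular symmetry, and the independence of fades and noise from the input.
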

\begin{proof}
Marginalizing over $H^{(j)} \sim \mathcal{CN}(0,1)$:
$Y^{(j)} = \sqrt{\snr}\,H^{(j)} x + Z^{(j)}$
has distribution $\mathcal{CN}(0, \snr|x|^2 + 1)$, depending
on~$x$ only through $|x|^2$. Independence across~$j$ follows from
the independence of $\{H^{(j)}, Z^{(j)}\}$.
\end{proof}

\begin{lemma}[Bhattacharyya distance for scale families]
\label{lem:scale_bhatt}
The per-antenna Bhattacharyya distance between $P_{v_1}$ and
$P_{v_2}$ is (in base-2 logarithms)
\begin{equation}\label{eq:bhatt_scale}
d_B^{\mathrm{single}}(v_1,v_2) =
\log_2\cosh\!\big(\tfrac{\ln v_1 - \ln v_2}{2}\big).
\end{equation}
\end{lemma}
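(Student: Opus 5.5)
The plan is a direct computation of the Bhattacharyya coefficient between two zero-mean circularly symmetric complex Gaussian laws $P_{v_1}=\mathcal{CN}(0,v_1)$ and $P_{v_2}=\mathcal{CN}(0,v_2)$, followed by rewriting the result in log-variance coordinates. By Lemma~\ref{lem:energy_only}, each per-antenna output law is of this form with $v_i=\snr|x_i|^2+1$, so no other structure is needed. One could also specialize Lemma~\ref{lem:bhatt_same_mean} to the scalar case $\Sigma_i=v_i$. I prefer a self-contained computation, since it is only a one-dimensional Gaussian integral.

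The first step writes the densities $p_i(y)=\frac{1}{\pi v_i}\exp(-|y|^2/v_i)$ on $\C\cong\mathbb{R}^2$. Then
\[
\sqrt{p_1p_2}=\frac{1}{\pi\sqrt{v_1v_2}}\exp\!\Big(-|y|^2\,\tfrac{1}{2}\big(\tfrac{1}{v_1}+\tfrac{1}{v_2}\big)\Big).
\]
The second step integrates using $\int_{\C}\exp(-a|y|^2)\,dy=\pi/a$ with $a=\frac{v_1+v_2}{2v_1v_2}$. This gives
\[
B(P_{v_1},P_{v_2})=\frac{1}{\sqrt{v_1v_2}}\cdot\frac{2v_1v_2}{v_1+v_2}=\frac{2\sqrt{v_1v_2}}{v_1+v_2}.
\]
Note that the complex (two real dimensions) normalization yields exponent $1$ on this ratio, not the $1/2$ that would appear for a real Gaussian. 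This is the one place where care is needed.

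The third step introduces $u_i=\ln v_i$. Dividing the numerator and denominator by $\sqrt{v_1v_2}$ gives
\[
\frac{v_1+v_2}{2\sqrt{v_1v_2}}=\frac{e^{(u_1-u_2)/2}+e^{-(u_1-u_2)/2}}{2}=\cosh\!\Big(\frac{u_1-u_2}{2}\Big).
\]
Hence $d_B=-\log_2 B=\log_2\cosh\big(\frac{\ln v_1-\ln v_2}{2}\big)$, which is \eqref{eq:bhatt_scale}. The main obstacle is merely bookkeeping: the complex-Gaussian normalization determines the power of the ratio, and the natural logarithm sits inside the $\cosh$ argument while the base-2 logarithm applies outside. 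As a final remark, the expression depends only on $|u_1-u_2|$, since $\cosh$ is even. This is the translation-invariance property used for packing on the interval $[0,\ln(1+\snr)]$.
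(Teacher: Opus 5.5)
Your proposal is correct and follows essentially the paper's route. The paper specializes Lemma~\ref{lem:bhatt_same_mean} to scalar covariances to get $B = 2\sqrt{v_1v_2}/(v_1+v_2) = 1/\cosh(\Delta/2)$, whereas you evaluate that same one-dimensional complex-Gaussian integral directly (with the correct complex normalization) before passing to log-variance coordinates.
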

\begin{proof}
By Lemma~\ref{lem:bhatt_same_mean} with $\Sigma_i = v_i$ (scalar):
$B = 2\sqrt{v_1 v_2}/(v_1+v_2)
= 1/\cosh(\Delta/2)$,
where $\Delta = |\ln v_1 - \ln v_2|$, giving
$d_B = \log\cosh(\Delta/2)$.
\end{proof}

By multiplicativity, $N$~receive antennas give total Bhattacharyya
distance $d_B = N \cdot d_B^{\mathrm{single}}$.

\begin{lemma}[Gauge class under peak vs.\ average power]
\label{lem:constraint}
Under both peak power ($|X| \le 1$) and average power
($\E[|X|^2] \le 1$), the packing complexity satisfies
$K_{\mathrm{pack}}(\delta;\snr) = \log\log\snr + O(1)$: gauge
class $\log\log\snr$.
\end{lemma}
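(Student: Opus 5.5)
We reduce both cases to packing a one-dimensional interval in log-variance coordinates, using the energy-only reduction (Lemma~\ref{lem:energy_only}) and the scale-family formula (Lemma~\ref{lem:scale_bhatt}).

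\emph{Step 1: the metric.} By Lemma~\ref{lem:energy_only}, a codeword $x$ matters only through $v=\snr|x|^2+1$, equivalently through $u=\ln v$. With $N$ antennas, Lemma~\ref{lem:scale_bhatt} and multiplicativity give
\[
d_B = N\log_2\cosh\!\big(|u_1-u_2|/2\big).
\]
This is a strictly increasing function of $|u_1-u_2|$ alone. Hence the condition $d_B\ge\delta$ is equivalent to $|u_1-u_2|\ge \Delta_\delta$, where
\[
\Delta_\delta := 2\,\mathrm{arccosh}\big(2^{\delta/N}\big)>0
\]
is a constant independent of $\snr$. Distinct codewords with equal $|x|$ have $d_B=0$, so a $\delta$-separated codebook is the same thing as a $\Delta_\delta$-separated set of reals $u$.

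\emph{Step 2: peak power.} Under $|x|\le 1$, the range of $u$ is $[0,L]$ with $L=\ln(1+\snr)$. The maximal $\Delta_\delta$-separated subset of an interval of length $L$ has exactly $\lfloor L/\Delta_\delta\rfloor+1$ points, realized by equispaced points. Therefore
\[
N_{\mathrm{pack}}(\delta;\snr)=\frac{\ln\snr}{\Delta_\delta}+O(1),
\qquad
K_{\mathrm{pack}}=\log\ln\snr+O(1)=\log\log\snr+O(1).
\]

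\emph{Step 3: average power.} The lower bound is immediate, since every peak-feasible codebook satisfies $\frac1{|\setC|}\sum|x|^2\le1$. For the upper bound, let the points be ordered $u_1<\dots<u_K$ with gaps at least $\Delta_\delta$. Then
\[
u_k\ge (k-1)\Delta_\delta,
\qquad\text{so}\qquad
|x_k|^2=\frac{e^{u_k}-1}{\snr}\ge \frac{e^{(k-1)\Delta_\delta}-1}{\snr}.
\]
The average-power constraint gives
\[
\sum_{k=1}^K \big(e^{(k-1)\Delta_\delta}-1\big)\le K\snr .
\]
The sum on the left is at least $c\,e^{(K-1)\Delta_\delta}-K$ for a constant $c>0$, which forces
\[
(K-1)\Delta_\delta\le \ln\snr+\ln(2K)+O(1).
\]
Since trivially $K\le \snr^{O(1)}$, this yields $K\le \ln\snr/\Delta_\delta + O(\log\log\snr)$. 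Taking logarithms, the $O(\log\log\snr)$ additive slack becomes an $o(1)$ correction to $\log K$, so
\[
K^{\mathrm{avg}}_{\mathrm{pack}}=\log\log\snr+O(1).
\]

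\emph{Main obstacle.} The main obstacle is Step~3. Average power lets a few codewords sit at very large energy, which extends the usable $u$-range, so one must rule out extra packing. The exponential growth of energy in $u$ is what makes the excess only additive ($\ln K$) in $u$-length, and hence negligible after taking $\log$. The delicate point is keeping the constants uniform in $\snr$, so that the conclusion is genuinely $O(1)$ in $K_{\mathrm{pack}}$ rather than only a statement about the gauge.
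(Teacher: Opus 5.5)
Your proof is correct and follows essentially the same route as the paper. Both reduce to $\Delta_\delta$-separated points in the log-variance coordinate, count equispaced points on $[0,\ln(1+\snr)]$ under peak power, and, under average power, use that the top point satisfies $(K-1)\Delta_\delta\le\ln(1+K\snr)$ before solving the self-bounding inequality; your gap-sum bound, keeping only its largest term, is exactly the paper's observation $|x_i|^2\le K$. Your version is slightly sharper: it makes explicit $N_{\mathrm{pack}}^{\mathrm{avg}}\le \ln\snr/\Delta_\delta+O(\log\log\snr)$, the same leading constant as under peak power, whereas the paper only records $O(\log\snr)$.
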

\begin{proof}
\emph{Peak power.}
The log-variance coordinate $u\in[0,L]$ with $L=\ln(1+\snr)$.
Codebooks with Bhattacharyya separation $\ge\delta$ require
$u$-coordinates that are $c(\delta)$-separated, so
$N_{\mathrm{pack}} \le 1+L/c(\delta)$. Equally spaced points
achieve the matching lower bound:
$K_{\mathrm{pack}} = \log\log\snr + O(1)$.

\emph{Average power.}
$\setP^{(\snr)}_{\mathrm{avg}} \supseteq
\setP^{(\snr)}_{\mathrm{peak}}$ gives
$K_{\mathrm{pack}}^{\mathrm{avg}} \ge \log\log\snr + O(1)$.
For the upper bound, consider a size-$K$ codebook
$\{x_1,\dots,x_K\}$ with average power
$(1/K)\sum_{i=1}^{K}|x_i|^2 \le 1$. Each codeword satisfies
$|x_i|^2 \le K$ (otherwise a single codeword would exhaust the
average-power budget), so the log-variance coordinates satisfy
$u_i = \ln(\snr|x_i|^2+1) \le \ln(1+K\snr)$.
The Bhattacharyya separation $\ge\delta$ requires
$c(\delta)$-spacing, so
$K \le 1 + \ln(1+K\snr)/c(\delta)$.
Since $\ln(1+K\snr) \le \ln K + \ln\snr + O(1)$, this gives
$K \le C_\delta\,\ln(K\snr)$, which implies
$K = O(\log\snr)$ (the self-bounding inequality
$K \le C_\delta\ln(K\snr)$ forces $K$ to grow at most
logarithmically). Hence
$K_{\mathrm{pack}}^{\mathrm{avg}} \le \log\log\snr + O(1)$.
\end{proof}

The capacity-gauge statement is also an instance of
Theorem~\ref{lem:sandwich}. Under peak power, the output-law family is
the scale family
\[
P_u = \mathcal{CN}(0, e^u),\qquad
u = \ln(1+\snr|x|^2) \in [0, L_\snr],
\]
where
\[
L_\snr = \ln(1+\snr) \sim \ln\snr.
\]
For $N$ receive antennas,
\[
B(P_u^{\otimes N}, P_{u'}^{\otimes N})
= \cosh\!\Big(\frac{|u-u'|}{2}\Big)^{-N}.
\]
Choose a fixed spacing $a > 0$ and place grid points $u_i = ia$ in
$[0, L_\snr]$. Then
\[
M_\snr \asymp L_\snr \asymp \log\snr,\qquad
\log M_\snr = \log\log\snr + O(1).
\]
The aggregate Bhattacharyya overlap is bounded uniformly in~$\snr$:
\[
\sup_i \sum_j B(P_{u_i}^{\otimes N}, P_{u_j}^{\otimes N})
\le 1 + 2\sum_{k=1}^{\infty}
\cosh\!\Big(\frac{ka}{2}\Big)^{-N} < \infty.
\]
Conversely, cover $[0, L_\snr]$ by intervals of fixed length~$a$.
The number of intervals is $\asymp\log\snr$. If $u, u'$ lie in the
same interval, then
\[
D(P_u^{\otimes N} \| P_{u'}^{\otimes N})
= \frac{N(u' - u + e^{u-u'} - 1)}{\ln 2}
\]
is bounded by a constant depending only on~$a$ and~$N$.
Theorem~\ref{lem:sandwich} therefore gives the coefficient-level
capacity statement
\[
C(\snr) = \log\log\snr + O(1).
\]
Thus fixed-separation packing of the radialized scale family recovers
both the capacity gauge $\log\log\snr$ and the raw capacity
coefficient~$1$. Since the scalar fast-fading communication atom has
coefficient~$1$ on the $\log\log\snr$ gauge, the endpoint DOF
is~$1$. A self-contained structural version of the scale-family
packing argument is given in Proposition~\ref{prop:scale_family}.

\begin{theorem}[Exact fast-fading frontier and endpoint]
\label{thm:cross}
For the fast-fading model with $N$~receive antennas, the single-use
Bhattacharyya frontier at
\[
    K=\lceil(\log\snr)^r\rceil,\qquad 0<r<1,
\]
satisfies
\begin{equation}\label{eq:cross_tradeoff}
\Delta_B^*\big(\lceil(\log\snr)^r\rceil;\,\snr\big)
\;\sim\; \tfrac{N}{2}\,(\log\snr)^{1-r}.
\end{equation}
Thus the load-$r$ frontier gauge is exactly
$(\log\snr)^{1-r}$.
At the binary endpoint $K=2$,
\begin{equation}\label{eq:fast_binary}
\Delta_B^*(2;\snr) \;\sim\; \tfrac{N}{2}\,\log\snr.
\end{equation}
At $r=1$: $\Delta_B^* \to N\log_2\!\cosh(\ln 2/2)$.
Consequently, for every $\varepsilon>0$ and all sufficiently
large~$\snr$, $n$~independent repetitions of the same single-use codeword
give a repetition code satisfying
\[
P_e \le
\big(\lceil(\log\snr)^r\rceil-1\big)\,
2^{-n(N/2-\varepsilon)(\log\snr)^{1-r}}.
\]
Equivalently,
$P_e \le 2^{r\log\log\snr - n(N/2+o(1))(\log\snr)^{1-r}}$.
\end{theorem}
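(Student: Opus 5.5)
The plan is to reduce the frontier to an exact one-dimensional max--min spacing problem and then read off the asymptotics. By Lemma~\ref{lem:energy_only}, the output law of a codeword $x$ with $|x|\le 1$ is $\mathcal{CN}(0,v)^{\otimes N}$ with $v=\snr|x|^2+1$. As $|x|^2$ ranges over $[0,1]$, the log-variance coordinate $u=\ln v$ ranges over the whole interval $[0,L]$, where $L=\ln(1+\snr)$. By Lemma~\ref{lem:scale_bhatt} and multiplicativity (Lemma~\ref{lem:mult}), the pairwise distance is
\[
  d_B=N\log_2\cosh\!\big(|u-u'|/2\big),
\]
which is strictly increasing in $|u-u'|$. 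Hence $\Delta_B^*(K;\snr)=N\log_2\cosh\!\big(\Delta^*_K/2\big)$, where $\Delta^*_K$ is the largest achievable minimum gap among $K$ points in $[0,L]$.

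\emph{Step 1: exact spacing.} I will show $\Delta^*_K=L/(K-1)$. For the upper bound, sort the $K$ points in $[0,L]$; the $K-1$ consecutive gaps sum to at most $L$, so the smallest gap is at most $L/(K-1)$. For the lower bound, the equally spaced points $u_i=iL/(K-1)$, $i=0,\dots,K-1$, are admissible and attain this gap. This gives the closed form
\[
  \Delta_B^*(K;\snr)=N\log_2\cosh\!\Big(\frac{L}{2(K-1)}\Big).
\]

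\emph{Step 2: asymptotics.} I will use $\log_2\cosh z=z/\ln 2-1+o(1)$ as $z\to\infty$, together with $\log_2\cosh z\to\log_2\cosh z_0$ when $z\to z_0$. Note that $L/\ln 2=\log_2(1+\snr)\sim\log\snr$.
\begin{itemize}
\item For $K=2$: $z=L/2$, so $\Delta_B^*(2;\snr)\sim (N/2)\log\snr$.
\item For $K=\lceil(\log\snr)^r\rceil$ with $0<r<1$: $K-1\sim(\log\snr)^r$, so $z\asymp(\log\snr)^{1-r}\to\infty$ and $z/\ln 2\sim \tfrac12(\log\snr)^{1-r}$. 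The additive $-1$ and the ceiling are lower order, which gives \eqref{eq:cross_tradeoff}.
\item For $r=1$: $K-1\sim\log_2\snr$, so $L/(K-1)\to\ln 2$, and continuity of $\cosh$ gives the limit $N\log_2\cosh(\ln 2/2)$.
\end{itemize}

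\emph{Step 3: reliability consequence.} Apply Corollary~\ref{cor:optimal_packing} with $n$ repetitions of the equally spaced codebook. Given $\varepsilon>0$, Step~2 yields $\Delta_B^*\ge (N/2-\varepsilon)(\log\snr)^{1-r}$ for all large $\snr$, which gives the stated bound. The equivalent form follows from $\log(K-1)\le r\log\log\snr$.

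The main obstacle is only bookkeeping. One point is the base conventions: $K$ is defined through base-2 $\log\snr$ while $L$ is a natural log, and these must combine to give exactly $N/2$. The other is checking that admissibility covers the full interval $[0,L]$, including the endpoint $u=0$ (i.e., $x=0$), since the peak constraint allows zero energy. The optimality argument itself is the elementary pigeonhole bound of Step~1.
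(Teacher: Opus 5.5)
Your proposal is correct and follows essentially the same route as the paper. You use equally spaced log-variance levels for achievability, the pigeonhole bound together with strict monotonicity of $\log_2\cosh$ for the converse, the large-argument expansion $\log_2\cosh z = z/\ln 2 - 1 + o(1)$ for the asymptotics, and Corollary~\ref{cor:optimal_packing} for the repetition-code bound. Your version is slightly more complete: it records the exact closed form $\Delta_B^*(K;\snr)=N\log_2\cosh\bigl(L/(2(K-1))\bigr)$ for every $K$ and explicitly checks the $K=2$ and $r=1$ cases, which the paper's proof leaves implicit.
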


\begin{proof}
\emph{Achievability.} Place $K = \lceil(\log\snr)^r\rceil$ equally
spaced log-variance levels in $[0,L]$. The spacing is
$\Delta_u = L/(K-1) = (\ln 2)(\log\snr)^{1-r}(1+o(1))$.
For $r < 1$, $\Delta_u \to \infty$; using the large-argument
expansion $\log\cosh(\Delta/2) = \Delta/(2\ln 2) - 1 + O(e^{-\Delta})$:
$d_B^{\mathrm{single}}(\Delta_u) = \tfrac{1}{2}(\log\snr)^{1-r}
(1+o(1))$. Total over $N$~antennas:
$\Delta_{\min}^{\mathrm{ach}}
= (N/2)(\log\snr)^{1-r}(1+o(1))$.
Substituting this $\Delta_{\min}$ into
Corollary~\ref{cor:optimal_packing} gives the stated
$\varepsilon$-form of the repetition-code reliability bound.

\emph{Converse.} Pigeonhole: any $K$ points in $[0,L]$ have minimum
spacing $\le L/(K-1)$. Since~\eqref{eq:bhatt_scale} is strictly
increasing in $|u_1-u_2|$, $\Delta_B^*(K;\snr) \le
N\log\cosh(L/(2(K-1))) = (N/2)(\log\snr)^{1-r}(1+o(1))$.
\end{proof}

\begin{corollary}[Fast-fading endpoint DIV]
\label{cor:fast_div}
By~\eqref{eq:fast_binary}, the binary endpoint has raw coefficient
$N/2$ on the $\log\snr$ gauge.
The scalar fast-fading reliability atom has
coefficient $1/2$ on this gauge. Therefore the
endpoint DIV is
\[
  \mathrm{DIV}_{B,0} = \frac{N/2}{1/2} = N.
\]
For $0 < r < 1$, Theorem~\ref{thm:cross} gives the load-$r$
frontier $(N/2)(\log\snr)^{1-r}$.
\end{corollary}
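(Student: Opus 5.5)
The corollary is essentially bookkeeping on top of Theorem~\ref{thm:cross}, so I would prove it in three short steps.

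\emph{Step 1: the raw coefficient.} I would read off $c_{\mathrm{div}} = N/2$ from \eqref{eq:fast_binary}, namely $\Delta_B^*(2;\snr)\sim (N/2)\log\snr$. For completeness I would rederive this directly from the $K=2$ case of the achievability/converse argument in the proof of Theorem~\ref{thm:cross}.
\begin{itemize}
\item By Lemma~\ref{lem:energy_only} and Lemma~\ref{lem:scale_bhatt}, the best pair uses the endpoint inputs $x=0$ and $|x|=1$. These give log-variances $u=0$ and $u=L=\ln(1+\snr)$.
\item Any admissible pair has $|u_1-u_2|\le L$, and \eqref{eq:bhatt_scale} is increasing in $|u_1-u_2|$. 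Hence $\Delta_B^*(2;\snr)=N\log_2\cosh(L/2)$ exactly.
\item The large-argument expansion $\log_2\cosh(L/2)=L/(2\ln2)-1+o(1)$ and $L/\ln 2=\log(1+\snr)\sim\log\snr$ then give $\Delta_B^*(2;\snr)=(N/2)\log\snr+O(1)$.
\end{itemize}
Since this diverges, Theorem~\ref{lem:binary_endpoint} certifies that $\log\snr$ is the zero-rate diversity gauge $\psi_0$.

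\emph{Step 2: the atom coefficient.} I would justify $\alpha_{\mathrm{rel}}=1/2$ by applying the same computation with $N=1$. One scale-family output-law factor has binary endpoint $\log_2\cosh(L/2)\sim\frac12\log\snr$. This matches the atom-table entry for the fast-fading scale-family reliability atom.

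The multiplicativity in Lemma~\ref{lem:mult} then shows that the $N$ conditionally independent receive outputs of Lemma~\ref{lem:energy_only} contribute $N$ such factors. The division in the definition of endpoint DIV therefore gives
\[
\mathrm{DIV}_{B,0}=\frac{c_{\mathrm{div}}}{\alpha_{\mathrm{rel}}}=\frac{N/2}{1/2}=N.
\]
So the endpoint DIV counts the independent looks.

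\emph{Step 3: the load-$r$ statement.} For $0<r<1$, the final sentence of the corollary is a restatement of \eqref{eq:cross_tradeoff}. No new argument is needed.

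The only step with any content is the exactness of the binary endpoint, that is, confirming that the optimum is attained by the extreme energy levels. Monotonicity of $\cosh$ together with the peak constraint $|X|\le 1$ settles this immediately, so I expect no real obstacle.
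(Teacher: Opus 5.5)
Your proposal is correct and follows the paper's own argument, which reads off the raw coefficient $N/2$ from \eqref{eq:fast_binary}, divides by the tabulated scale-family atom coefficient $1/2$, and restates \eqref{eq:cross_tradeoff} for the load-$r$ part. Your direct $K=2$ computation $\Delta_B^*(2;\snr)=N\log_2\cosh(L/2)=(N/2)\log\snr+O(1)$ and your $N=1$ check of $\alpha_{\mathrm{rel}}=1/2$ are useful elaborations, since the proof of Theorem~\ref{thm:cross} only treats $K=\lceil(\log\snr)^r\rceil$ explicitly; note, though, that $\psi_0=\log\snr$ holds by the definition of the zero-rate diversity gauge, and Theorem~\ref{lem:binary_endpoint} only supplies its operational meaning.
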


Together with the Bhattacharyya
sandwich~(Proposition~\ref{thm:bridge}), this identifies the load-$r$ frontier gauge for
$K=\lceil(\log\snr)^r\rceil$, and the
corresponding repetition-code gauge, as $(\log\snr)^{1-r}$.
The repetition code has $K$ messages over $n$~uses, so its block rate
is $(\log K)/n \to 0$; the load parameter~$r$ governs the single-use
codebook size, not the $n$-block rate. The fully optimized Shannon
reliability function at positive block rate, optimized over arbitrary
$n$-block codes, is a separate operational object. At
the binary endpoint, the raw diversity coefficient is~$N/2$, and the
endpoint DIV is~$N$. For $0 < r < 1$, the same numerical coefficient
$N/2$ appears as the load-$r$ coefficient on the gauge
$(\log\snr)^{1-r}$.

\textbf{Gauge recovery.}
Capacity gauge: $\log\log\snr$ (exact, recovers
Lapidoth--Moser~\cite{LapidothMoser2003}).
Zero-rate diversity gauge: $\Delta_B^*(2;\snr) \sim (N/2)\log\snr$,
so the zero-rate diversity gauge is $\log\snr$---a gauge identified
here by the binary endpoint.
Raw capacity coefficient on the $\log\log\snr$ gauge: $1$. Raw
diversity coefficient: $N/2$. Dividing by the reliability-atom
coefficient $1/2$ gives endpoint DIV~$N$.

\textbf{Status:} exact load-$r$ frontier for $0 < r < 1$ and exact
binary endpoint. Zero-rate diversity gauge $\log\snr$; endpoint
DIV~$N$. Classification: \textbf{cross-gauge}.

\textbf{Interpretation.}
The cross-gauge phenomenon arises because the packing space (a
log-variance interval of length $\sim\ln\snr$) is vastly larger than
the rate budget ($\sim\log\log\snr$). The classical prelog
$C(\snr)/\log\snr \to 0$ because capacity grows as
$\log\log\snr$~\cite{LapidothMoser2003}, confirming the $\log\snr$
gauge is mismatched on the capacity side. On the correct diversity
gauge $\log\snr$, the raw diversity coefficient is $N/2$. The factor $N$ arises from the $N$~independent output-law factors,
i.e.\ $N$ unit-normalized output reliability looks after scale-family
normalization. The factor $1/2$ is the fast-fading one-look
reliability-atom coefficient removed by atomic normalization.

\subsection{Multipath Fading}\label{subsec:ch_multipath}

\textbf{Model.}
An $L_p$-path extension of fast fading with $N$~receive antennas:
$Y_{j,t} = \sqrt{\snr}\sum_{\ell=0}^{L_p-1}
H_{j,\ell,t}\,X_{t-\ell} + Z_{j,t}$, where
$H_{j,\ell,t}\sim\mathcal{CN}(0,\sigma_\ell^2)$ independently across
$(j,\ell,t)$, and $Z_{j,t}\sim\mathcal{CN}(0,1)$. Let
$L_p^+ = |\{\ell : \sigma_\ell^2 > 0\}|$.
The calculation below uses the guarded subcode consisting of one data
symbol~$x$, $|x| \le 1$, followed by $L_p-1$ zero guard symbols. One
guarded block occupies $L_p$~physical symbol slots. Coefficients are
reported per effective guarded data symbol, with the per-physical-symbol
version obtained by dividing by~$L_p$.

\textbf{Intuition and operational benchmark.}
Koch--Lapidoth capacity~\cite{KochLapidoth2009} gives the capacity
gauge $\log\log\snr$. The guarded construction below separates data
symbols by $L_p-1$ guard symbols, reducing each active data symbol to
independent fast-fading-like scale-family observations along the
nonzero paths. The result is a lower bound per effective guarded data
symbol.

Let $\Delta_{B,\mathrm{guard}}^*(K;\snr)$ denote the Bhattacharyya
frontier restricted to this guarded subcode.

\textbf{Frontier result.}
With guard intervals of length $L_p-1$ between data symbols,
each data symbol~$X_t$ produces $L_p$ delayed outputs per receive
antenna:
\[
Y_{j,t+\ell} = \sqrt{\snr}\,H_{j,\ell,t+\ell}\,X_t + Z_{j,t+\ell},
\qquad \ell = 0,\dots,L_p-1.
\]
Of these $L_p$ physical path positions, only the
\[
L_p^+ = |\{\ell : \sigma_\ell^2 > 0\}|
\]
nonzero-variance paths contribute a high-SNR scale-family term. If
$\sigma_\ell^2 = 0$, then the corresponding output distribution is
independent of~$x$ and contributes zero Bhattacharyya distance. For
$\sigma_\ell^2 > 0$,
\[
Y_{j,t+\ell} \mid X_t = x
\sim \mathcal{CN}(0,\,1+\snr\sigma_\ell^2|x|^2).
\]
Therefore, by multiplicativity, the total guarded Bhattacharyya
distance is
\[
d_B = \sum_{j=1}^{N}\sum_{\ell:\,\sigma_\ell^2>0}
d_B^{\mathrm{single}}(1+\snr\sigma_\ell^2|x_1|^2,\,
1+\snr\sigma_\ell^2|x_2|^2).
\]
Thus the physical delayed-output count is $NL_p$, but the number of
$\log\snr$-scale unit-normalized output reliability contributions is
$NL_p^+$. Zero-variance paths produce physical delayed outputs but no
high-SNR output reliability looks.
Since each path is a scale family with log-variance range
$\ln(1 + \snr\sigma_\ell^2) \sim \ln\snr$ (for
$\sigma_\ell^2 > 0$), the packing analysis from
Theorem~\ref{thm:cross} applies: the input $x$ maps to
log-variance coordinates that are all monotone functions of
$|x|^2$, so the packing space is one-dimensional and
$K_{\mathrm{pack}} = \log\log\snr + O(1)$ (same gauge as
fast fading).

This is a guarded-construction lower bound measured per effective
guarded data symbol. One data symbol is separated from the next by
$L_p-1$ guard symbols, so the coefficient displayed below is not a
per-physical-symbol reliability coefficient. The result proves an available scale for this construction only; it
does not identify the exact unguarded frontier or the exact
per-physical-symbol diversity coefficient.

For the restricted guarded binary endpoint,
\[
  \Delta_{B,\mathrm{guard}}^*(2;\snr)
  \sim \frac{NL_p^+}{2}\,\log\snr.
\]
Since the guarded subcode is admissible for the full channel,
\[
  \Delta_B^*(2;\snr) \ge \Delta_{B,\mathrm{guard}}^*(2;\snr).
\]
This is a lower bound for the guarded subcode; no matching converse
for the full unguarded frontier is proved. Per physical
symbol slot under the guarded normalization, the displayed coefficient
is divided by~$L_p$.

\textbf{Gauge recovery.}
Capacity gauge: $\log\log\snr$, matching
Koch--Lapidoth~\cite{KochLapidoth2009}. The guard-interval
construction gives the zero-rate lower bound
$\Delta_B^*(2;\snr) \ge \frac{NL_p^+}{2}\log\snr$,
where $L_p^+$ is the number of nonzero-variance paths. The lower bound proves cross-gauge behavior, since the diversity
side is at least on the $\log\snr$ scale while capacity grows on
the $\log\log\snr$ scale. After reliability-atom normalization, the
guarded endpoint DIV lower bound is $NL_p^+$ per
effective guarded data symbol. A matching upper bound, and hence the
exact zero-rate diversity gauge and coefficient, are not determined
here.

\textbf{Status:} capacity gauge exact; guarded zero-rate lower bound;
guarded endpoint DIV lower bound $NL_p^+$ per effective
guarded data symbol, or $NL_p^+/L_p$ per physical symbol under the
guarded normalization. Exact unguarded frontier open. Classification:
\textbf{cross-gauge} lower bound.

\subsection{Parallel Fractional-Log Fading: Conditional Gallager
Benchmark}\label{subsec:ch_parallel}

This section is included only as a boundary-case benchmark. The
stationary Toeplitz model has the fractional-log capacity gauge but
does not diagonalize into independent scalar subchannels for pairwise
covariance comparisons. The parallel model removes that obstruction
by assumption, so it shows what a factorized Gallager lower bound
looks like on the fractional-log total gauge. It is not used as a
proof of the stationary Toeplitz diversity gauge.

\textbf{Model.}
A parallel channel with $J$~subchannels and $N$~receive antennas
per subchannel:
\begin{equation}\label{eq:model_parallel}
Y_k^{(j)} = \sqrt{\snr}\,H_k^{(j)}\,X_k + Z_k^{(j)},
\quad k=1,\dots,J,\;\; j=1,\dots,N,
\end{equation}
where $H_k^{(j)} \sim \mathcal{CN}(0,\lambda_k)$ are independent
across both $k$ and~$j$, known to the receiver, and
$Z_k^{(j)} \sim \mathcal{CN}(0,1)$. The spectral profile
$\{\lambda_k\}_{k=1}^{J}$ is a \emph{model-defining assumption}:
we require that $J = J(\snr)$ and the eigenvalue profile
$\{\lambda_k\}$ are jointly chosen so that
\begin{equation}\label{eq:spectral_condition}
\sum_{k=1}^{J}\log(1+\snr\lambda_k) \;\asymp\;
J\,(\log\snr)^\beta, \qquad \beta \in (0,1).
\end{equation}
(For fixed finite~$J$ with fixed $\lambda_k > 0$, the sum scales
as $J\log\snr$, not $J(\log\snr)^\beta$; the fractional-log scaling
requires vanishing eigenvalues, as in the Toeplitz model of
Section~\ref{subsec:ch_toeplitz}, where the spectral cusp produces
$\beta \in (0,1)$.)

Fix a product input ensemble
$P_X^{(\snr)} = \prod_{k=1}^{J} P_{X_k}^{(\snr)}$ satisfying
the stated power constraint. The Gallager function below is
computed for this chosen ensemble. No optimization over all input
distributions is claimed unless explicitly stated.

\textbf{Intuition.}
The subchannels are independent, so the Gallager $E_0$ function
factors coordinatewise. This makes the conditional lower-bound
exponent tractable on the total gauge
$G_J(\snr) = J(\snr)(\log\snr)^\beta$, the same total gauge used
for capacity in this benchmark.

\textbf{Operational benchmark.}
Capacity:
$C(\snr) = \sum_k \E\!\big[\log\!\big(1+\snr\sum_{j=1}^{N}
|H_k^{(j)}|^2\big)\big] \asymp J(\log\snr)^\beta$.
The quantity $G_J(\snr):=J(\snr)(\log\snr)^\beta$ is the total block
gauge for this parallel benchmark; the normalized per-subchannel gauge
is $(\log\snr)^\beta$.
Proposition~\ref{thm:gallager_parallel} is stated on the total
gauge $G_J(\snr)$. Table~\ref{tab:audit} reports the total gauge
unless explicitly marked ``normalized.''

\textbf{Conditional Gallager exponent result.}

\begin{proposition}[Conditional Gallager lower-bound scaling for the
parallel fractional-log model]\label{thm:gallager_parallel}
For a spectral profile satisfying~\eqref{eq:spectral_condition} and
a product input ensemble
$P_X^{(\snr)}=\prod_{k=1}^J P_{X_k}^{(\snr)}$, suppose that for
each $s$ in a nonempty set $S\subset(0,1]$ the normalized liminf
\[
\underline{c}_{P_X}(s)
    :=
    \liminf_{\snr\to\infty}
    \frac{E_0(s;\,P_X^{(\snr)})}{G_J(\snr)}
\]
is positive. Then at total rate $R = r\,c_1\,G_J(\snr)$ the
random-coding exponent for this ensemble satisfies
\begin{equation}\label{eq:gallager_parallel}
\liminf_{\snr\to\infty}
    \frac{E_r(R;\snr)}{G_J(\snr)}
    \;\ge\;
    \sup_{s\in S}
    \{\underline{c}_{P_X}(s) - s\,r\,c_1\}.
\end{equation}
When the right-hand side is positive, this gives an achievable
random-coding error exponent on the $G_J(\snr)$ scale.

Here $c_1$ is used as an exact capacity coefficient when an exact
statement $C(\snr) \sim c_1 G_J(\snr)$ is available. If the available
statement is only $C(\snr) \asymp G_J(\snr)$, then $c_1$ is a fixed
benchmark normalization constant, not a canonical capacity
coefficient.

This is conditional and does not determine the optimal zero-rate
diversity gauge.
\end{proposition}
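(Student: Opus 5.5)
The plan is to use Gallager's random-coding bound $E_r(R;\snr)=\max_{0\le s\le 1}\{E_0(s;P_X^{(\snr)})-sR\}$ for the fixed product ensemble. We lower-bound the maximum by restricting $s$ to the set $S$, then divide by $G_J(\snr)$ and take the liminf.

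\emph{Step 1 (pointwise exponent).} For every $\snr$ and every $s\in S\subset(0,1]$, the definition of $E_r$ as a supremum over $s\in[0,1]$ gives
\[
E_r(R;\snr)\;\ge\;E_0(s;P_X^{(\snr)})-s\,r\,c_1\,G_J(\snr),
\]
since $R=r\,c_1\,G_J(\snr)$. The product structure of $P_X^{(\snr)}$ and the independence of the subchannels (across $k$ and $j$, with the channel state known at the receiver and included as part of the output) make $E_0$ additive over $k$. This additivity is not needed for the inequality itself. It is what makes the hypothesis $\underline{c}_{P_X}(s)>0$ checkable coordinatewise, so we only remark on it.

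\emph{Step 2 (normalize and pass to the limit).} Divide by $G_J(\snr)\to\infty$, which holds since $(\log\snr)^\beta\to\infty$ and $J\ge 1$. We get
\[
\frac{E_r(R;\snr)}{G_J(\snr)}\ge \frac{E_0(s;P_X^{(\snr)})}{G_J(\snr)}-s\,r\,c_1 .
\]
Taking the liminf of both sides yields
\[
\liminf_{\snr\to\infty} E_r/G_J\ \ge\ \underline{c}_{P_X}(s)-s\,r\,c_1 .
\]
The left side does not depend on $s$, so taking the supremum over $s\in S$ gives~\eqref{eq:gallager_parallel}. If $\underline{c}_{P_X}(s)=+\infty$ for some $s$, the bound is trivially $+\infty$ and still consistent.

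\emph{Step 3 (interpretation).} When the right-hand side is positive, Gallager's theorem gives codes over $n$ blocks with $P_e\le 2^{-nE_r(R;\snr)}$. Their exponent is then at least a positive constant times $G_J(\snr)$ for large $\snr$, which is the claimed achievable scale. The role of $c_1$ is only a normalization of the rate: no step uses $C\sim c_1G_J$, so the inequality holds whether $c_1$ is the exact coefficient or a fixed benchmark constant.

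The only real subtlety is Step 1's bookkeeping: $E_0$ and $E_r$ must be measured in the same units (bits) and per the same model block as $R$ and $G_J$. With $J(\snr)$ varying, the ``block'' itself changes with $\snr$, so I would state explicitly that all quantities are per one $J(\snr)$-subchannel block. Beyond that, the argument is a one-line liminf exchange. It proves nothing about optimal zero-rate diversity, since both the ensemble and the positivity of $\underline{c}$ are assumed, which matches the conditional status stated in the proposition.
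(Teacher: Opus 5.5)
Your proposal is correct and takes the same route as the paper: lower-bound $E_r(R)=\max_{0\le s\le 1}[E_0(s)-sR]$ at each $s\in S$ with $R=r\,c_1\,G_J(\snr)$, divide by $G_J(\snr)$, take the liminf, and then take the supremum over $s\in S$. You also note correctly that the coordinatewise additivity $E_0(s)=\sum_k E_0^{(k)}(s)$ and the spectral condition, both cited in the paper's proof, are not needed for the inequality itself; they matter only for checking the positivity hypothesis.
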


\begin{proof}
The total Gallager function is $E_0(s) = \sum_{k=1}^{J} E_0^{(k)}(s)$,
where $E_0^{(k)}(s)$ is the per-subchannel Gallager function for
subchannel~$k$ with variance $\lambda_k$ and $N$~receive antennas.
The hypothesis $\underline{c}_{P_X}(s) > 0$ for $s\in S$
and the spectral condition~\eqref{eq:spectral_condition}
give $\liminf E_0(s)/G_J(\snr) \ge \underline{c}_{P_X}(s) > 0$.
The Gallager exponent $E_r(R) = \max_{0\le s\le 1}[E_0(s)-sR]$
with $R = r\,c_1\,G_J(\snr)$
gives~\eqref{eq:gallager_parallel}.
\end{proof}

\textbf{Status:} conditional lower-bound benchmark; endpoint DIV
undetermined. Classification: \textbf{conditional lower-bound
benchmark}.

\begin{remark}[Parallel benchmark versus Toeplitz]
\label{rem:parallel_toeplitz}
The parallel model is not an exact diagonalization of the Toeplitz
problem: in the Toeplitz model, the output covariances
$I+\snr D_X R_T D_X^\dagger$ do not generally commute across
codewords, obstructing coordinatewise factorization.
\end{remark}

\subsection{Stationary Toeplitz Fading}\label{subsec:ch_toeplitz}

\textbf{Model.}
Scalar channel with $N$~receive antennas:
$Y_t^{(j)} = \sqrt{\snr}\,H_t^{(j)}\,X_t + Z_t^{(j)}$, where
$\{H_t^{(j)}\}_{t \ge 1}$ is a stationary Gaussian process with
power spectral density (PSD) $f_H(\lambda)$ on $[-\pi,\pi]$, independent across the
$N$~receive antennas. The fading process is not revealed to the
receiver; the receiver knows its law. The PSD is non-regular: for example, Lapidoth's cusp family
$f_H(\lambda) \asymp \exp(-c/|\lambda|^\alpha)$ with $\alpha > 1$
yields a fractional-log exponent
$\beta = (\alpha-1)/\alpha \in (0,1)$.
Equivalently, writing $\alpha = \gamma+1$ gives
$\beta = \gamma/(\gamma+1)$.

\textbf{Intuition.}
Lapidoth~\cite{Lapidoth2005} showed that certain non-regular spectra
give $C(\snr) \asymp (\log\snr)^\beta$ with
$\beta \in (0,1)$. The capacity gauge is
$(\log\snr)^\beta$.

\textbf{Operational benchmark.}
Capacity: $C(\snr) \asymp (\log\snr)^\beta$~\cite{Lapidoth2005}.
Capacity gauge: $(\log\snr)^\beta$.
Zero-rate diversity gauge: not determined by the cited capacity
result and left open here.

\textbf{Frontier observation.}
Under a per-symbol peak-power constraint $|X_t|^2 \le P$,
consider the on--off pair $x = \sqrt{P}\,\mathbf{1}_T$ and
$x = \mathbf{0}$. Per receive antenna,
$Y_T \mid x=\mathbf{0} \sim \mathcal{CN}(0,I_T)$ and
$Y_T \mid x=\sqrt{P}\,\mathbf{1}_T \sim
\mathcal{CN}(0,\,I_T + \snr P\,R_T)$,
where $R_T$ is the $T\times T$ Toeplitz fading covariance matrix.
Per receive antenna, Lemma~\ref{lem:bhatt_same_mean} gives
\[
d_{B,1}^{(\text{on-off})}
= \log\det\!\Big(I_T + \tfrac{\snr P}{2}\,R_T\Big)
  - \tfrac{1}{2}\log\det\!\big(I_T + \snr P\,R_T\big).
\]
With $N$ independent receive antennas,
\[
d_{B,N}^{(\text{on-off})}
= N\!\left[\log\det\!\Big(I_T + \tfrac{\snr P}{2}\,R_T\Big)
  - \tfrac{1}{2}\log\det\!\big(I_T + \snr P\,R_T\big)\right].
\]
This identity is only a two-point candidate lower bound. It does
not identify the stationary Toeplitz diversity gauge. In particular,
one cannot infer a $(\log\snr)^\beta$ diversity lower bound by
applying a Toeplitz log-det limit with the scalar multiplier held
fixed and then sending that multiplier to infinity. The joint high-SNR/long-block nonregular regime
underlying Lapidoth's fractional-log capacity result is different.
A matching upper bound on $\Delta_B^*(2;\snr)$, or any exact lower
bound at the correct stationary diversity scale, remains open.

\textbf{Load-$r$ frontier gauge: open.}
The load-$r$ frontier gauge is open for the same reason: the output covariance has the
form
\[
    I+\snr D_X R_T D_X^\dagger,
\]
and these matrices do not generally commute across different
codewords. Consequently the Gallager $E_0$ does not factor
coordinatewise, unlike in the parallel benchmark, and the eigenvalue
structure depends on pairs of inputs jointly.

\textbf{Status: open.} The preceding on-off calculation does not
establish a $(\log\snr)^\beta$ diversity gauge. Classification:
\textbf{open}.

\section{Gauge Audit}\label{sec:audit}

Table~\ref{tab:audit} is the main gauge audit: a channel-by-channel
comparison of capacity gauges, zero-rate diversity gauges, endpoint
DOF/DIV, and channel classification. Table~\ref{tab:frontier_audit}
records frontier results, load-$r$ status, and proof status.

\begin{table*}[ht]
\centering
\caption{Gauge audit: capacity gauge, zero-rate diversity gauge,
endpoint DOF/DIV, and classification. Fixed-$H$ entries are
block-level unless divided by~$T$.
``Frac-log'' abbreviates ``fractional-log.''}
\label{tab:audit}
\footnotesize
\begin{tabular}{@{}p{2.8cm}lp{2.8cm}p{4.8cm}p{2.0cm}@{}}
\toprule
Channel & Capacity gauge & Zero-rate diversity gauge
& Endpoint DOF/DIV & Classification \\
\midrule
Fixed-$H$
& $\log\snr$ & $\snr$
& block endpoint DOF $Tr_H$; endpoint DIV
$T\sigma_1^2(H)$, gain-dependent for unnormalized $H$
& \textbf{cross-gauge} \\
Coherent Rayleigh MIMO
& $\log\snr$ & $\log\snr$
& per-symbol DOF $\min(M,N)$; block DOF $T\min(M,N)$; zero-rate
DIV $MN$
& same-gauge \\
Noncoherent block fading
& $\log\snr$ & $\log\snr$
& block DOF $M(T\!-\!M)$; per-symbol DOF $M(T\!-\!M)/T$; zero-rate
DIV $MN$
& same-gauge \\
Noncoherent fast fading
& $\log\log\snr$ & $\log\snr$
& DOF $1$; endpoint DIV $N$
& \textbf{cross-gauge} \\
Multipath fast fading
& $\log\log\snr$ & at least $\log\snr$ by guarded subcode; exact
unguarded gauge not identified
& guarded endpoint DIV lower bound $NL_p^+$ per effective
guarded data symbol; $NL_p^+/L_p$ per physical symbol under guarded
normalization
& \textbf{cross-gauge} lower bound \\
Parallel frac-log
& $G_J(\snr)\!=\!J(\snr)(\log\snr)^\beta$ total
& not identified; conditional Gallager lower-bound benchmark on
$G_J(\snr)$
& endpoint DIV undetermined
& cond.\ lower-bound benchmark \\
Stationary Toeplitz frac-log
& $(\log\snr)^\beta$ & open
& endpoint DOF/DIV open
& open \\
\bottomrule
\end{tabular}

\smallskip
For unnormalized fixed deterministic~$H$, endpoint DIV
$T\sigma_1^2(H)$ is gain-dependent.

The diversity-side look is normalized; it is not a bare physical
coordinate, receive antenna, or singular direction.
\end{table*}

\begin{table*}[ht]
\centering
\caption{Frontier status audit: frontier results, load-$r$ status,
and proof status.
``Frac-log'' abbreviates ``fractional-log.''}
\label{tab:frontier_audit}
\footnotesize
\begin{tabular}{@{}p{2.8cm}p{4.5cm}p{3.8cm}p{4.5cm}@{}}
\toprule
Channel & Frontier result
& Load-$r$ status & Proof status \\
\midrule
Fixed-$H$
& $C_{\mathrm{block}}\!\sim\! Tr_H\log\snr$;
$\Delta_B^*(2;\snr)\!=\!\frac{T\sigma_1^2(H)}{\ln 2}\snr$
& load-$r$ frontier predicted at gauge level
& zero-rate exact; load-$r$ frontier left for future work \\
Coherent Rayleigh MIMO
& per-symbol capacity coefficient $\min(M,N)$; binary endpoint
${\sim}\,MN\log\snr$ for $T\!\ge\! M$
& positive-rate DMT governed by outage geometry
& zero-rate coefficient exact; positive-rate DMT quoted as
operational benchmark \\
Noncoherent block fading
& $C_{\mathrm{block}}\!\sim\! M(T\!-\!M)\log\snr$;
$\Delta_B^*(2;\snr)\!\sim\! MN\log\snr$
& load-$r$ frontier gauge left for future work
& zero-rate exact under stated assumptions \\
Noncoherent fast fading
& $\Delta_B^*(2;\snr)\!\sim\!(N/2)\log\snr$;
$\Delta_B^*(\lceil(\log\snr)^r\rceil;\snr)\!\sim\!(N/2)(\log\snr)^{1-r}$
& exact load-$r$ frontier for
$0\!<\!r\!<\!1$
& exact load-$r$ frontier for $0<r<1$; zero-rate exact;
repetition-code gauge identified \\
Multipath fast fading
& guarded lower bound
$\Delta_B^*(2;\snr)\!\ge\!(NL_p^+/2)\log\snr$ per effective
guarded data symbol
& guarded construction suggests load-$r$ lower bounds
& lower-bound only \\
Parallel frac-log
& conditional Gallager benchmark on $G_J(\snr)$
& no exact frontier theorem
& conditional; specified product ensemble and $E_0$-scaling
hypothesis \\
Stationary Toeplitz frac-log
& on-off log-det calculation is a two-point candidate
& zero-rate diversity gauge and load-$r$ frontier gauge open
& open \\
\bottomrule
\end{tabular}

\smallskip
At the fast-fading binary endpoint, $N/2$ is the raw diversity
coefficient; the corresponding endpoint DIV is~$N$ after division by
the fast-fading reliability atom~$1/2$. For $0 < r < 1$, $N/2$ is
the load-$r$ coefficient on the gauge $(\log\snr)^{1-r}$.

The fixed-$H$ binary endpoint uses the total-power output-look
convention from Table~\ref{tab:audit}; a per-mode power convention
would define a different all-mode coefficient, but that is not the
frontier convention used here.
\end{table*}

Tables~\ref{tab:audit} and~\ref{tab:frontier_audit} separate endpoint
interpretation from proof status. Fixed-$H$ and fast fading are the
main cross-gauge examples. Coherent Rayleigh MIMO and noncoherent
block fading are same-gauge calibration cases: both have $\log\snr$
capacity and zero-rate diversity gauges, with zero-rate DIV~$MN$
under the stated assumptions. Multipath, parallel fractional-log, and
stationary Toeplitz entries are deliberately classified as
lower-bound, conditional, or open.

\section{Conclusion}\label{sec:conclusion}

This paper separates two operations that are merged in the classical
coherent-MIMO convention: selecting the correct high-SNR gauge and
normalizing the resulting coefficient by the appropriate communication
or reliability atom. In coherent complex MIMO these steps are
invisible because both capacity and diversity live on the $\log\snr$
gauge and the relevant atom coefficients are one. Outside that
calibration case, the separation is essential.

The frontier provides the common geometric tool. Fixed-separation
packing gives the rate-side output-law entropy, and the
capacity--packing sandwich turns this entropy into the capacity gauge
when the fixed-resolution packing and covering gauges match. The
binary endpoint identifies the zero-rate diversity endpoint. This
reproduces the classical same-gauge calibration cases: coherent
Rayleigh MIMO and noncoherent block fading have $\log\snr$ capacity
and zero-rate diversity gauges, with zero-rate DIV~$MN$ under the
stated assumptions.

The main cross-gauge examples are fixed deterministic~$H$ and
noncoherent fast fading. For fixed~$H$, capacity is governed by
covering the image of $X \mapsto HX$, giving the $\log\snr$ capacity
gauge and block DOF $T\,\mathrm{rank}(H)$, whereas binary reliability is governed by pairwise output expansion,
giving the $\snr$ diversity gauge and endpoint DIV $T\sigma_1^2(H)$,
the optimized output-look count under the total block peak-power
constraint. For unnormalized~$H$, this endpoint DIV is
gain-dependent. For noncoherent fast fading,
radialization reduces the channel to a scale family: capacity grows on
the $\log\log\snr$ gauge with DOF~$1$, while the binary endpoint
grows on the $\log\snr$ zero-rate diversity gauge and gives endpoint DIV~$N$. At
positive load, the fast-fading load-$r$ frontier is exact and grows
on the gauge $(\log\snr)^{1-r}$.

The audit tables also mark the boundary of the present results.
Multipath fading is handled through a guarded construction, giving a
lower bound but not the exact unguarded frontier. The parallel
fractional-log model gives a conditional Gallager benchmark under an
explicit product-ensemble and $E_0$-scaling hypothesis, but it is not
an exact frontier theorem. The stationary Toeplitz model remains open
on the diversity side because the covariance matrices generated by
different codewords do not generally commute, preventing the
coordinatewise factorization available in the parallel benchmark.

\textbf{Open problems.}
Several open problems remain.

\emph{First}, the load-$r$ frontier should be determined beyond the
fast-fading model. The fixed-$H$ analysis gives exact endpoint gauges
and a gauge-level positive-load prediction, but matching upper and
lower bounds for the full load-$r$ frontier are not proved. Coherent
Rayleigh MIMO and noncoherent block fading also have exact zero-rate
endpoints, but their load-$r$ frontiers are not
identified in this paper. A complete theory would determine when the
frontier gauge interpolates cleanly between the capacity endpoint and
the binary endpoint, and when new geometry appears at positive load.

\emph{Second}, the relation between frontier gauges and fully
operational reliability functions should be sharpened. The
Bhattacharyya sandwich (Proposition~\ref{thm:bridge}) connects the
frontier to repetition-code reliability from both sides, but
the full Shannon reliability function allows arbitrary $n$-block
codes. Determining when the frontier gives the correct operational
gauge, and when it gives only a repetition-code gauge, is a separate
problem.

\emph{Third}, the exact unguarded multipath zero-rate diversity gauge remains
open. The guarded construction isolates independent delayed
scale-family observations and gives an endpoint DIV lower bound
proportional to the number of nonzero paths. What remains is to
decide whether unguarded coding can improve this gauge or coefficient,
or whether a matching converse forces the guarded lower bound to be
sharp after the appropriate block normalization.

\emph{Fourth}, the stationary Toeplitz zero-rate diversity gauge
remains open. The on--off log-det calculation gives only a two-point
candidate lower bound and does not by itself establish a
fractional-log zero-rate diversity gauge. The main obstruction is
noncommutativity: for different codewords, the covariance matrices
$I + \snr D_X R_T D_X^\dagger$ cannot generally be diagonalized in a
common basis. Closing this problem requires either a new comparison
principle for noncommuting Toeplitz covariance families or a matching
converse that identifies the correct binary endpoint scale.

\emph{Fifth}, the Toeplitz load-$r$ frontier is open. Even if the
zero-rate diversity gauge were identified, the positive-load problem
would still require packing many codewords under pairwise
noncommuting covariance constraints. This is the fractional-log
analogue of the fast-fading load-$r$ theorem, but without the
one-dimensional scale-family structure that made the fast-fading proof
exact.

\emph{Finally}, the framework should be extended beyond Gaussian
fading. The present paper relies heavily on Gaussian Bhattacharyya
formulas, product multiplicativity, covariance geometry, and
scale-family reductions. A broader theory would identify which parts
of gauge selection and atomic normalization survive for non-Gaussian
noise, non-Gaussian fading, peak-limited nonlinear channels, or
channels with memory beyond the Gaussian Toeplitz setting.
Appendix~\ref{app:non_gaussian} gives a simple non-Gaussian output-law example illustrating that the gauge-selection and endpoint-normalization steps are not intrinsically Gaussian, even though the main channel audit in this paper is restricted to Gaussian fading models.

\appendices
\numberwithin{lemma}{section}
\numberwithin{proposition}{section}

\section{Bhattacharyya Distance}\label{app:bhatt}

\subsection{Multiplicativity}

\begin{lemma}[Multiplicativity]\label{lem:mult}
If $P = P_1 \times P_2$ and $Q = Q_1 \times Q_2$, then
$B(P,Q) = B(P_1,Q_1)\,B(P_2,Q_2)$, i.e.,
$d_B(P,Q) = d_B(P_1,Q_1) + d_B(P_2,Q_2)$.
\end{lemma}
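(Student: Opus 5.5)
The plan is a direct computation from Definition~\ref{def:bhatt}, using Tonelli's theorem to factor the integral. First, fix a common dominating measure. Let $P_1,Q_1$ live on a space $\mathcal{Y}_1$ and $P_2,Q_2$ on $\mathcal{Y}_2$. Take $\mu_1$ on $\mathcal{Y}_1$ dominating $P_1$ and $Q_1$ (for instance $\mu_1=(P_1+Q_1)/2$), and define $\mu_2$ on $\mathcal{Y}_2$ the same way. Then $P$ and $Q$ have densities with respect to $\mu_1\times\mu_2$ of the form
\[
p(y_1,y_2)=p_1(y_1)p_2(y_2),\qquad q(y_1,y_2)=q_1(y_1)q_2(y_2).
\]
I will also note that $B(P,Q)$ does not depend on the dominating measure: if $\nu$ dominates $\mu$, the Radon--Nikodym factor $d\mu/d\nu$ enters $\sqrt{pq}$ linearly, so the integral is unchanged. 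This lets the densities in Definition~\ref{def:bhatt} be taken with respect to this product measure.

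Second, compute the coefficient. Pointwise we have
\[
\sqrt{p\,q}=\sqrt{p_1q_1}\,\sqrt{p_2q_2}.
\]
This integrand is nonnegative and measurable, so Tonelli's theorem applies without any integrability hypothesis and gives
\[
B(P,Q)=\int\!\!\int \sqrt{p_1q_1}\,\sqrt{p_2q_2}\,d\mu_1\,d\mu_2 = B(P_1,Q_1)\,B(P_2,Q_2).
\]

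Third, pass to distances by taking $-\log$. Each coefficient lies in $[0,1]$ by Cauchy--Schwarz, so each distance is well defined in $[0,\infty]$. Using the convention $-\log 0=+\infty$, the product identity turns into the additive identity $d_B(P,Q)=d_B(P_1,Q_1)+d_B(P_2,Q_2)$, and this includes the degenerate case where some factor is mutually singular.

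The only real obstacle is the measure-theoretic bookkeeping. One has to justify that the densities can be taken with respect to a product measure and handle the zero-coefficient edge case. Once the dominating-measure invariance is noted, both issues are routine. The general $n$-fold version, which is used for $N$ antennas and $n$ blocks, then follows by induction on the number of factors.
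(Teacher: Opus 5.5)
Your proposal is correct and matches the paper's proof, which is a one-line Fubini factorization of $\int\sqrt{p_1p_2}\sqrt{q_1q_2}$ into $\bigl(\int\sqrt{p_1q_1}\bigr)\bigl(\int\sqrt{p_2q_2}\bigr)$, exactly your Tonelli step. Your extra bookkeeping is sound and makes explicit what the paper leaves implicit: the product dominating measure, the invariance of $B$ under change of dominating measure, and the $-\log 0=+\infty$ convention for mutually singular factors.
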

\begin{proof}
By Fubini,
$\int\sqrt{p_1 p_2}\sqrt{q_1 q_2}
= (\int\sqrt{p_1 q_1})(\int\sqrt{p_2 q_2})$.
\end{proof}

\subsection{Closed Forms for Gaussians}\label{app:bhatt_gaussian}

\begin{lemma}[Same covariance]\label{lem:bhatt_same_cov}
For $P = \mathcal{CN}(\mu_1,\Sigma)$,
$Q = \mathcal{CN}(\mu_2,\Sigma)$:
\begin{equation}\label{eq:bhatt_same_cov}
d_B(P,Q) = \frac{1}{4\ln 2}\,
(\mu_1-\mu_2)^\dagger\Sigma^{-1}(\mu_1-\mu_2).
\end{equation}
\end{lemma}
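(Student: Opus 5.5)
The plan is a direct Gaussian integral. Write both densities explicitly: for $\mathcal{CN}(\mu,\Sigma)$ on $\C^d$,
\[
p_i(y)=\frac{1}{\pi^d\det\Sigma}\exp\!\big(-(y-\mu_i)^\dagger\Sigma^{-1}(y-\mu_i)\big).
\]
Then
\[
\sqrt{p_1p_2}=\frac{1}{\pi^d\det\Sigma}\exp\!\Big(-\tfrac12\big[(y-\mu_1)^\dagger\Sigma^{-1}(y-\mu_1)+(y-\mu_2)^\dagger\Sigma^{-1}(y-\mu_2)\big]\Big).
\]
The exponent is simplified by completing the square around the midpoint $\bar\mu=(\mu_1+\mu_2)/2$. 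Write $y-\mu_{1,2}=(y-\bar\mu)\mp\delta/2$ with $\delta=\mu_1-\mu_2$. The Hermitian cross terms cancel, because $\Sigma^{-1}$ is Hermitian and the real parts of $(y-\bar\mu)^\dagger\Sigma^{-1}\delta$ enter with opposite signs. This leaves
\[
\tfrac12[\cdots]=(y-\bar\mu)^\dagger\Sigma^{-1}(y-\bar\mu)+\tfrac14\,\delta^\dagger\Sigma^{-1}\delta.
\]

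Next, factor out $\exp(-\tfrac14\delta^\dagger\Sigma^{-1}\delta)$. The remaining integrand is exactly the $\mathcal{CN}(\bar\mu,\Sigma)$ density, which integrates to one. Hence
\[
B(P,Q)=\exp\!\big(-\tfrac14\delta^\dagger\Sigma^{-1}\delta\big).
\]
Taking $-\log_2$ and using $\log_2 e=1/\ln 2$ gives \eqref{eq:bhatt_same_cov}.

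The only step requiring care, rather than a real obstacle, is the complex-Gaussian normalization. One has to confirm that the circularly symmetric density carries no factor $\tfrac12$ in the exponent, unlike the real case. This is why the coefficient is $\tfrac14$ rather than $\tfrac18$, and it is consistent with the formula $B(P_z,P_{z'})=\exp(-\|z-z'\|^2/4)$ used for fixed-$H$. Positive definiteness of $\Sigma$ is assumed so that the densities exist. As a sanity check, the scalar case $\Sigma=1$ recovers $d_B=|\mu_1-\mu_2|^2/(4\ln2)$, matching \eqref{eq:det_div}.
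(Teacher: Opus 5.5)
Your proposal is correct and follows essentially the same route as the paper's proof. Both write the circularly symmetric densities with no factor $\tfrac12$ in the exponent, complete the square around the midpoint $\bar\mu=(\mu_1+\mu_2)/2$, integrate the resulting $\mathcal{CN}(\bar\mu,\Sigma)$ density to one to obtain $B=\exp(-\tfrac14(\mu_1-\mu_2)^\dagger\Sigma^{-1}(\mu_1-\mu_2))$, and then convert to base~$2$.
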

\begin{proof}
Let $\Delta\mu = \mu_1 - \mu_2$. For circular complex Gaussians,
$p(y) = (\pi^n\det\Sigma)^{-1}
\exp(-(y-\mu_1)^\dagger\Sigma^{-1}(y-\mu_1))$ (no factor of
$1/2$ in the exponent). Completing the square:
$\sqrt{p(y)\,q(y)} = c\,\exp\!\big(-(y-\bar\mu)^\dagger
\Sigma^{-1}(y-\bar\mu)
- \frac{1}{4}\Delta\mu^\dagger\Sigma^{-1}\Delta\mu\big)$
where $\bar\mu = (\mu_1+\mu_2)/2$.
Integrating over~$y$ yields
$B = \exp(-\frac{1}{4}\Delta\mu^\dagger\Sigma^{-1}\Delta\mu)$,
hence $d_B = \frac{1}{4\ln 2}
\Delta\mu^\dagger\Sigma^{-1}\Delta\mu$.
\end{proof}

\begin{lemma}[Same mean]\label{lem:bhatt_same_mean}
For $P = \mathcal{CN}(0,\Sigma_1)$,
$Q = \mathcal{CN}(0,\Sigma_2)$:
\begin{equation}\label{eq:bhatt_same_mean}
d_B(P,Q) = \log
\frac{\det\!\big(\tfrac{\Sigma_1+\Sigma_2}{2}\big)}
{\sqrt{\det\Sigma_1\cdot\det\Sigma_2}}.
\end{equation}
\end{lemma}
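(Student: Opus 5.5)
Write $\Sigma=\tfrac12(\Sigma_1+\Sigma_2)$, which is positive definite whenever $\Sigma_1,\Sigma_2$ are. The proof is a direct Gaussian integral computation, parallel to the proof of Lemma~\ref{lem:bhatt_same_cov}. The circularly symmetric complex Gaussian density on $\C^n$ is
\[
p_i(y)=\frac{1}{\pi^n\det\Sigma_i}\exp\!\big(-y^\dagger\Sigma_i^{-1}y\big),
\]
again with no factor $1/2$ in the exponent. First I multiply the two densities and take the square root:
\[
\sqrt{p_1(y)\,p_2(y)}=\frac{1}{\pi^n\sqrt{\det\Sigma_1\det\Sigma_2}}\exp\!\Big(-y^\dagger\,\tfrac12\big(\Sigma_1^{-1}+\Sigma_2^{-1}\big)\,y\Big).
\]
The matrix $A:=\tfrac12(\Sigma_1^{-1}+\Sigma_2^{-1})$ is Hermitian positive definite. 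Hence the complex Gaussian normalization $\int \exp(-y^\dagger A y)\,dy=\pi^n/\det A$ gives
\[
B(P,Q)=\frac{1}{\det A\,\sqrt{\det\Sigma_1\det\Sigma_2}}.
\]

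The only algebraic step is to rewrite $\det A$ in terms of $\Sigma$. I use the factorization
\[
\Sigma_1^{-1}+\Sigma_2^{-1}=\Sigma_1^{-1}(\Sigma_1+\Sigma_2)\Sigma_2^{-1},
\]
which is checked by multiplying out: $\Sigma_1^{-1}\Sigma_1\Sigma_2^{-1}+\Sigma_1^{-1}\Sigma_2\Sigma_2^{-1}=\Sigma_2^{-1}+\Sigma_1^{-1}$. Taking determinants,
\[
\det A=\frac{\det\Sigma}{\det\Sigma_1\det\Sigma_2}.
\]
Substituting this into the expression for $B(P,Q)$ gives
\[
B(P,Q)=\frac{\sqrt{\det\Sigma_1\det\Sigma_2}}{\det\Sigma}.
\]
Applying $d_B=-\log B$ from Definition~\ref{def:bhatt} then yields~\eqref{eq:bhatt_same_mean}.

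There is no real obstacle; the points to watch are conventions. In the complex case the normalization is $\pi^n/\det A$ with no square root, unlike the real case. This is why the final formula has no overall factor $1/2$, and it is consistent with the scalar check $B=2\sqrt{v_1v_2}/(v_1+v_2)$ used in Lemma~\ref{lem:scale_bhatt}. If I want to avoid matrix inverses, I can instead reduce to the case $\Sigma_1=I$ by the change of variables $y\mapsto\Sigma_1^{1/2}y$, which preserves $B$. I can then diagonalize $\Sigma_1^{-1/2}\Sigma_2\Sigma_1^{-1/2}$ and apply the scalar identity eigenvalue by eigenvalue together with Lemma~\ref{lem:mult}. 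This gives the same determinant ratio, because the ratio is invariant under congruence.
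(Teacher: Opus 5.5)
Your proof is correct and follows the same route as the paper. You integrate $\sqrt{p_1p_2}$ as an unnormalized complex Gaussian with precision $\tfrac12(\Sigma_1^{-1}+\Sigma_2^{-1})$, and you also spell out the determinant identity via $\Sigma_1^{-1}+\Sigma_2^{-1}=\Sigma_1^{-1}(\Sigma_1+\Sigma_2)\Sigma_2^{-1}$, a step the paper's proof leaves implicit.
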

\begin{proof}
Write $p(y) = \pi^{-n}(\det\Sigma_1)^{-1}
\exp(-y^\dagger\Sigma_1^{-1}y)$ and similarly for~$q$.
Then $\sqrt{p\,q} \propto
\exp\!\big(-y^\dagger
\frac{\Sigma_1^{-1}+\Sigma_2^{-1}}{2}\,y\big)$.
This is an unnormalized Gaussian with covariance
$\bar\Sigma = ((\Sigma_1^{-1}+\Sigma_2^{-1})/2)^{-1}$.
The integral evaluates to
$B = \frac{\sqrt{\det\Sigma_1\cdot\det\Sigma_2}}
{\det((\Sigma_1+\Sigma_2)/2)}$.
Taking $d_B = -\log B$ gives the result.
\end{proof}

Lemma~\ref{lem:bhatt_same_cov} governs coherent channels
(known~$H$: location family);
Lemma~\ref{lem:bhatt_same_mean} governs noncoherent channels
(zero-mean outputs, different covariances). All $d_B$ expressions
use base-2 logarithms (consistent with
Section~\ref{sec:vocab}); the $1/(4\ln 2)$ factor in
Lemma~\ref{lem:bhatt_same_cov} arises from converting the
natural exponential to a base-2 log.

\subsection{Rank-Product Decomposition}\label{app:rank_product}

\begin{lemma}[Pairwise-difference decomposition and Rayleigh
averaging]\label{lem:rank_product}
Let $D = X_1 - X_2$ have nonzero singular values
$s_1,\dots,s_q$, where $q = \mathrm{rank}(D)$, and let
$u_1,\dots,u_q \in \C^M$ denote the corresponding left singular
vectors of~$D$.

\textbf{(a) Known channel (fixed~$H$): algebraic decomposition.}
By Lemma~\ref{lem:bhatt_same_cov}:
\begin{equation}\label{eq:rank_product_det}
d_B = \frac{\snr}{4\ln 2}\sum_{j=1}^{q}\sum_{\ell=1}^{N}
s_j^2\,|H_\ell^\dagger u_j|^2,
\end{equation}
a sum of $N\!\cdot\!\mathrm{rank}(D)$ terms, each proportional
to~$\snr$.

\textbf{(b) Rayleigh fading ($H_{ij}$ i.i.d.\
$\mathcal{CN}(0,1)$): rank-product diversity coefficient.}
By unitary invariance and averaging over~$H$:
\begin{equation}\label{eq:rank_product_fading}
\bar{d}_B = N\sum_{j=1}^{q}\log\!\Big(1+\frac{\snr\,s_j^2}{4}
\Big),
\end{equation}
a sum of $N\!\cdot\!\mathrm{rank}(D)$ terms, each $\sim\log\snr$.
\end{lemma}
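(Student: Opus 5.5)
For part~(a) the plan is to start from Lemma~\ref{lem:bhatt_same_cov} with $\Sigma=I$ and mean difference $\sqrt{\snr}\,HD$. This gives
\[
d_B=\frac{\snr}{4\ln 2}\|HD\|_F^2 .
\]
Next, write the compact singular value decomposition $D=\sum_{j=1}^q s_j u_j v_j^\dagger$, where $\{u_j\}\subset\C^M$ and $\{v_j\}\subset\C^T$ are orthonormal families. Because the $v_j$ are orthonormal, the cross terms drop out of the Frobenius norm:
\[
\|HD\|_F^2=\tr\Big(\sum_{j,k} s_j s_k\,Hu_j v_j^\dagger v_k u_k^\dagger H^\dagger\Big)=\sum_{j=1}^q s_j^2\|Hu_j\|^2 .
\]
Expanding $\|Hu_j\|^2=\sum_{\ell=1}^N |H_\ell^\dagger u_j|^2$ over the rows of $H$ then gives \eqref{eq:rank_product_det}. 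This is a sum of $Nq$ nonnegative terms, each of which is linear in $\snr$.

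For part~(b) I will follow the convention already used in the coherent Rayleigh subsection. Since $H$ is revealed to the receiver, the relevant Bhattacharyya coefficient is the average
\[
\bar B=\E_H\big[\exp(-\tfrac{\snr}{4}\|HD\|_F^2)\big],
\qquad \bar d_B=-\log\bar B .
\]
By part~(a), $\|HD\|_F^2=\sum_{\ell}h_\ell^\dagger(DD^\dagger)h_\ell$, where $h_\ell\sim\mathcal{CN}(0,I_M)$ are the i.i.d.\ conjugated rows of $H$. The matrix $DD^\dagger$ has eigenpairs $(s_j^2,u_j)$. I will complete $\{u_j\}$ to a unitary $U$ and set $g_\ell=U^\dagger h_\ell$. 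By unitary invariance $g_\ell\sim\mathcal{CN}(0,I_M)$, and the rotated vectors remain independent across $\ell$. It follows that
\[
\|HD\|_F^2=\sum_{\ell=1}^N\sum_{j=1}^q s_j^2|g_{\ell j}|^2,
\]
where the $|g_{\ell j}|^2$ are i.i.d.\ unit-mean exponentials.

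Independence lets the expectation factor into a product. Using the Laplace transform $\E[e^{-t|g|^2}]=1/(1+t)$, this gives
\[
\bar B=\prod_{j=1}^q\Big(1+\frac{\snr s_j^2}{4}\Big)^{-N}.
\]
Taking $-\log_2$ yields \eqref{eq:rank_product_fading}. Each term equals $\log\snr+\log(s_j^2/4)+o(1)\sim\log\snr$, because $s_j>0$ is fixed.

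None of the steps is genuinely hard. The main point requiring care is the rotation step, and I would handle it explicitly. First, the eigenvectors of $DD^\dagger$ must be completed to a full unitary. Second, the zero eigenvalues beyond $q$ must be seen to contribute nothing. Third, the rows $h_\ell$ must stay independent after the common rotation $U$. All three follow directly from the circular symmetry of $\mathcal{CN}(0,I_M)$.
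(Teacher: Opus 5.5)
Your proof is correct and follows the paper's argument. Part~(a) is the same SVD expansion of $\|HD\|_F^2$ substituted into Lemma~\ref{lem:bhatt_same_cov}, and part~(b) is the same unitary-invariance reduction followed by the Gaussian Laplace transform; the only cosmetic difference is that the paper groups by columns, writing $\|HD\|_F^2=\sum_j s_j^2\|g_j\|^2$ with $g_j=Hu_j\sim\mathcal{CN}(0,I_N)$ and $\E[e^{-t\|g_j\|^2}]=(1+t)^{-N}$, whereas you rotate each row and factor into $Nq$ scalar exponentials.
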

\begin{proof}
(a)~Write $\|HD\|_F^2 = \sum_{\ell=1}^{N}\sum_{j=1}^{q}
s_j^2\,|H_\ell^\dagger u_j|^2$ and substitute
into $d_B = \frac{\snr}{4\ln 2}\|HD\|_F^2$
(Lemma~\ref{lem:bhatt_same_cov}).
(b)~By unitary invariance,
$\|HD\|_F^2 = \sum_j s_j^2\|g_j\|^2$ with
$g_j \sim \mathcal{CN}(0,I_N)$ independent.
$\E_H[B] = \prod_j(1+\snr s_j^2/4)^{-N}$.
Taking $\bar{d}_B = -\log\E_H[B]$
gives~\eqref{eq:rank_product_fading}.
\end{proof}

The rank product $N\!\cdot\!\mathrm{rank}(D)$ controls the number
of additive terms. For known~$H$, the decomposition is algebraic:
each term is proportional to $\snr\,s_j^2\,|H_\ell^\dagger
u_j|^2$, so individual term magnitudes depend on the alignment
of~$D$ with the singular directions of~$H$ (terms can vanish if
$D$ points into the nullspace or weak directions of~$H$). For
Rayleigh fading, the rank-product statement is much safer:
by unitary invariance, each of the $N\!\cdot\!\mathrm{rank}(D)$
output-law factors contributes $\sim\log\snr$ after averaging,
because the
deep-fade event $|H|^2 \approx 0$ dominates the average and
pulls each contribution from~$\snr$ down to~$\log\snr$.

\subsection{Determinant Submultiplicativity}

\begin{lemma}[Determinant submultiplicativity for positive semidefinite sums]
\label{lem:det_submult}
If $X,Y \succeq 0$, then
\begin{equation}\label{eq:det_submult}
\det(I+X+Y) \le \det(I+X)\det(I+Y).
\end{equation}
\end{lemma}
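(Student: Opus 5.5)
The plan is to diagonalize simultaneously. Since $X,Y\succeq 0$, we have $I+X\succ 0$, so we may write
\[
\det(I+X+Y)=\det(I+X)\,\det\!\big(I+(I+X)^{-1/2}Y(I+X)^{-1/2}\big).
\]
This factorization uses only multiplicativity of the determinant and the congruence $I+X+Y=(I+X)^{1/2}\big(I+(I+X)^{-1/2}Y(I+X)^{-1/2}\big)(I+X)^{1/2}$. Setting $W:=(I+X)^{-1/2}Y(I+X)^{-1/2}\succeq 0$, it then suffices to prove $\det(I+W)\le\det(I+Y)$.

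For this, we note that $(I+X)^{-1}\preceq I$ because $X\succeq 0$. Hence, with $Y^{1/2}$ the PSD square root,
\[
W \text{ has the same nonzero spectrum as } Y^{1/2}(I+X)^{-1}Y^{1/2}\preceq Y^{1/2}Y^{1/2}=Y.
\]
The spectrum claim holds because $AB$ and $BA$ share nonzero eigenvalues with $A=(I+X)^{-1/2}Y^{1/2}$. So $\det(I+W)=\det(I+Y^{1/2}(I+X)^{-1}Y^{1/2})$. The Loewner inequality $Y^{1/2}(I+X)^{-1}Y^{1/2}\preceq Y$ then gives eigenvalue-wise domination via Weyl/Courant--Fischer monotonicity. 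Since $\det(I+\cdot)$ is the product of $(1+\lambda_i)$ over the eigenvalues, it is monotone under $\preceq$ on PSD matrices, which yields $\det(I+W)\le\det(I+Y)$.

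The only step needing care is this monotonicity of $\det(I+A)$ in the Loewner order, together with the $AB/BA$ spectrum transfer. Both are standard: if $0\preceq A\preceq B$, then $\lambda_i(A)\le\lambda_i(B)$ for each ordered $i$ by Courant--Fischer, and each factor $1+\lambda_i$ is positive. An alternative route I would keep in reserve is $\log\det(I+X+Y)-\log\det(I+X)=\int_0^1\tr\big((I+X+tY)^{-1}Y\big)\,dt$. Using $(I+X+tY)^{-1}\preceq(I+tY)^{-1}$ and $\tr(AY)\le\tr(BY)$ for $A\preceq B$, $Y\succeq 0$, this integral is at most $\int_0^1\tr\big((I+tY)^{-1}Y\big)\,dt=\log\det(I+Y)$. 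That route gives the same inequality (in natural log, which is equivalent). Equality holds, for instance, when $XY=0$; no tightness claim is needed.
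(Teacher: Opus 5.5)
Your proof is correct and follows the paper's argument: factor out $\det(I+X)$, rewrite the remaining factor as $\det(I+Y^{1/2}(I+X)^{-1}Y^{1/2})$, use $(I+X)^{-1}\preceq I$ to get $Y^{1/2}(I+X)^{-1}Y^{1/2}\preceq Y$, and conclude by Loewner monotonicity of $\det(I+\cdot)$. The differences are cosmetic. You pass through the symmetric congruence $(I+X)^{-1/2}Y(I+X)^{-1/2}$ before the $AB/BA$ spectrum swap, whereas the paper goes directly from $\det(I+(I+X)^{-1}Y)$. Your integral route is a valid but unneeded backup.
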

\begin{proof}
We have
$\det(I+X+Y) = \det(I+X)\det(I+(I+X)^{-1}Y)$.
The second determinant equals
$\det(I+Y^{1/2}(I+X)^{-1}Y^{1/2})$.
Since $0 \preceq (I+X)^{-1} \preceq I$, we have
$0 \preceq Y^{1/2}(I+X)^{-1}Y^{1/2} \preceq Y$.
Monotonicity of $\det(I+\cdot)$ on the positive semidefinite cone
gives the result.
\end{proof}

\subsection{Inverse Relationship}

\begin{lemma}[Inverse relationship]\label{lem:inverse}
$N_{\mathrm{pack}}(\delta;\snr) \ge K
\;\Longleftrightarrow\;
\Delta_B^*(K;\snr) \ge \delta$.
\end{lemma}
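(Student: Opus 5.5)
Both implications follow directly from Definitions~\ref{def:pack} and~\ref{def:div_frontier} by monotonicity under taking subsets.

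\emph{($\Rightarrow$)} Suppose $N_{\mathrm{pack}}(\delta;\snr)\ge K$. Then some codebook $\setC\subseteq\setX_{\mathrm{one}}$ with $|\setC|\ge K$ has all pairwise Bhattacharyya distances at least~$\delta$. Take any subset $\setC'\subseteq\setC$ with $|\setC'|=K$; this is possible since $K\ge 2$ and $|\setC|\ge K$. The minimum over pairs in $\setC'$ is a minimum over a subset of the pairs in $\setC$, so it is still at least~$\delta$. The set $\setC'$ is feasible in the max defining $\Delta_B^*(K;\snr)$, which gives $\Delta_B^*(K;\snr)\ge\delta$.

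\emph{($\Leftarrow$)} Suppose $\Delta_B^*(K;\snr)\ge\delta$. I want a $K$-point codebook whose minimum pairwise distance is at least~$\delta$. Such a codebook is then feasible for the max defining $N_{\mathrm{pack}}$, so $N_{\mathrm{pack}}(\delta;\snr)\ge K$.

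\textbf{Main obstacle: attainment.} Definition~\ref{def:div_frontier} writes ``max,'' but attainment is not obvious in general. If the frontier is really a supremum, it could equal~$\delta$ without any codebook reaching~$\delta$. My plan is as follows.

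\begin{itemize}
\item First, read the definition literally. The max is attained, so a maximizing $\setC$ exists and its minimum distance is at least~$\delta$.
\item To justify attainment in the models studied, I would note that $\setX_{\mathrm{one}}$ is compact under peak-power constraints. I would also note that $x\mapsto P_{Y|X=x}^{(\snr)}$ is continuous in a sense that makes $d_B$ continuous; this holds for all the Gaussian closed forms in Lemmas~\ref{lem:bhatt_same_cov} and~\ref{lem:bhatt_same_mean}.
\item Under these two facts, the function $(x_1,\dots,x_K)\mapsto\min_{i\ne j} d_B$ is continuous on the compact set $\setX_{\mathrm{one}}^K$, so its maximum is attained. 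The constraint of distinct points costs nothing, because any tuple with a repeated point has minimum distance~$0$.
\item If $d_B$ can be $+\infty$, for example when the output laws are mutually singular, continuity should be taken in the extended reals. The argument still goes through because the minimum of finitely many lower semicontinuous functions is lower semicontinuous.
\end{itemize}

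\textbf{Fallback if attainment fails.} If attainment is not assumed, I would state the equivalence with a supremum together with strict inequality, or prove it with $\delta$ replaced by $\delta-\epsilon$ for every $\epsilon>0$. I would then point out that this weakening does not affect any of the gauge statements.
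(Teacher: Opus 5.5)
Your proof is correct and is essentially the paper's: $(\Rightarrow)$ by passing to a size-$K$ subset, and $(\Leftarrow)$ by reading the max in Definition~\ref{def:div_frontier} as attained. The paper takes attainment for granted without comment, so your compactness/continuity justification is an addition rather than a departure, and it is valid for the Gaussian models here, where $d_B$ is finite and continuous in the inputs.

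One correction concerns your extended-reals bullet. Lower semicontinuity gives attainment of a minimum, not a maximum, so that bullet does not establish attainment. You would instead need $\min_{i\ne j} d_B$ to be upper semicontinuous. One sufficient condition is that $x\mapsto P_x$ is continuous in Hellinger distance: then $B$ is continuous, and $\max_{i\ne j}B$ attains its minimum on the compact set. Under mere weak continuity, $d_B$ is in general only lower semicontinuous.
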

\begin{proof}
$(\Rightarrow)$:
If $N_{\mathrm{pack}}(\delta;\snr) \ge K$, there exists
$\setC \subseteq \setX_{\mathrm{one}}$ with $|\setC| \ge K$ and minimum
pairwise $d_B \ge \delta$. Any size-$K$ subset witnesses
$\Delta_B^*(K;\snr) \ge \delta$.
$(\Leftarrow)$:
If $\Delta_B^*(K;\snr) \ge \delta$, there exists $\setC$ with
$|\setC| = K$ and minimum pairwise $d_B \ge \delta$, so
$N_{\mathrm{pack}}(\delta;\snr) \ge K$.
\end{proof}

\section{Gauge Uniqueness}\label{app:gauge_unique}

The gauge vocabulary uses divergent functions in Hardy's
logarithmico-exponential (LE) class~\cite{Hardy1910}. Hardy's
comparison theorem totally orders this class: for any two divergent
LE functions $g_1,g_2$, exactly one of
\[
    g_1/g_2 \to 0,\quad
    g_1/g_2 \to \infty,\quad
    g_1/g_2 \to c \in (0,\infty)
\]
holds. Thus the gauges used in this paper---$\log\snr$,
$\log\log\snr$, $(\log\snr)^\beta$, $\snr$, and $\log\snr$ on the
diversity side---have unambiguous asymptotic classes. This uniqueness
convention is the only role of the Hardy LE assumption.

\section{Coding Bounds}\label{app:coding_bounds}

\subsection{Capacity-Side Bridge Proof}\label{app:sandwich_proof}

\begin{proof}[Proof of Theorem~\ref{lem:sandwich}]
For the lower bound, use the packing $x_1,\dots,x_{M_\snr}$ and put
the uniform distribution on these inputs. Write
$P_i = P_{x_i}^{(\snr)}$,
$\bar{P} = \frac{1}{M_\snr}\sum_{j=1}^{M_\snr} P_j$.
Then
\[
I(X;Y) = \frac{1}{M_\snr}\sum_{i=1}^{M_\snr}
D(P_i \| \bar{P}).
\]
Since KL divergence dominates R\'enyi divergence of order~$1/2$,
\[
D(P_i \| \bar{P}) \ge -2\log B(P_i, \bar{P}).
\]
Also,
\[
B(P_i, \bar{P})
= \int\!\sqrt{p_i \cdot \tfrac{1}{M_\snr}\sum_j p_j}
\le \frac{1}{\sqrt{M_\snr}}\sum_{j=1}^{M_\snr}
B(P_i, P_j).
\]
Hence
\[
D(P_i \| \bar{P}) \ge \log M_\snr
- 2\log\!\bigg(\sum_{j=1}^{M_\snr} B(P_i, P_j)\bigg).
\]
By the aggregate-overlap hypothesis,
\[
I(X;Y) \ge \log M_\snr - o(g(\snr))
\ge c_-\,g(\snr) - o(g(\snr)).
\]
Therefore
\[
\liminf_{\snr\to\infty}
\frac{C_{\mathrm{block}}(\snr)}{g(\snr)} \ge c_-.
\]

For the upper bound, take an arbitrary admissible input
distribution~$P_X$. Let $J$ be the index of the covering cell
containing~$X$. Since $J$ is a function of~$X$,
\[
I(X;Y) = I(J,X;Y) = I(J;Y) + I(X;Y|J).
\]
The first term satisfies
\[
I(J;Y) \le H(J) \le \log L_\snr
\le c_+\,g(\snr) + o(g(\snr)).
\]
For the second term, condition on $J=\ell$. For any auxiliary output
law~$Q_\ell$,
\[
I(X;Y|J=\ell)
\le \E\!\big[D(P_X^{(\snr)} \| Q_\ell) \mid J=\ell\big],
\]
because
\[
\E\!\big[D(P_X^{(\snr)} \| Q_\ell) \mid J=\ell\big]
= I(X;Y|J=\ell) + D(P_{Y|J=\ell} \| Q_\ell).
\]
Choosing $Q_\ell$ from the covering hypothesis gives
\[
I(X;Y|J) = o(g(\snr)).
\]
Thus
\[
I(X;Y) \le c_+\,g(\snr) + o(g(\snr)).
\]
Taking the supremum over~$P_X$,
\[
\limsup_{\snr\to\infty}
\frac{C_{\mathrm{block}}(\snr)}{g(\snr)} \le c_+.
\]
Combining the lower and upper bounds gives
\[
c_- \;\le\;
\liminf_{\snr\to\infty} \frac{C_{\mathrm{block}}(\snr)}{g(\snr)}
\;\le\;
\limsup_{\snr\to\infty} \frac{C_{\mathrm{block}}(\snr)}{g(\snr)}
\;\le\; c_+.
\]
If $c_- = c_+ = c$, then
\[
C_{\mathrm{block}}(\snr) \sim c\,g(\snr).
\]
If the sharper $O(1)$ packing, covering, overlap, and KL-radius
bounds hold, the same argument gives
\[
C_{\mathrm{block}}(\snr) = c\,g(\snr) + O(1).
\]
\end{proof}

\subsection{Bhattacharyya Sandwich Proof}

\begin{proof}[Proof of Proposition~\ref{thm:bridge}]
\emph{Upper bound.}
For message~$i$, the union--Bhattacharyya bound gives
$P_{e|i} \le \sum_{j\ne i} B(P_i,P_j) \le
(K-1)\,2^{-\Delta_{\min}(\setC;\snr)}$.
Averaging over~$i$ gives the upper bound. With $n$~uses,
multiplicativity (Lemma~\ref{lem:mult}) gives
$B^n = 2^{-n\,d_B}$.

\emph{Lower bound.}
Choose a closest pair $i,j$. Any $K$-ary decoder induces a binary
test between $P_i$ and $P_j$. By Theorem~\ref{lem:binary_endpoint},
the optimal binary error satisfies
$P_{e,2}^* \ge \tfrac{1}{4}\,2^{-2\,d_B(P_i,P_j)}$.
The two conditional errors for messages~$i$ and~$j$ contribute to the
average $K$-message error, giving the factor~$1/K$:
$P_{e,\mathrm{opt}} \ge \tfrac{1}{2K}\,2^{-2\,\Delta_{\min}(\setC;\snr)}$.
Repetition multiplies $d_B$ by~$n$.
\end{proof}

\section{Structural Mechanisms}\label{app:mechanisms}

\begin{proposition}[Scale-family packing]\label{prop:scale_family}
Consider the scale family
$P_v = \mathcal{CN}(0,v)$, $v \in [1,1+\snr]$,
with $N$~independent receive observations. Let $u = \ln v$. Then
\begin{equation}\label{eq:scale_bhatt_app}
d_B(P_{v_1}^{\otimes N}, P_{v_2}^{\otimes N})
= N\log_2\cosh\!\Big(\frac{|u_1-u_2|}{2}\Big).
\end{equation}
Therefore a fixed Bhattacharyya separation~$\delta$ is equivalent
to spacing
$|u_1-u_2| \ge c_N(\delta)$, where
$c_N(\delta) = 2\operatorname{arcosh}(2^{\delta/N})$.
Since $u \in [0,\ln(1+\snr)]$, the packing number satisfies
$N_{\mathrm{pack}}(\delta;\snr) \asymp \log\snr$, and hence
\begin{equation}\label{eq:kpack_loglog}
K_{\mathrm{pack}}(\delta;\snr) = \log N_{\mathrm{pack}}(\delta;\snr)
= \log\log\snr + O(1).
\end{equation}
\end{proposition}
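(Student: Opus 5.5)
The plan has three steps: derive the closed form \eqref{eq:scale_bhatt_app}, invert it into a spacing condition, and then count well-separated points in an interval of length $\ln(1+\snr)$.

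\emph{Closed form.} For two scalar laws, Lemma~\ref{lem:bhatt_same_mean} with $\Sigma_i=v_i$ gives
\[
B(P_{v_1},P_{v_2})=\frac{2\sqrt{v_1v_2}}{v_1+v_2}.
\]
Writing $v_i=e^{u_i}$ and dividing numerator and denominator by $\sqrt{v_1v_2}$ turns this into $1/\cosh((u_1-u_2)/2)$, exactly as in Lemma~\ref{lem:scale_bhatt}. Lemma~\ref{lem:mult} then multiplies the coefficient over the $N$ independent observations, which gives \eqref{eq:scale_bhatt_app}.

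\emph{Inversion.} The map $t\mapsto N\log_2\cosh(t/2)$ is continuous and strictly increasing on $[0,\infty)$, starts at $0$, and tends to $\infty$. Hence $d_B\ge\delta$ holds if and only if
\[
|u_1-u_2|\ge c_N(\delta)=2\operatorname{arcosh}(2^{\delta/N}).
\]
This threshold is a fixed positive constant that does not depend on $\snr$. So a $\delta$-separated codebook is the same thing as a set of $c_N(\delta)$-separated points in the interval $[0,L_\snr]$, where $L_\snr=\ln(1+\snr)$.

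\emph{Counting.} Sort the points of any such set; consecutive gaps are at least $c_N(\delta)$, so
\[
N_{\mathrm{pack}}\le 1+\frac{L_\snr}{c_N(\delta)}.
\]
Conversely, the grid $u_i=i\,c_N(\delta)$, $i=0,\dots,\lfloor L_\snr/c_N(\delta)\rfloor$, stays inside the interval and is admissible. Each grid point corresponds to a variance $v=e^{u}\in[1,1+\snr]$, which is a legitimate member of the family; for the channel, it is realized by the input with $|x|^2=(v-1)/\snr\in[0,1]$. This gives the matching lower bound $N_{\mathrm{pack}}\ge 1+\lfloor L_\snr/c_N(\delta)\rfloor$. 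Both bounds are $L_\snr/c_N(\delta)+O(1)$, and $L_\snr=\ln\snr+o(1)$, so $N_{\mathrm{pack}}\asymp\log\snr$.

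Taking base-2 logarithms, $\log N_{\mathrm{pack}}=\log L_\snr-\log c_N(\delta)+o(1)$. Since $\log L_\snr=\log\log\snr+O(1)$ (the change of base from $\ln$ to $\log_2$ only adds a constant inside), this yields \eqref{eq:kpack_loglog}.

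The only point that needs some care is the inversion: the spacing threshold must be a genuine constant, independent of $\snr$ and of where the points sit in the interval. This holds because the distance depends only on $|u_1-u_2|$, which is what makes the log-variance interval an isometric picture of the scale family. The remaining steps are routine bookkeeping of additive constants.
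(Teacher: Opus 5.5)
Your proposal is correct and follows essentially the same route as the paper's proof. Both get the closed form from the same-mean Gaussian Bhattacharyya formula plus multiplicativity, convert $\delta$-separation into a fixed log-variance spacing $c_N(\delta)$, and bound $N_{\mathrm{pack}}$ between an equally spaced grid and a gap count on $[0,\ln(1+\snr)]$; your lower bound $1+\lfloor L_\snr/c_N(\delta)\rfloor$ handles integer rounding slightly more carefully than the paper's $1+L/c_N(\delta)$.
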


\begin{proof}
The output family is $\{P_v = \mathcal{CN}(0,v) :
v \in [1,1+\snr]\}$, parameterized by $v = \snr|x|^2+1$. By
Lemma~\ref{lem:bhatt_same_mean}:
$d_B(P_{v_1},P_{v_2}) = \log\cosh((\ln v_1-\ln v_2)/2)$.
With $N$~independent observations, multiplicativity gives
$d_B = N\log\cosh(|u_1-u_2|/2)$.
Change coordinates to $u = \ln v \in [0,L]$ with
$L = \ln(1+\snr) \sim \ln\snr$.

\emph{Achievability.}
Place $K$ equally spaced points in $[0,L]$. Minimum $u$-separation
$\Delta_u = L/(K-1) \ge c_N(\delta)$ gives
$N_{\mathrm{pack}} \ge 1+L/c_N(\delta) \asymp \log\snr$, hence
$K_{\mathrm{pack}} \ge \log\log\snr + O(1)$.

\emph{Converse.}
Any codebook with $d_B \ge \delta$ has $u$-coordinates that are
$c_N(\delta)$-separated in $[0,L]$, giving at most
$1+L/c_N(\delta) \asymp \log\snr$ points, hence
$K_{\mathrm{pack}} \le \log\log\snr + O(1)$.
\end{proof}

\begin{remark}[Toeplitz on-off endpoint]\label{rem:toeplitz_onoff}
The on-off pair $x = \sqrt{P}\,\mathbf{1}_T$ and $x = \mathbf{0}$
gives a lower bound on $\Delta_B^*(2;\snr)$, not on
$K_{\mathrm{pack}}(\delta;\snr)$. A two-point construction can only
prove $N_{\mathrm{pack}} \ge 2$.
Per receive antenna, for the Toeplitz covariance~$R_T$,
\[
d_{B,1}^{(\text{on-off})}
= \log\det\!\Big(I_T + \tfrac{\snr P}{2}\,R_T\Big)
  - \tfrac{1}{2}\log\det\!\big(I_T + \snr P\,R_T\big).
\]
With $N$ independent receive antennas,
$d_{B,N}^{(\text{on-off})} = N\,d_{B,1}^{(\text{on-off})}$.
The associated Szeg\H{o} integral is
$I(a) = \frac{1}{2\pi}\int_{-\pi}^{\pi}\log(1+a\,f_H(\lambda))
\,d\lambda$.
For fixed nonzero $f_H \in L^1$, $I(a) = \Theta(\log a)$.
Thus this on-off calculation does not establish a fractional-log
diversity gauge. The stationary Toeplitz diversity gauge remains
open.
\end{remark}

\section{A Non-Gaussian Output-Law Example}\label{app:non_gaussian}

This appendix gives a simple non-Gaussian example showing that the frontier construction is not tied to Gaussian formulas. The example is not part of the Gaussian-fading channel audit in Tables III--IV; it is included only to illustrate that the same output-law method applies beyond Gaussian output laws.

Consider the real scalar additive channel
\[
Y=\sqrt{\snr}\,X+Z,\qquad |X|\le 1,
\]
where $Z$ has the unit-scale Laplace density
\[
f_Z(z)=\frac12 e^{-|z|}.
\]
Let $P_\mu$ denote the output law of $\mu+Z$, where $\mu=\sqrt{\snr}x$. Thus the admissible output-law family is the translation family
\[
\{P_\mu:\mu\in[-\sqrt{\snr},\sqrt{\snr}]\}.
\]

For two shifts $\mu,\nu$, write $a=|\mu-\nu|$. A direct calculation gives the Bhattacharyya coefficient
\[
B(P_\mu,P_\nu)
=
\int \sqrt{f_Z(y-\mu)f_Z(y-\nu)}\,dy
=
e^{-a/2}\left(1+\frac{a}{2}\right).
\]
Therefore
\[
d_B(P_\mu,P_\nu)
=
\frac{a}{2\ln 2}
-
\log\left(1+\frac{a}{2}\right),
\]
where $\log$ is base~$2$, as in the rest of the paper. Hence the Bhattacharyya distance is a monotone function of the shift separation~$a$.

For any fixed separation threshold $\delta>0$, let $a_\delta$ be the corresponding shift separation satisfying $d_B(a_\delta)=\delta$. Packing the interval $[-\sqrt{\snr},\sqrt{\snr}]$ at spacing $a_\delta$ gives
\[
N_{\mathrm{pack}}(\delta;\snr)\asymp \sqrt{\snr},
\]
and hence
\[
K_{\mathrm{pack}}(\delta;\snr)
=
\log N_{\mathrm{pack}}(\delta;\snr)
=
\frac12\log\snr+O(1).
\]

The covering side has the same coefficient. Cover the shift interval by intervals of fixed length. The local KL radius is bounded because, for $a\ge 0$,
\[
D(P_0\|P_a)
=
\frac{a+e^{-a}-1}{\ln 2}.
\]
Thus a fixed-length shift cell has bounded local KL radius. Also, for a fixed-spacing grid, the aggregate Bhattacharyya overlap is uniformly bounded since
\[
\sum_{k\ge 1}
B(P_0,P_{ka})
=
\sum_{k\ge 1}
e^{-ka/2}\left(1+\frac{ka}{2}\right)
<\infty.
\]
The capacity-side bridge, Theorem~1, therefore gives
\[
C(\snr)=\frac12\log\snr+O(1).
\]
Thus the capacity gauge is $\log\snr$, with raw coefficient $1/2$. Under the usual real one-dimensional communication-atom normalization, this corresponds to one real communication atom.

At the binary endpoint, the maximum shift separation is attained by the antipodal pair $x_1=-1$, $x_2=1$, giving $a=2\sqrt{\snr}$. Therefore
\[
\Delta_B^*(2;\snr)
=
d_B(2\sqrt{\snr})
=
\frac{\sqrt{\snr}}{\ln 2}
-
\log(1+\sqrt{\snr})
\sim
\frac{1}{\ln 2}\sqrt{\snr}.
\]
Hence the zero-rate diversity gauge is $\sqrt{\snr}$, not $\snr$.

For this antipodal pair, the optimal equal-prior binary test is threshold detection at zero. Its error probability is
\[
P_{e,2}^*
=
\frac12 e^{-\sqrt{\snr}},
\]
so
\[
-\log P_{e,2}^*
=
\frac{\sqrt{\snr}}{\ln 2}+1.
\]
Thus the binary Bhattacharyya endpoint identifies the actual two-message zero-rate reliability gauge, and in this example it also recovers the leading raw operational coefficient.

This example shows that the gauge-selection step is output-law dependent. The rate side is controlled by fixed-resolution packing of a one-dimensional translation interval and therefore lives on the $\log\snr$ gauge. The binary reliability side is controlled by the tail of the noise distribution; for Laplace noise it lives on the $\sqrt{\snr}$ gauge. The same frontier construction therefore extends beyond Gaussian formulas: Gaussianity is used in the main text to obtain explicit channel-by-channel evaluations, but the underlying operations are output-law packing, gauge selection, and endpoint normalization.

\bibliographystyle{IEEEtran}
\bibliography{refs}

\end{document}